\documentclass[12pt,letterpaper]{article}
\usepackage{amsfonts,amsmath,amsthm,amssymb}
\usepackage{geometry,xcolor,graphicx,booktabs,enumerate,appendix}
\graphicspath{{figure/}{./}}
\usepackage{setspace}
\usepackage[authoryear,round]{natbib}
\usepackage{etoolbox}
\usepackage{titlesec}
\usepackage[protrusion=true,expansion=true,final]{microtype}
\usepackage{hyperref}
\hypersetup{colorlinks=true,citecolor=black,linkcolor=black,urlcolor=black}

\titleformat{\section}[block]{\large\bfseries}{\thesection.}{0.6em}{}
\titleformat{\subsection}[block]{\normalsize\bfseries}{\thesubsection.}{0.6em}{}
\titleformat{\paragraph}[runin]{\normalsize\itshape}{}{0em}{}[.]
\titlespacing*{\paragraph}{0pt}{0.8\baselineskip}{0.5em}
\titlespacing*{\section}{0pt}{1.4\baselineskip}{0.6\baselineskip}
\titlespacing*{\subsection}{0pt}{1.0\baselineskip}{0.4\baselineskip}

\newtheoremstyle{jbesthm}{\topsep}{\topsep}{\itshape}{0pt}{\bfseries}{.}{0.5em}{}
\newtheoremstyle{jbesdef}{\topsep}{\topsep}{\normalfont}{0pt}{\bfseries}{.}{0.5em}{}
\theoremstyle{jbesthm}
\newtheorem{theorem}{Theorem}
\newtheorem{proposition}{Proposition}
\newtheorem{lemma}{Lemma}
\theoremstyle{jbesdef}
\newtheorem{assumption}{Assumption}
\newtheorem{remark}{Remark}

\newtheorem{example}{Example}
\renewenvironment{proof}{\par\noindent\textit{Proof.}\ \ignorespaces}{\hfill$\square$\par\medskip}
\newenvironment{proofof}[1]{\par\noindent\textit{Proof of #1.}\ \ignorespaces}{\hfill$\square$\par\medskip}
\newcommand{\ppart}[1]{\par\noindent\textit{Part (#1).}\ \ignorespaces}

\DeclareMathOperator{\var}{Var}
\DeclareMathOperator{\RV}{RV}

\newcommand{\dto}{\overset{d}{\to}}
\newcommand{\Prob}{\mathbb{P}}
\newcommand{\E}{\mathbb{E}}
\newcommand{\Real}{\mathbb{R}}

\begin{document}
\allowdisplaybreaks
\widowpenalty=1000
\clubpenalty=1000
\emergencystretch=1em
\setlength{\parfillskip}{0pt plus .7\textwidth}
\renewcommand{\topfraction}{0.9}
\renewcommand{\bottomfraction}{0.8}
\renewcommand{\textfraction}{0.07}
\renewcommand{\floatpagefraction}{0.75}
\doublespacing
\pagestyle{myheadings}
\markboth{\normalfont Ji, Liu, Wang, and Xing}{\normalfont A Diagnostic Test for Binary Choice Models}

\thispagestyle{empty}
\begin{center}
{\large\bfseries A Simple and Powerful Diagnostic Test for Binary Choice Models}\let\thefootnote\relax\footnote{Ting Ji: jiting@cufe.edu.cn. Laura Liu: laura.liu@pitt.edu. Yulong Wang: yuw925@lehigh.edu. Jiahe Xing: jxing04@syr.edu. We thank Xu Cheng, Wayne Gao, Bo Honor\'e, Elena Manresa, Guillaume Pouliot, and participants of the African Meeting of the Econometric Society for helpful discussions. All remaining errors are ours.}\setcounter{footnote}{0}

\medskip

{\setstretch{1.0}
Ting Ji\\
\textit{School of Economics, Central University of Finance and Economics}

\smallskip

Laura Liu\\
\textit{Department of Economics, University of Pittsburgh}

\smallskip

Yulong Wang\\
\textit{Department of Economics, Lehigh University}

\smallskip

Jiahe Xing\\
\textit{Department of Economics, Syracuse University}

\medskip

\today
}
\end{center}

\bigskip

{\setstretch{1.0}
\noindent\textbf{ABSTRACT}

\medskip

\noindent Conventional binary choice models, such as probit and logit, impose thin-tailed errors, and that tail determines whether the parameters of a binary choice model can be estimated at the regular rate.
We test the restriction on observables, asking whether the conditional choice probability decays at a polynomial rate in a covariate.
Identification of tail heaviness requires no independence, no linear index, and no homoskedasticity.
The test is simple to implement and attains nearly the point-optimal power envelope among invariant tests.
An application to firm innovation decisions rejects the thin tail.

\medskip

\noindent\textbf{KEYWORDS:} Testing; binary choice; extreme value theory; tail index.
}

\bigskip

\section{Introduction}\label{sec:intro}

Binary choice models are workhorses of empirical economics.
They are widely used to study decisions such as labor force participation, educational choice, export decision, and innovation behavior.
Since the pioneering work of \citet{McFadden1974} and \citet{Chamberlain1980}, probit and logit have become standard tools in empirical work, and their appeal comes from a simple index structure and the resulting tractability of estimation.
Yet their credibility depends not only on the specification of the observable index, but also on the assumed distribution of the latent error term, which probit and logit take to be thin-tailed.
In many economic settings rare but consequential shocks may be important for the decision itself, so whether the latent error is thin-tailed is a substantive restriction rather than a harmless assumption.

The restriction matters for model fit, estimation, and inference.
\citet{khan2010irregular} show that in binary choice and related limited dependent variable models the parameter is identified on a thin set, so the rate of convergence depends on the tail of the error relative to that of the regressor.
The special regressor estimator of \citet{Lewbel1997,Lewbel2000} is the leading case, and Proposition \ref{prop:consequences} locates the boundaries at which the estimator's RMSE becomes slower than root-$n$ and the estimand ceases to be well defined.
With two periods and unrestricted heterogeneity the information bound for the common parameter is zero unless the error is logistic \citep{Chamberlain2010}, so evidence against the tail class containing the logistic is evidence that no regular estimator exists.

To see the idea, consider the simple binary choice model
\begin{equation}\label{eq:baseline}
Y_i=I(X_i-\varepsilon_i\ge0),
\end{equation}
where $I(\cdot)$ denotes the indicator function, $X_i$ is an observed covariate, and $\varepsilon_i$ is an unobserved error.
The tail of the latent error is reflected in the tail of the observed covariate once we condition on the outcome, since among observations with $Y_i=0$ large values of $X_i$ must be matched by even larger realizations of $\varepsilon_i$.
The right tail of $X_i\mid Y_i=0$ therefore reflects the right tail of the latent error, and symmetrically the left tail of $X_i\mid Y_i=1$ reflects its left tail.

We state the hypothesis on an observable rather than on the latent error, and this is what makes the diagnostic feasible.
Let
\begin{equation}\label{eq:q}
q(x)=\Prob(Y_i=0\mid X_i=x)
\end{equation}
be the conditional choice probability, and consider the rate at which $q(x)\to0$.
We focus on two regimes under weak regularity: $q$ is regularly varying (RV), so that $q(x)$ decays like $x^{-1/\xi}$ for some $\xi>0$, or $q$ is rapidly varying, so that $q$ decays faster than every power of $x$.
In \eqref{eq:baseline} with $\varepsilon_i$ independent of $X_i$, the second regime is $F_\varepsilon$ in the Gumbel domain of attraction, which contains the normal and the logistic and hence probit and logit.
Because the decay rate of $q$ restricts only the joint law of $(Y_i,X_i)$, it requires no linear index, no homoskedasticity, and no restriction on the joint distribution of covariates and unobserved individual effects in the panel data setup.

Let $\lambda>0$ be the extreme value index of $F_X$, $\gamma$ that of $F_{X\mid Y=0}$, and $\xi$ that of the distribution of $\varepsilon$.
Theorem \ref{thm:ident} establishes that
\[
\gamma=\frac{\lambda\xi}{\lambda+\xi}\quad\text{if }q\in\RV_{-1/\xi},
\quad\quad
\gamma=0\quad\text{if }q\text{ is rapidly varying},
\]
so the decay regime is recovered from the conditional extremes of an observable, and testing whether the latent error is thin-tailed is equivalent to testing whether $\gamma=0$.

Based on this equivalence we construct a test from the normalized largest $k$ order statistics of a single heavy-tailed covariate in the subsample $X_i\mid Y_i=0$.
Inference uses the fixed-$k$ limit theory of \citet{muller2017fixed}, under which the self-normalized vector of those order statistics converges to a law indexed by $\gamma$ alone.
The weighted average power test of \citet{andrews1994optimal} applied to that experiment is optimal among invariant tests.
Proposition \ref{prop:power} characterizes its power exactly.
The test nearly attains the point-optimal envelope.
In addition, Theorem \ref{thm:uniform} in the appendix shows that no uniform size statement is available over the unrestricted null and that uniformity is restored on a second-order restricted subclass.

The diagnostic is attractive in practice.
It requires neither model estimation nor a parametric specification of the link, and it is informative about the RMSE rate of the special regressor estimator before any binary choice model is estimated.
When one covariate dominates the tail of the index, Proposition \ref{prop:multi} shows that the test can be implemented using that covariate alone, and Appendix \ref{app:plugin} gives a plug-in version that does not require knowing which covariate dominates.
The same idea applies to short panels with unobserved individual effects, and because we estimate neither the common coefficients nor the individual effects, the incidental parameters problem does not arise.

The diagnostic is also conceptually distinct from a test of the parametric specification itself, such as the information matrix test, whose power is generated where the data are dense because the likelihood weights the extreme observations negligibly.
Our statistic instead uses the largest $k$ observations only, which is the part of the sample that \citet{khan2010irregular} identify as rate determining, and our null is a tail class containing both the normal and the logistic rather than a single parametric family.

Monte Carlo simulations show that the proposed test controls size under thin-tailed errors and has power against heavy-tailed alternatives close to the exact power envelope.
We then apply it to firms' innovation decisions, in which lagged value added serves as the heavy-tailed covariate, and the data provide evidence against the thin-tail assumption underlying conventional probit or logit.

\paragraph{Related literature}
The econometric analysis of binary outcomes has a long tradition. See \citet{McFadden1974}, \citet{Chamberlain1980}, and the textbook treatment of \citet{Wooldridge2002}.
A substantial literature studies specification and goodness of fit, asking whether the chosen link is appropriate, whether the parametric form is correct, or whether the implied choice probabilities match the data. See \citet{Pregibon1980}, \citet{Bierens1990}, \citet{HorowitzHardle1994}, \citet{Xia2004}, and, more recently, \citet{OtaOtsu2026}.
This literature does not directly test the tail behavior of the latent error, and our paper fills that gap.

The natural comparison is a test of the parametric specification itself, the leading example being the information matrix test of \citet{White1982}; see also \citet{Newey1985} and \citet{ChesherSpady1991}.
Unlike our tail-class diagnostic, these procedures assess a specified parametric family and draw their power primarily from observations in the mid-sample. 
Remark \ref{rem:moments} explains the underlying likelihood weights, and Section \ref{sec:simu:cross-sectional} compares the procedures in finite samples.

A related strand develops semiparametric estimators, beginning with the maximum score estimator of \citet{Manski1975} and followed by \citet{Han1987}, \citet{Horowitz1992}, \citet{Ichimura1993}, and \citet{KleinSpady1993}.
These relax the parametric form of the link but are estimation procedures rather than diagnostics.
Closest to our setting are the special regressor estimators of \citet{Lewbel1997,Lewbel2000}, which identify the intercept from an inverse weighted average over a covariate with large support, and \citet{khan2010irregular}, who show that the parameters of such models are identified on thin sets. \citet{AndrewsSchafgans1998} obtain the analogue for a sample selection model.
A rejection indicates that the conventional logit and probit specifications may not be adequate and points toward these estimators.
In panel binary choice models fixed effects generate the incidental parameters problem \citep{NeymanScott1948}, compounded in dynamic models by lagged dependence and initial conditions \citep{Heckman1981}. The bias correction of \citet{FernandezVal2009} and the estimator of \citet{HonoreKyriazidou2000} address these problems while retaining the thin-tailed structure of probit or logit, whereas \citet{Manski1987} develops a semiparametric approach based on conditional median restrictions rather than a specified parametric link.

Finally, we draw on extreme value theory in econometrics.
Classical references such as \citet{deHaan07} show that tail behavior determines the limiting laws of extremes, \citet{muller2017fixed} develop the fixed-$k$ framework that allows inference from a finite number of extreme order statistics, and \citet{einmahl2023extreme} provide results on tail index estimation under heterogeneity.
Standard extreme value methods study the tail of an observed variable directly, whereas we use the conditional extremes of an observed covariate to infer the tail class of an unobserved error.
In contrast to \citet{liu2025binary}, which considers estimation and forecasting with binary outcomes and extreme covariates, the contribution here is a test, and it accommodates the Gumbel domain, which is the thin-tailed null.

\paragraph{Organization}
Section \ref{sec:estimand} states what the decay rate determines for estimation and inference, Section \ref{sec:ident} establishes identification of tail heaviness and treats multiple covariates, and Section \ref{sec:inference} develops the test and its power.
Sections \ref{sec:mc} and \ref{sec:emp} report the simulations and the application, and Section \ref{sec:conclusion} concludes.
Proofs and additional results are in the Appendix.

\paragraph{Notation}
$\RV_{\theta}$ is the class of functions regularly varying at infinity with index $\theta$, so $h\in\RV_{\theta}$ means $h(tx)/h(x)\to t^{\theta}$ for every $t>0$; $h$ is slowly varying when $\theta=0$ and rapidly varying when $h(tx)/h(x)\to0$ for every $t>1$.
A positive measurable function $h$ on $(z_0,\infty)$ is $\Gamma$ varying with auxiliary function $a$ if $h\{t+xa(t)\}/h(t)\to e^{-x}$ for every $x\in\Real$; see \citet[Section 1.2]{deHaan07}.
$A\sim B$ means $A/B\to1$, and $A\asymp B$ means that $A/B$ is bounded above and below by positive constants for large arguments.

\section{The Estimand and What It Determines}\label{sec:estimand}

Before introducing our test, we first explain why the tail of the latent error is an economic object and not a technical one.
The special regressor estimator of \citet{Lewbel1997,Lewbel2000} avoids committing to a link function by identifying the intercept through inverse weighting, and the weight is largest exactly where the data are thinnest.
\citet{khan2010irregular} show that the parameter is identified on a thin set, so the rate of convergence is determined by how rare the informative observations are, and this depends on the tail of the error relative to the tail of the regressor.
The tail is therefore not a nuisance: it distinguishes among a root-$n$ RMSE rate, a slower polynomial RMSE rate, and a case in which the estimand is not well defined.

Consider that model with the covariate of this paper as the special regressor,
\begin{equation}\label{eq:lewbel}
Y_i=I\left(\theta_0+X_i-\varepsilon_i\ge0\right),\quad X_i\perp\varepsilon_i,\quad \E[\varepsilon_i]=0,
\end{equation}
where $F_X$ has infinite left and right endpoints and density $f_X$.
The intercept is identified by $\theta_0=\E[\{Y_i-I(X_i>0)\}/f_X(X_i)]$, which commits to no link function.
\citet{khan2010irregular} show that the second moment and the bias of the trimmed estimator
\begin{equation}\label{eq:trim}
\widehat\theta_n=\frac1n\sum_{i=1}^{n}\frac{Y_i-I(X_i>0)}{f_X(X_i)}I(|X_i|\le c_n)
\end{equation}
are governed by
\begin{equation}\label{eq:ktvb}
v(c_n)=\int_{-c_n}^{c_n}\frac{p(u)I(u\le0)+q(u)I(u>0)}{f_X(u)}du,
\quad
b(c_n)=\int_{c_n}^{\infty}q(u)du-\int_{-\infty}^{-c_n}p(u)du,
\end{equation}
with $p=1-q$ and $q$ as in \eqref{eq:q},
and that the efficiency bound is infinite as a worst case over unrestricted tails.
The integrand of $v$ is the object of this paper.
The economic content of \eqref{eq:ktvb} is simple.
Observations with $X_i$ far from zero are the ones that identify $\theta_0$, because for them the outcome is nearly determined and the sign of $Y_i-I(X_i>0)$ is informative.
Such observations are rare, and $q(u)$ measures how rare.
If $q$ vanishes slowly relative to $f_X$, these observations are relatively abundant, the variance $v(c_n)$ grows quickly with the trimming bound, and the estimator has a slower polynomial RMSE rate. If $q$ vanishes sufficiently quickly relative to $f_X$, they are scarce, but the weight $1/f_X$ does not destabilize the average, and the estimator has a root-$n$ RMSE rate. The bias behaves in the opposite direction, since trimming removes exactly those observations, so, in the slower-rate region and absent cancellation between the leading right- and left-tail bias terms, the trimming sequence balances the two at the polynomial-order level. Restricting the tails to the regularly varying class yields closed-form expressions for the variance order and a bias bound, which we use in the following proposition.

\begin{proposition}\label{prop:consequences}
Suppose $f_X$ is positive on $\Real$, $1/f_X$ is locally integrable, $1-F_X\in\RV_{-1/\lambda}$ and $f_X\in\RV_{-1/\lambda-1}$ with $\lambda>0$, and $q\in\RV_{-1/\xi}$ with $\xi\in(0,1)$.
Write $\gamma=\lambda\xi/(\lambda+\xi)$, so that $0<\gamma<\frac{\lambda}{1+\lambda}$.
Suppose the binary choice model \eqref{eq:lewbel} holds, that $\widehat\theta_n$ is the trimmed estimator \eqref{eq:trim}, and that $F_X$ and $\Prob(Y_i=1\mid X_i=\cdot)$ satisfy in the left tail the conditions imposed above on the right tail, with the same indices $\lambda$ and $\xi$.
Then for any trimming sequence $c_n=n^{\gamma+o(1)}$,
\begin{equation}\label{eq:rate}
\left[\E\left\{\left(\widehat\theta_n-\theta_0\right)^{2}\right\}\right]^{1/2}=n^{-r+o(1)},
\quad r=\min\left\{\tfrac12,\ 1-\gamma\left(1+\lambda^{-1}\right)\right\},
\end{equation}
where $r=1/2$ if and only if $\gamma\le\lambda/\{2(1+\lambda)\}$ and $0<r<1/2$ otherwise.
\end{proposition}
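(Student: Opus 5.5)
The plan is to use the bias–variance decomposition of the mean squared error, evaluate each piece with the closed forms \eqref{eq:ktvb} and Karamata's theorem, and then substitute $c_n=n^{\gamma+o(1)}$. Potter bounds will absorb all slowly varying factors into the $n^{o(1)}$ term.

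\emph{Decomposition.} Write $Z_i=\{Y_i-I(X_i>0)\}I(|X_i|\le c_n)/f_X(X_i)$, so that $\widehat\theta_n$ is the sample mean of the $Z_i$. Then
\[
\E\{(\widehat\theta_n-\theta_0)^2\}=\var(Z_i)/n+\{\E Z_i-\theta_0\}^2 .
\]
For the second moment, use that for $u>0$ we have $\{Y-I(X>0)\}^2=1-Y$ with conditional mean $q(u)$, and for $u\le0$ it equals $Y$ with conditional mean $p(u)$. Hence $\E Z_i^2=v(c_n)$, and in the same way $\E Z_i-\theta_0=-b(c_n)$, as in \eqref{eq:ktvb}. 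The mean $\E Z_i$ converges, since it is bounded by $\int_{-c}^{c}$ of $q$ or $p$ and $\xi<1$ makes these integrable in the tails. Therefore $\var(Z_i)=v(c_n)+O(1)$. Moreover $\var(Z_i)$ is bounded away from zero for large $n$, because $Z_i$ is not degenerate on $\{|X_i|\le c\}$ for $c$ large.

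\emph{Orders via Karamata.} On the right tail the integrand $q/f_X$ is regularly varying with index $e=1+1/\lambda-1/\xi$, and the same holds on the left tail by the symmetric assumptions. Local integrability of $1/f_X$ handles the compact middle part.
\begin{itemize}
\item If $e>-1$, Karamata gives $v(c)\in\RV_{2+1/\lambda-1/\xi}$, with a positive index.
\item If $e<-1$, $v(c)$ converges to a finite limit.
\item If $e=-1$, $v$ is slowly varying.
\end{itemize}
Using $1/\gamma=1/\xi+1/\lambda$, the index is $2+1/\lambda-1/\xi=2(1+\lambda^{-1})-1/\gamma$. So $e>-1$ exactly when $\gamma>\lambda/\{2(1+\lambda)\}$, which is the equivalence $\xi(1+2\lambda)>\lambda$. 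For the bias, $\xi<1$ and Karamata give $\int_{c}^\infty q\in\RV_{1-1/\xi}$, and likewise for the left term. Hence $|b(c)|\le c^{1-1/\xi+o(1)}$.

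\emph{Substitution.} Put $c_n=n^{\gamma+o(1)}$; Potter bounds allow composing regularly varying functions with this sequence at the cost of an $n^{o(1)}$ factor.
\begin{itemize}
\item The variance term is $v(c_n)/n=n^{2\gamma(1+\lambda^{-1})-2+o(1)}$ when $e\ge-1$, and is of order $n^{-1}$ when $e<-1$. It is never smaller than order $n^{-1}$, because $\var(Z_i)$ is bounded away from zero.
\item The squared bias is at most $n^{2\gamma(1-1/\xi)+o(1)}$. Since $\gamma/\xi=1-\gamma/\lambda$, we have $\gamma(1-1/\xi)=-\{1-\gamma(1+\lambda^{-1})\}$, so the bias never exceeds the variance order.
\end{itemize}
Combining the two, the MSE is at least $\var(Z_i)/n$, which gives the lower bound $n^{-2r+o(1)}$. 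It is also at most variance plus the bias bound, which gives the same upper bound. Taking square roots yields \eqref{eq:rate}, with $r=1/2$ precisely in the case $\gamma\le\lambda/\{2(1+\lambda)\}$. The bound $0<r<1/2$ in the other case follows from $\gamma<\lambda/(1+\lambda)$.

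The main obstacle is keeping the estimate two-sided. The bias can only be bounded above, because the right- and left-tail terms of $b$ may cancel. The lower bound must therefore come entirely from the variance, which requires a careful check that $\var(Z_i)\ge c\,v(c_n)$ for some constant $c>0$. In the boundary case $e=-1$, the slowly varying $v$ must also be shown to be $n^{o(1)}$.
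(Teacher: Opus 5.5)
Your proposal is correct and follows the paper's proof essentially step for step. The shared steps are the exact bias--variance identity, Karamata's theorem applied to $v(c)$ in the three index cases and to the two tail integrals in $b(c)$, only an upper bound on $|b|$ because the two tails may cancel, and the lower bound taken entirely from the variance. The one step you leave loose is that $\var(Z_i)$ is bounded away from zero uniformly in $n$, which "nondegeneracy" alone does not give; the paper settles it by the law of total variance, $\var(Z_i)\ge\int_{a_1}^{a_2}q(u)p(u)/f_X(u)\,du>0$ on a fixed interval $[a_1,a_2]$. Note also that $\E Z_i-\theta_0=+b(c_n)$, not $-b(c_n)$, a harmless sign slip since only $b^2$ enters.
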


Thus $r=1/2$ gives a root-$n$ RMSE rate up to subpolynomial factors, whereas $0<r<1/2$ gives a slower polynomial RMSE rate. In both cases, the estimator remains consistent.

Every threshold in Proposition \ref{prop:consequences} is a function of the pair $(\lambda,\gamma)$ alone.
Figure \ref{fig:regimes} partitions that pair into three regions.
In the first $\widehat\theta_n$ attains the parametric rate, in the second it converges more slowly, and in the third the estimand is unavailable.
The functional $\int\{p(u)-I(u>0)\}du$ that $\widehat\theta_n$ targets converges if and only if $\E|\varepsilon_i|<\infty$, which fails for every $\xi>1$, equivalently for every $\gamma>\lambda/(1+\lambda)$.
Proposition \ref{prop:consequences} is stated for $\xi<1$ because at $\xi=1$ regular variation alone does not settle integrability, and because the exponent in \eqref{eq:rate} is zero there in any case.
An applied researcher who intends to use the special regressor estimator, or who needs to know whether a reported standard error is informative, must know which of these regions the design lies in.

One coordinate is easy, since $X_i$ is observed and $\lambda$ can be estimated from it directly.
The other is not, because $\gamma$ depends on the tail of $\varepsilon_i$, which is latent.
This is what motivates the rest of the paper, which shows that $\gamma$ is nevertheless recovered from the conditional extremes of $X_i$, so that the position in Figure \ref{fig:regimes} is determined without estimating the binary choice model and without choosing a link.
Before any binary choice model is estimated, the diagnostic provides the RMSE rate of the special regressor estimator and the trimming order $c_n=n^{\gamma+o(1)}$, which attains the exponent of Proposition \ref{prop:consequences}.

Proposition \ref{prop:consequences} requires $F_X$ to have an infinite left endpoint as well as an infinite right endpoint, since $\theta_0$ is identified from both tails.
That is a requirement of the special regressor estimator and not of the test, which uses one tail at a time.

\begin{figure}[tp]
\centering
\includegraphics[width=0.70\textwidth]{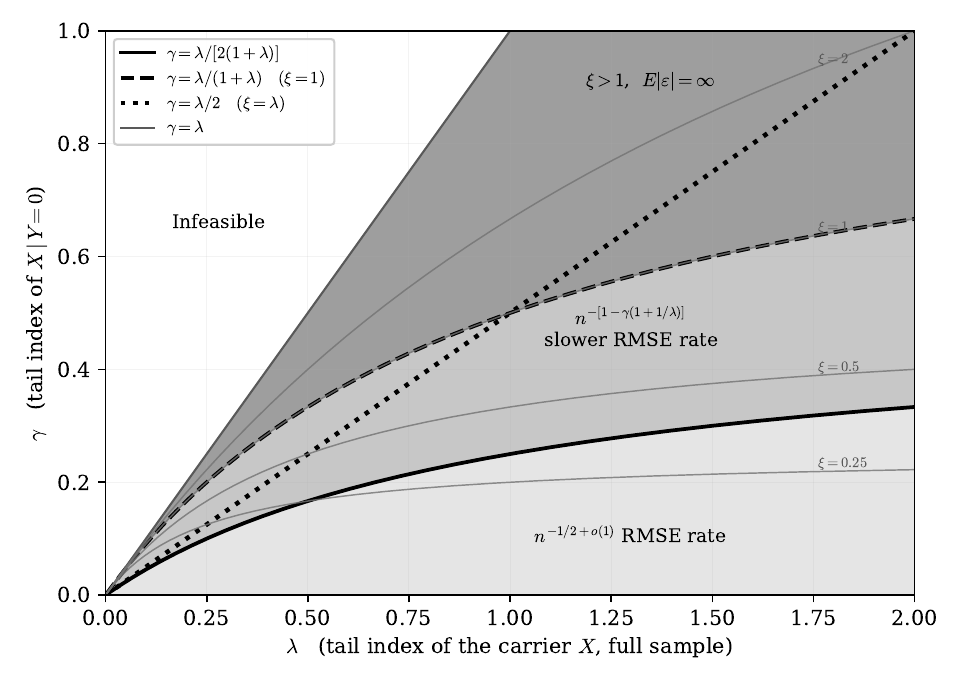}
\caption{Regime map in the identified pair $(\lambda,\gamma)$. Thin grey curves are contours of $\xi=\gamma\lambda/(\lambda-\gamma)$. The feasible region is $0<\gamma<\lambda$. The solid curve separates the root-$n$ and slower polynomial RMSE regions of Proposition \ref{prop:consequences}, and the dashed curve marks $\xi=1$, beyond which the special regressor estimand is not well defined.}
\label{fig:regimes}
\end{figure}

\begin{remark}[Moments and likelihood weights]\label{rem:moments}
Theorems 3.1 and A.1 of \citet{khan2010irregular} show that finite second moments of the regressors imply an infinite semiparametric efficiency bound in \eqref{eq:lewbel}.
If $\lambda>1/2$, then $\E[X_i^{2}]=\infty$, and such a covariate is therefore necessary rather than disqualifying for regular estimation.
The parametric likelihood is unaffected either way.
Consider the baseline model \eqref{eq:baseline} where $\varepsilon$ has distribution function $F_\varepsilon$ and density $f_\varepsilon$.
Its information matrix is $\E[X_i^2\omega(X_i)]$ with $\omega=f_\varepsilon^{2}/[F_\varepsilon(1-F_\varepsilon)]$, and, writing $\phi$ for the standard normal density, $\omega(t)\sim|t|\phi(t)$ under probit and $\omega(t)\sim e^{-|t|}$ under logit, so it is finite whenever the regressors have a proper distribution.

What the tail does change is the weight $\omega$ places on the extremes.
A test of the parametric specification itself, such as the information matrix test of \citet{White1982}, is built from the score and the Hessian, both of which weight an observation with fitted index $t$ by a factor of order $t^{2}\omega(t)$, and $t^{2}\omega(t)\to0$ under both links.
The largest observations are asymptotically uninformative for such a statistic, whose power comes from the mid-sample, whereas $\varphi_k$ in Section \ref{sec:inference} is a function of the largest $k$ order statistics of one covariate in the subsample $Y_i=0$ and of nothing else.
The two read disjoint parts of the sample and are aimed at different objects.
A likelihood-based test asks whether a given parametric family fits where the data are dense.
The hypothesis of this paper asks whether $q$ decays at a polynomial rate, which under Proposition \ref{prop:consequences} determines whether the special regressor estimator has a root-$n$ or slower polynomial RMSE rate.
\end{remark}

\section{Identification of Tail Heaviness}\label{sec:ident}

In this section we first show how the tail heaviness of the latent error is recovered from the covariate, and then record which structural models generate the condition under which that recovery is valid.
Section \ref{sec:ass} states the assumption and Section \ref{sec:thm} the identification result.
Section \ref{sec:primitive} shows how index transformations and heteroskedasticity enter, and Section \ref{sec:multi} treats the case of multiple covariates.

\subsection{Preliminaries and Assumption}\label{sec:ass}

To characterize the tail of the latent error, we consider the tail of $X_i$ through extreme value theory.
The generalized extreme value family is $G_{\gamma}(x)=\exp\{-(1+\gamma x)^{-1/\gamma}\}$ for $\gamma\ne0$ on $\{x:1+\gamma x>0\}$ and $G_0(x)=\exp\{-e^{-x}\}$.
For $X_i$ i.i.d.\ $F$, we write $F\in\mathcal D(G_{\gamma})$ if $(\max_{i\le n}X_i-b_n)/a_n\dto G_{\gamma}$ for some $a_n>0$ and $b_n$, and call $\gamma$ the extreme value index.
The case $\gamma=0$ is the Gumbel case and includes the normal, logistic, and exponential distributions, while $\gamma>0$ is the Fr\'echet case and is equivalent to $1-F\in\RV_{-1/\gamma}$ \citep[Theorem 1.2.1]{deHaan07}.

To illustrate the key idea, consider first a simple example in which $X_i$ follows a Pareto law and $\varepsilon_i$ follows either a Pareto or an exponential law, with $\varepsilon_i$ independent of $X_i$ and $Y_i$ generated by \eqref{eq:baseline}.
By Bayes' rule the conditional density is
\begin{equation}\label{eq:tail}
f_{X\mid Y=0}(x)=\frac{q(x)f_X(x)}{\Prob(Y_i=0)}=\frac{\{1-F_\varepsilon(x)\}f_X(x)}{\Prob(Y_i=0)} .
\end{equation}
Let $X_i$ be Pareto with index $\lambda>0$, so that $f_X(x)\propto x^{-1/\lambda-1}$.
When $\varepsilon_i$ is Pareto with index $\xi>0$ we have $1-F_\varepsilon(x)\propto x^{-1/\xi}$ and hence $f_{X\mid Y=0}(x)\propto x^{-1/\lambda-1/\xi-1}$, so that $F_{X\mid Y=0}\in\mathcal D(G_{\lambda\xi/(\lambda+\xi)})$.
When $\varepsilon_i$ is exponential with rate $\eta$ we have $1-F_\varepsilon(x)=e^{-\eta x}$ and hence $f_{X\mid Y=0}(x)\propto x^{-1/\lambda-1}e^{-\eta x}$, which decays exponentially, so that $F_{X\mid Y=0}\in\mathcal D(G_0)$.
Combining the two cases, $F_{X\mid Y=0}\in\mathcal D(G_\gamma)$ with $\gamma=\lambda\xi/(\lambda+\xi)$, and the exponential case gives $\gamma=0$.

To generalize beyond the Pareto and exponential cases, and beyond the independence between $\varepsilon_i$ and $X_i$ that \eqref{eq:tail} uses, we impose the following conditions on the joint distribution of $(Y_i,X_i)$.

\begin{assumption}[Tail Conditions]\label{ass:main}
Let the following statements hold:
\begin{enumerate}[(i).]
\item $\Prob(Y_i=0)\in (0,1)$.
\item $F_X$ has right endpoint $+\infty$ and a positive, continuously differentiable density $f_X$ on $(z_0,\infty)$, with $\lim_{x\to\infty}xf_X'(x)/f_X(x)=-(1/\lambda+1)$ for some $\lambda>0$.
\item $q$ in \eqref{eq:q} is positive and continuous on $(z_0,\infty)$, and exactly one of the following holds: (a) $q\in\RV_{-1/\xi}$ for some $\xi\in(0,\infty)$; (b) $q$ is $\Gamma$ varying; (c) $q\in\RV_{0}$.
\end{enumerate}
\end{assumption}

Here and below, $z_0$ denotes a finite threshold that may differ across conditions.

We now discuss Assumption \ref{ass:main}. 
Part (i) is standard, requiring that the unconditional choice probability is non-degenerate. 
Parts (ii) and (iii.a) supply the right tail hypotheses of Proposition \ref{prop:consequences}.
In particular, Part (ii) is the von Mises condition for the Fr\'echet domain written on the density.
It implies $f_X\in\RV_{-1/\lambda-1}$ and, by Karamata's theorem, $1-F_X\sim\lambda xf_X(x)\in\RV_{-1/\lambda}$, so $F_X\in\mathcal D(G_\lambda)$.
Part (iii) places $q$ in one of three tail classes.
The tail index is respectively $-1/\xi$ under (a), $-\infty$ under (b) since a $\Gamma$ varying function is rapidly varying, and zero under (c), so the cases are mutually exclusive.
Case (a) is $q\in\RV_{-1/\xi}$, so the choice probability vanishes at the polynomial rate $x^{-1/\xi}$.
Case (b) is the class of \citet{deHaan07} associated with the Gumbel domain, in which $q$ vanishes faster than any power at a rate set by the auxiliary function.
It holds whenever $q$ is twice continuously differentiable with $q'<0$ and its reciprocal hazard $a_q=-q/q'$ satisfies $a_q'(x)\to0$, the von Mises condition, which is how Section \ref{sec:primitive} and Table \ref{tab:examples} verify it in practice.
Case (c) is $q$ slowly varying, so the covariate does not shift the choice probability at a polynomial rate.
It admits a constant $q$, which is the configuration of the last row of Table \ref{tab:examples}.

Assumption \ref{ass:main} is stated for the right tail.
The left tail statistic is the right tail statistic computed from the symmetric pair $(1-Y_i,-X_i)$.
Its choice probability is $q^{-}(x)=\Prob(Y_i=1\mid X_i=-x)$.
The statistic is valid under Assumption \ref{ass:main} applied to that pair, which we label Assumption \ref{ass:main}$^{-}$.
The same formulas deliver a left tail index $\gamma^{-}$.
The two conditions are separate restrictions, the first on the decay of $\Prob(Y_i=0\mid X_i=x)$ as $x\to+\infty$ and the second on the decay of $\Prob(Y_i=1\mid X_i=x)$ as $x\to-\infty$, and the combined procedure of Section \ref{sec:test} tests their conjunction $\gamma=\gamma^{-}=0$.
For a covariate bounded below, such as firm assets, only Assumption \ref{ass:main} is available.

We note that Assumption \ref{ass:main} restricts only the joint distribution of $(Y_i,X_i)$.
It does not require the latent error to be independent of the covariates, nor impose linear index form or homoskedasticity. 
As we discuss in Remark \ref{rem:panel}, it does not impose any restriction on the joint distribution of covariates and unobserved individual effects either in panel data models.
Section \ref{sec:primitive} records which structural models generate it and their corresponding values of $\xi$ and $\gamma$.

\subsection{The Identification Result}\label{sec:thm}

It is not straightforward to test the decay rate of $q$ directly, since $q$ is a conditional probability in a region where observations are scarce by construction and $\varepsilon_i$ is latent.
The next result shows that the rate is nevertheless encoded in the extreme value index of $F_{X\mid Y=0}$, which is the law of an observable in an observed subsample.
We focus on the right tail, and the same argument applies to the left. 
Recall that $\gamma$ denotes the extreme value index of
$F_{X\mid Y=0}$, $\lambda$ denotes that of $F_X$, and $\xi$ denotes
the regular-variation index of $q$ in Assumption
\ref{ass:main}(iii.a).

\begin{theorem}\label{thm:ident}
Suppose Assumption \ref{ass:main}(i) holds.  
\begin{enumerate}[(a).]
\item Suppose Assumption \ref{ass:main}(ii) and (iii.a) hold.
Then
\begin{equation}\label{eq:gamma}
1-F_{X\mid Y=0}\in\RV_{-(1/\lambda+1/\xi)},
\quad
\gamma=\frac{\lambda\xi}{\lambda+\xi}\in(0,\lambda),
\quad
\xi=\frac{\gamma\lambda}{\lambda-\gamma}.
\end{equation}
\item Suppose Assumption \ref{ass:main}(ii) and (iii.b) hold.
Then $F_{X\mid Y=0}\in\mathcal D(G_0)$, so that $\gamma=0$, and $\{1-F_{X\mid Y=0}\}/f_{X\mid Y=0}\sim b_q$ for any auxiliary function $b_q$ of $q$.
\item Suppose Assumption \ref{ass:main}(ii) and (iii.c) hold.
Then $1-F_{X\mid Y=0}\in\RV_{-1/\lambda}$ and $\gamma=\lambda$.
\item Suppose $f_X$ is positive and twice continuously differentiable on $(z_0,\infty)$ with $f_X'<0$, and that $a_X=-f_X/f_X'$ satisfies $a_X'(x)\to0$, so that Assumption \ref{ass:main}(ii) holds for no $\lambda>0$.
Then $\gamma=0$ and $F_{X\mid Y=0}\in\mathcal D(G_0)$ under parts (iii.a) and (iii.c), and under part (iii.b) whenever $q$ has an auxiliary function $b_q$ such that $b_q/a_X$ converges in $[0,\infty]$ or $q$ is twice continuously differentiable with $q'<0$ and $a_q'\to0$.
\end{enumerate}
\end{theorem}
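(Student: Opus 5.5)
The proof works throughout with the conditional density $f_{X\mid Y=0}(x)=q(x)f_X(x)/\Prob(Y_i=0)$. This is Bayes' rule as in \eqref{eq:tail}, but with $q$ the conditional choice probability, so no independence is needed. Assumption \ref{ass:main}(i) makes the normalizing constant finite and positive. Every claim is then about the tail integral $\bar G(x)=\int_x^\infty q(u)f_X(u)\,du$, and $1-F_{X\mid Y=0}=\bar G/\Prob(Y_i=0)$.

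\emph{Parts (a) and (c).} Assumption \ref{ass:main}(ii) gives $f_X\in\RV_{-1/\lambda-1}$, as noted after the assumption. Under (iii.a) the product $qf_X$ is then regularly varying with index $-(1/\lambda+1/\xi)-1<-1$. Karamata's theorem gives
\[
\bar G(x)\sim \frac{x\,q(x)f_X(x)}{1/\lambda+1/\xi}\in\RV_{-(1/\lambda+1/\xi)} .
\]
By \citet[Theorem 1.2.1]{deHaan07} this gives $\gamma=(1/\lambda+1/\xi)^{-1}=\lambda\xi/(\lambda+\xi)$, which lies in $(0,\lambda)$. Inverting this relation gives the formula for $\xi$. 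Part (c) is the same computation with $q\in\RV_0$, which yields $\bar G\in\RV_{-1/\lambda}$ and $\gamma=\lambda$.

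\emph{Part (b).} Here $q$ is $\Gamma$ varying with auxiliary function $b_q$. Two standard facts from \citet[Section 1.2 and Appendix B]{deHaan07} are used: $b_q(t)=o(t)$, and $q$ is rapidly varying. The plan is to show that $h=qf_X$ is itself $\Gamma$ varying with the same auxiliary function.

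Since $b_q(t)/t\to0$, we have $t+xb_q(t)=t\{1+o(1)\}$ locally uniformly in $x$. Regular variation of $f_X$ then gives $f_X\{t+xb_q(t)\}/f_X(t)\to1$, and hence $h\{t+xb_q(t)\}/h(t)\to e^{-x}$.

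A positive $\Gamma$ varying function has an integrated tail that is again $\Gamma$ varying, with
\[
\int_t^\infty h(u)\,du\sim b_q(t)\,h(t).
\]
This is the step I expect to be the main obstacle: it needs either the representation theorem for $\Gamma$ varying functions or a uniform-convergence-plus-domination argument to justify the passage to the integral.

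Granting it, $\bar G$ is $\Gamma$ varying, which is the Gumbel-domain characterization $1-F\in\Gamma$ (de Haan's Theorem 1.2.5 / Corollary 1.2.4). This gives $\gamma=0$, and $\{1-F_{X\mid Y=0}\}/f_{X\mid Y=0}=\bar G/h\sim b_q$. Auxiliary functions are unique up to asymptotic equivalence, so the statement holds for any choice of $b_q$.

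\emph{Part (d).} Here $f_X$ satisfies a von Mises condition in the Gumbel sense, so $f_X$ is $\Gamma$ varying with auxiliary function $a_X$, and $a_X(t)=o(t)$.

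Under (iii.a) and (iii.c), $q$ is regularly varying. Then $q\{t+xa_X(t)\}/q(t)\to1$, so $h$ is $\Gamma$ varying with auxiliary function $a_X$, and the integration step from part (b) gives $\gamma=0$.

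Under (iii.b), both factors are $\Gamma$ varying, with auxiliary functions $a_X$ and $b_q$. I would show that $h$ is $\Gamma$ varying with auxiliary function $c=(1/a_X+1/b_q)^{-1}$. Writing $\rho=\lim b_q/a_X\in[0,\infty]$, evaluating each factor along $t+xc(t)$ gives the exponent
\[
-x\left\{\frac{c}{a_X}+\frac{c}{b_q}\right\}=-x .
\]
This requires local uniformity of the $\Gamma$-variation limits when the argument $x$ is replaced by a convergent multiple $x\,c(t)/a_X(t)\to x/(1+1/\rho)$, suitably interpreted at $\rho=0$ and $\rho=\infty$. That is why the convergence of $b_q/a_X$ is assumed. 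In the alternative smooth case, $h$ is twice differentiable, and its reciprocal hazard is
\[
a_h=\frac{-h}{h'}=\frac{1}{1/a_X+1/a_q}.
\]
Differentiating, $a_h'$ is a weighted combination of $a_X'$ and $a_q'$ with weights in $[0,1]$, so $a_h'\to0$. The von Mises condition then gives $\bar G\in\mathcal D(G_0)$ directly.
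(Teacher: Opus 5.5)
Your proposal is correct and follows the paper's proof essentially step for step: Karamata's theorem applied to $g=qf_X$ for parts (a) and (c); transfer of $\Gamma$ variation from $q$ to $qf_X$ via $b_q(t)=o(t)$ and the uniform convergence theorem for part (b); and, for part (d), either the combined auxiliary function $a_Xb_q/(a_X+b_q)$ or the weighted bound $|a_g'|\le|a_q'|+|a_X'|$. The step you flag as the main obstacle, that $\int_t^\infty h\sim b_q(t)h(t)$ and that this integral is again $\Gamma$ varying, is exactly the standard integration property of $\Gamma$ variation that the paper also cites without proof. In part (d), your observation $a_X(t)=o(t)$ already gives $x/a_X(x)\to\infty$, so Assumption \ref{ass:main}(ii) fails for every $\lambda>0$; the paper checks this explicitly via Lemma \ref{lem:rapid}.
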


Parts (a) and (b) are the identification result.
Both $\lambda$ and $\gamma$ are extreme value indices of observable
distributions, the first of $F_X$ in the full sample and the second of
$F_{X\mid Y=0}$ in the subsample with $Y_i=0$, so
\eqref{eq:gamma} identifies the decay index of $q$ from two estimable
quantities. Under the baseline model, this is the tail index of the
latent error.
Part (a) delivers $\gamma>0$ and part (b) delivers $\gamma=0$, so
testing whether the error is thin-tailed is testing whether $\gamma=0$.

The identifying power arises from the extreme observations of $X_i$.
Since $Y_i=0$ implies $X_i<\varepsilon_i$ in \eqref{eq:baseline}, large values of $X_i$ in that subsample correspond to extreme realizations of the latent error, so a heavy-tailed regressor conveys tail information: its extreme order statistics determine the tail class of the unobserved error.
Requiring one heavy-tailed covariate is therefore not a technical convenience but the source of identification of the tail heaviness.

Parts (c) and (d) are the two ways in which identification of the tail heaviness fails, and each is the failure of one assumption.
Part (c) is case (iii.c), in which the covariate does not shift the choice probability at a polynomial rate, and the conditional and unconditional tails then coincide and $\gamma=\lambda$.
Because parts (a) and (b) place $\gamma$ strictly below $\lambda$, a confidence interval for $\gamma$ that is not separated from one for $\lambda$ signals this failure.
Part (d) is the failure of (ii), in which the covariate itself lies in the Gumbel domain.
Under the conditions of part (d), a thin-tailed covariate delivers $\gamma=0$, and the sample rarely generates observations large enough to reveal the tail class.

\begin{remark}[Levels, not logarithms]\label{rem:levels}
Part (d) states that a covariate in the Gumbel domain with $\lambda=0$ delivers $\gamma=0$, so no procedure based on its conditional extremes can distinguish the two regimes.
The same conclusion is visible in \eqref{eq:gamma}, where $\gamma\to0$ as $\lambda\to0$ for every fixed $\xi$.
Therefore, if a covariate has a Pareto tail, it must enter the model in levels but not in logarithms, since the logarithm of a Pareto variable is exponential and carries no identifying content.
The consequences of a logarithmic index for the interpretation of a rejection are given in Table \ref{tab:examples}.
\end{remark}

\subsection{More General Models}\label{sec:primitive}

Assumption \ref{ass:main} restricts $q$ and $F_X$ rather than a structural model, which is what makes Theorem \ref{thm:ident} robust.
This section asks the converse question.
We consider a class of binary choice models that allows a nonlinear index and heteroskedasticity of unknown form.
For each member we determine which part of Assumption \ref{ass:main}(iii) holds, and, when part (iii.a) holds, the values of $\gamma$ and $\xi$ it holds with.
A single formula, \eqref{eq:aq} below, answers both questions, and Table \ref{tab:examples} applies it to the specifications used in practice.

Let
\begin{equation}\label{eq:general}
Y_i=I\left\{m(X_i)\ge\sigma(X_i)\varepsilon_i\right\},\quad \varepsilon_i\perp X_i .
\end{equation}
Here $m$ is the index function, differentiable and strictly increasing with $m(x)\to\infty$, and $\sigma>0$ is the conditional scale of the error, also differentiable.
The unobserved error $\varepsilon_i$ has survival function $S(t)=\Prob(\varepsilon_i>t)$ with $S'<0$ and $f_S=-S'$ differentiable.
The baseline model \eqref{eq:baseline} is the case $m(x)=x$ and $\sigma\equiv1$.
A nonlinear index is the case $\sigma\equiv1$ with general $m$, and heteroskedasticity of unknown form is the case $m(x)=x$ with general $\sigma$.

Because $\sigma>0$, the events $\{m(X_i)\ge\sigma(X_i)\varepsilon_i\}$ and $\{m(X_i)/\sigma(X_i)\ge\varepsilon_i\}$ coincide, so independence gives
\begin{equation}\label{eq:comp}
q(x)=\Prob\left\{\varepsilon_i>\widetilde m(x)\right\}=S\{\widetilde m(x)\},
\quad
\widetilde m:=\frac{m}{\sigma} .
\end{equation}
We call $\widetilde m$ the effective index.
Equation \eqref{eq:comp} is an identity rather than a limit, and it says that the index function and the conditional scale enter the choice probability only through their ratio.
Nothing beyond the shape of $\widetilde m$ and the tail of $S$ has to be known to read the tail of $q$.

When $\widetilde m'>0$, differentiating \eqref{eq:comp} gives $q'=-f_S(\widetilde m)\widetilde m'<0$, so, with $a_S=S/f_S$ the reciprocal hazard of $S$,
\begin{equation}\label{eq:aq}
a_q(x)=-\frac{q(x)}{q'(x)}=\frac{a_S\{\widetilde m(x)\}}{\widetilde m'(x)} .
\end{equation}
The reciprocal hazard of the choice probability is the reciprocal hazard of the error, evaluated at the effective index and divided by its slope.
Every question about Assumption \ref{ass:main}(iii.a) or (iii.b) is therefore a question about $a_S$ and $\widetilde m$.
Part (iii.b) is implied by the von Mises condition $a_q'\to0$, which is settled by differentiating \eqref{eq:aq} and is the route taken in every case below.
Part (iii.a) asks whether $q$ is regularly varying with a finite index, and for a regularly varying error \eqref{eq:aq} answers it in closed form through the monotone density theorem.

Suppose $S\in\RV_{-1/\xi_S}$ with $f_S$ eventually monotone, and $\widetilde m\in\RV_{\kappa}$ with $\widetilde m'$ eventually monotone, where $\xi_S>0$ and $\kappa>0$.
Regular variation restricts the level of a function and not that of its derivative, so the passage from $S$ to $a_S$ uses the monotone density theorem \citep[Theorem 1.7.2]{Bingham1987} rather than the Karamata representation.
Applied to $S(t)=\int_t^{\infty}f_S$ it gives $a_S(t)/t\to\xi_S$, and applied to $\widetilde m$ it gives $x\widetilde m'(x)/\widetilde m(x)\to\kappa$.
Since $\widetilde m(x)\to\infty$, \eqref{eq:aq} factors as
\begin{equation}\label{eq:xifactor}
\frac{a_q(x)}{x}=\frac{a_S\{\widetilde m(x)\}}{\widetilde m(x)}\cdot\frac{\widetilde m(x)}{x\widetilde m'(x)}\longrightarrow\frac{\xi_S}{\kappa} ,
\end{equation}
so Assumption \ref{ass:main}(iii.a) holds with $\xi=\xi_S/\kappa$.
The tail index of the error is divided by $\kappa$, the index of regular variation of the effective index.

Four cases arise, and they exhaust Table \ref{tab:examples}.
If $\widetilde m\in\RV_\kappa$ with $\kappa>0$ and the error is heavy-tailed, part (iii.a) holds with $\xi=\xi_S/\kappa$ and $\gamma>0$.
If $\widetilde m\in\RV_\kappa$ with $\kappa>0$ and the error is in the Gumbel class, part (iii.b) holds and $\gamma=0$, so a power index leaves the two classes intact and only rescales $\xi$.
If $\widetilde m$ grows more slowly than any power, \eqref{eq:xifactor} does not apply and \eqref{eq:aq} must be used directly, and a logarithmic index then converts a Gumbel error into part (iii.a) with a finite $\xi$.
If $\sigma$ grows faster than $m$, then $\widetilde m\to0$ and $q(x)\to S(0)>0$.
Hence $q\in\RV_{0}$, and Theorem \ref{thm:ident}(c) applies directly without using \eqref{eq:aq}.

\begin{table}[tp]
\centering
\small
\resizebox{\textwidth}{!}{
\begin{tabular}{llllcl}
\toprule
Index $\widetilde m(x)$ & Error survival & $q(x)$ & $a_q(x)$ & Case, index & $\gamma$ \\
\midrule
$x$ & $a_S(t)/t\to\xi_S>0$ & $\in\RV_{-1/\xi_S}$ & $\sim\xi_S x$ & (a), $\xi_S$ & $\lambda\xi_S/(\lambda+\xi_S)$\\
$x$ & $S$ logistic & $\sim e^{-x}$ & $\to1$ & (b) & $0$\\
$x$ & $S$ normal & $\sim\phi(x)/x$ & $\sim1/x$ & (b) & $0$\\
$x^{\kappa}$, $\kappa>0$ & $a_S(t)/t\to\xi_S>0$ & $\in\RV_{-\kappa/\xi_S}$ & $\sim(\xi_S/\kappa)x$ & (a), $\xi_S/\kappa$ & $\lambda\xi_S/(\lambda\kappa+\xi_S)$\\
$x^{\kappa}$, $\kappa>0$ & $S$ logistic & $\sim e^{-x^{\kappa}}$ & $\sim x^{1-\kappa}/\kappa$ & (b) & $0$\\
$x^{\kappa}$, $\kappa>0$ & $S$ normal & $\asymp x^{-\kappa}e^{-x^{2\kappa}/2}$ & $\sim x^{1-2\kappa}/\kappa$ & (b) & $0$\\
$\tau\log x$, $\tau>0$ & $S$ logistic & $\sim x^{-\tau}$ & $\sim x/\tau$ & (a), $1/\tau$ & $\lambda/(1+\lambda\tau)$\\
$\tau\log x$, $\tau>0$ & $S$ normal & $\asymp\frac{e^{-\tau^{2}(\log x)^{2}/2}}{\log x}$ & $\sim\frac{x}{\tau^{2}\log x}$ & (b) & $0$\\
$x/\sigma(x)$, $\sigma\in\RV_{s}$, $s<1$ & $a_S(t)/t\to\xi_S>0$ & $\in\RV_{-(1-s)/\xi_S}$ & $\sim\frac{\xi_S x}{1-s}$ & (a), $\frac{\xi_S}{1-s}$ & $\frac{\lambda\xi_S}{\lambda(1-s)+\xi_S}$\\
$x/\sigma(x)$, $\sigma\in\RV_{s}$, $s>1$ & $S$ continuous at $0$ with $S(0)>0$ & $\to S(0)>0$ & - & $q\in\RV_{0}$ & $\lambda$\\
\bottomrule
\end{tabular}}
\caption{Special cases of \eqref{eq:general} and the implied conditional index.
Rows one to eight take $\sigma\equiv1$, so that $\widetilde m=m$, and rows nine and ten take $m(x)=x$, so that $\widetilde m=x/\sigma(x)$.
The second column restricts the survival function $S$ of $\varepsilon_i$, and the fifth gives the part of Assumption \ref{ass:main}(iii) that holds together with the value of $\xi$ in case (a).
Every entry in the fourth column follows from \eqref{eq:aq}, except the last row, for which $a_q$ is undefined.
Every entry in the last column follows from Theorem \ref{thm:ident}, the last row through part (c) because $q\in\RV_{0}$.}
\label{tab:examples}
\end{table}

Two implications of Table \ref{tab:examples} matter for practice.
First, rows four through six show that the diagnostic does not require the covariate to enter linearly: for any power index with $\kappa>0$, $\gamma=0$ if and only if the error tail is in the Gumbel class.
Second, rows seven and eight give the exact content of a rejection when the applied specification uses a logarithm.
A rejection obtained with the covariate in levels admits two readings.
One is a heavy-tailed error under a level index.
The other is an exponentially tailed error under a logarithmic index.
In either case, a probit or logit model written in the level of the covariate is misspecified.
A logarithmic index with a Gaussian error is not rejected.

In summary, Table \ref{tab:examples} illustrates cases (a)--(c) in Theorem \ref{thm:ident}, and our proposed test is aimed at cases (a) and (b) alone.
These are the two cases in which the covariate moves the choice probability at a polynomial rate or faster, and they are the empirically relevant ones.
Many standard binary choice specifications fall into one of these cases: a probit or logit in the level of the covariate is case (b), and any heavy-tailed error under the same index is case (a).
The null $\gamma=0$ is case (b) and the alternative $\gamma>0$ is case (a), so the hypothesis \eqref{eq:hypo} is a choice between them.
Case (c) is not an alternative the test is designed to detect.
It arises when the covariate does not move the index at all, either because the conditional scale absorbs it or because another regressor cancels it, and it is the configuration in which a rejection carries no information about $\varepsilon_i$.
Case (d) of Theorem \ref{thm:ident} applies when $X_i$ itself is thin-tailed, which we can easily check from the data. 
See Section \ref{sec:emp} for details. 
 
\subsection{Multiple Covariates}\label{sec:multi}

Our previous analysis focuses on a single dominating covariate. 
Now we consider the binary choice model with multiple covariates,
\begin{equation}\label{eq:multi}
Y_i=I\left(m(X_i)+X_i^{a\prime}\beta-\varepsilon_i\ge0\right),
\end{equation}
where $X_i$ is the scalar heavy-tailed dominating regressor, $X_i^{a}$ collects the auxiliary variables, and $m$ is the index function of \eqref{eq:general}.
The coefficient on the dominating regressor is normalized to one without loss of generality, provided that $X_i$ affects $Y_i$ conditional on $X_i^{a}$.
The question is what the auxiliary covariates do to the tail of $q$. 
Under the regularly varying alternative they have no asymptotic effect when their conditional moments are sufficiently light, while under a smooth Gumbel null their conditional distribution must stabilize in the relevant tail.

\begin{proposition}\label{prop:multi}
Suppose \eqref{eq:multi} holds with $\varepsilon_i\perp(X_i,X_i^{a})$.
Write $S_\varepsilon(t)=\Prob(\varepsilon_i>t)$ and $G_x$ for the conditional law of $X_i^{a\prime}\beta$ given $X_i=x$, so that
\begin{equation}\label{eq:mixture}
q(x)=\int S_\varepsilon\left\{m(x)+u\right\}dG_x(u).
\end{equation}
\begin{enumerate}[(a).]
\item If $S_\varepsilon\in\RV_{-1/\xi_S}$ with $\xi_S>0$, $m(x)\to\infty$ as $x\to\infty$, and
\begin{equation}\label{eq:condmom}
\sup_{x>z_0}\E\left[\left|X_i^{a\prime}\beta\right|^{p}\ \middle|\ X_i=x\right]<\infty
\quad\text{for some }p>1/\xi_S ,
\end{equation}
then $q\sim S_\varepsilon\circ m$.
Hence Assumption \ref{ass:main}(iii.a) holds with $\xi=\xi_S$ when $m(x)=x$ and with $\xi=\xi_S/\kappa$ when $m\in\RV_\kappa$, so that $\gamma=\lambda\xi/(\lambda+\xi)$.
\item Suppose $m$ is twice continuously differentiable and eventually strictly increasing, $m(x)\to\infty$ as $x\to\infty$, and, for some $\kappa>0$,
\begin{equation}\label{eq:msmooth}
\frac{xm'(x)}{m(x)}\longrightarrow\kappa,\quad
\frac{xm''(x)}{m'(x)}\longrightarrow\kappa-1.
\end{equation}
Let $U_i=X_i^{a\prime}\beta$ have bounded support $\mathcal U$, and suppose $G_x\ll G$ for a fixed probability $G$ on $\mathcal U$ and all large $x$, with $r_x=dG_x/dG$ twice continuously differentiable in $x$.
Suppose $S_\varepsilon$ is positive and twice continuously differentiable in the right tail with $S_\varepsilon'<0$, and write $a_\varepsilon=-S_\varepsilon/S_\varepsilon'$ and $A_m(x)=a_\varepsilon\{m(x)\}/m'(x)$.
If
\begin{align}\label{eq:gumbelshift}
&a_\varepsilon'(t)\longrightarrow0\quad(t\to\infty),\quad
\sup_{u\in\mathcal U}\left|\frac{a_\varepsilon\{m(x)+u\}}{a_\varepsilon\{m(x)\}}-1\right|\longrightarrow0\quad(x\to\infty),\\ \label{eq:tailstable}
&\sup_{u\in\mathcal U}\left\{|r_x(u)-1|+A_m(x)|\partial_x r_x(u)|+A_m(x)^2|\partial_x^2 r_x(u)|\right\}\longrightarrow0
\quad(x\to\infty),
\end{align}
then $q$ satisfies Assumption \ref{ass:main}(iii.b), so under Assumption \ref{ass:main}(i)-(ii), $F_{X\mid Y=0}\in\mathcal D(G_0)$ and $\gamma=0$.
\end{enumerate}
\end{proposition}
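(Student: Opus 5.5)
Part (a) comes first. We write $q(x)/S_\varepsilon\{m(x)\}=\int S_\varepsilon\{m(x)+u\}/S_\varepsilon\{m(x)\}\,dG_x(u)$ and show that this ratio tends to one. The domain of integration splits into $\{|u|\le\delta m(x)\}$ and its complement, with $\delta\in(0,1)$ fixed.

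On the inner region, the uniform convergence theorem for regularly varying functions gives
\[
S_\varepsilon\{m(x)(1+u/m(x))\}/S_\varepsilon\{m(x)\}\to(1+v)^{-1/\xi_S}
\]
uniformly in $v=u/m(x)\in[-\delta,\delta]$. This value lies within $O(\delta)$ of one. By Markov's inequality with \eqref{eq:condmom}, $G_x(|u|>\delta m(x))\le Cm(x)^{-p}\delta^{-p}$.

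On the outer region with $u>\delta m(x)$, the integrand ratio is at most $1$ because $S_\varepsilon$ is decreasing. The contribution is therefore $O(m(x)^{-p})$, which is $o(1)$.

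On the outer region with $u<-\delta m(x)$, the integrand is bounded by $1/S_\varepsilon\{m(x)\}$. This is $m(x)^{1/\xi_S+o(1)}$ by Potter bounds, so the contribution is at most $m(x)^{1/\xi_S-p+o(1)}\to0$ precisely because $p>1/\xi_S$. This is where the moment condition bites.

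Letting $x\to\infty$ and then $\delta\to0$ gives $q\sim S_\varepsilon\circ m$. Composition then yields $S_\varepsilon\circ m\in\RV_{-\kappa/\xi_S}$ (with $\kappa=1$ when $m(x)=x$). Theorem \ref{thm:ident}(a) delivers $\gamma$, and Assumption \ref{ass:main}(ii) supplies $\lambda$.

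For part (b), the plan is to verify the von Mises sufficient condition stated after Assumption \ref{ass:main}: $q'<0$ and $a_q'\to0$ with $a_q=-q/q'$. Bounded support and the smoothness of $r_x$ justify differentiating under the integral in $q(x)=\int S_\varepsilon\{m(x)+u\}r_x(u)\,dG(u)$. This produces $q'$ and $q''$ as integrals of $S_\varepsilon'\,m'r_x+S_\varepsilon\,\partial_xr_x$ and of its derivative.

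The key step is to normalize each term by $S_\varepsilon\{m(x)+u\}$ and express everything in units of $A_m(x)$. For the first-derivative terms:
\[
-S_\varepsilon'(m+u)\,m'/S_\varepsilon(m+u)=m'/a_\varepsilon(m+u)=\{1+o(1)\}/A_m(x)
\]
uniformly in $u$, by \eqref{eq:gumbelshift}.

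For the second-derivative terms, $S_\varepsilon''/S_\varepsilon=(1+a_\varepsilon')/a_\varepsilon^2$ gives a contribution $(m')^2/a_\varepsilon^2\cdot\{1+o(1)\}$. The term $S_\varepsilon'm''$ contributes $-m''/a_\varepsilon$. Since $m''/m'\sim(\kappa-1)/x$ by \eqref{eq:msmooth}, this term is $A_m^{-2}\cdot O(A_m/x)$.

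We also need $A_m(x)/x\to0$. Because $a_\varepsilon(t)/t\to0$ follows from $a_\varepsilon'\to0$, we get $A_m/x=\{a_\varepsilon(m)/m\}\{m/(xm')\}\to0/\kappa=0$. The $r_x$-derivative terms are $o(A_m^{-1})$ and $o(A_m^{-2})$ relative to the leading terms by \eqref{eq:tailstable}, and $r_x\to1$ uniformly.

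Collecting these, $q'=-\{1+o(1)\}q/A_m$ and $q''=\{1+o(1)\}q/A_m^2$, with $q<0$ excluded since $q'<0$ eventually. Then $a_q'=-1+qq''/(q')^2\to-1+1=0$. So $q$ is $\Gamma$ varying, which is Assumption \ref{ass:main}(iii.b), and Theorem \ref{thm:ident}(b) gives $\gamma=0$.

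The main obstacle is making the $o(1)$ bookkeeping in $a_q'$ tight enough. The leading terms of $qq''$ and $(q')^2$ cancel against $-1$, so each error must be shown to be genuinely $o(A_m^{-2})q^2$ and not merely $O$. In particular, the uniform-in-$u$ replacement of $a_\varepsilon(m+u)$ by $a_\varepsilon(m)$ inside the integral must hold at the second-order level. I would try first to handle this with the ratio form in \eqref{eq:gumbelshift} together with $a_\varepsilon'\to0$ applied at $m(x)+u$, uniformly over the bounded set $\mathcal U$.
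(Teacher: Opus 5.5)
Your proposal is correct and follows essentially the paper's route. In (a) you use the same truncation argument: the uniform convergence theorem near $u=0$, Markov's inequality from \eqref{eq:condmom}, and Potter bounds on the far-left region. The paper splits off a fixed window $|u|\le B$ plus a band $B<|u|\le\delta m(x)$ and sends $B\to\infty$, whereas you send $\delta\to0$; the two are equivalent. In (b) you carry out the same computation as the paper, with $B_x=\int S_\varepsilon\{m(x)+u\}\,dG(u)$: $q\sim B_x$, $q'\sim-B_x/A_m$, $q''\sim B_x/A_m^2$, followed by $a_q'=qq''/(q')^2-1\to0$.

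The obstacle you flag at the end is not real. The first-order relative equivalences you already derive give $qq''/(q')^2\to1$ directly. So the ratio condition in \eqref{eq:gumbelshift} together with $a_\varepsilon'\to0$ suffices, and no second-order control of $a_\varepsilon\{m(x)+u\}/a_\varepsilon\{m(x)\}$ is needed.
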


Proposition \ref{prop:multi} shows that in the presence of multiple covariates the tail behavior of the latent error can still be recovered from a single heavy-tailed regressor.
When one covariate dominates the right tail of the index, $F_{X\mid Y=0}$ remains in the same domain of attraction as in the univariate case, so by Theorem \ref{thm:ident} the conditional index is again that of the univariate problem, with $\xi$ the index of $q$.
A Gumbel error yields $\gamma=0$ by part (b).
A regularly varying error with index $\xi_S$ yields reduced-form index $\xi=\xi_S/\kappa$, so that $\gamma=\lambda\xi/(\lambda+\xi)>0$ by part (a).
The diagnostic can therefore be implemented using only the regressor with the heaviest tail, without estimating $\beta$ and without forming the index $m(X_i)+X_i^{a\prime}\beta$.

Proposition \ref{prop:multi} imposes no independence between $X_i$ and $X_i^{a}$, and it shows what dependence costs and what it does not.
When the error is heavy-tailed, correlation between $X_i$ and $X_i^{a}$ has no asymptotic effect provided the conditional law of the auxiliary index remains tight as $X_i$ diverges, which is what \eqref{eq:condmom} enforces through a conditional moment bound of order exceeding $1/\xi_S$.
The reason is that regular variation is insensitive to shifts of bounded order, so a mixture over a conditionally tight shift leaves the index of regular variation unchanged.
Condition \eqref{eq:condmom} holds for every $p$ when the components of $X_i^{a}$ are finite-support discrete or bounded, which is the case in the application below, and it is then not binding.
When some components have unbounded support, \eqref{eq:condmom} restricts how heavy their tails may be relative to that of $\varepsilon_i$, and it is the condition to be argued in an application.

\begin{example}[Firm innovation]\label{ex:innovation}
In models of firm innovation, R\&D involves fixed costs, so the returns to innovation increase with the scale over which an innovation can be applied, generating a positive relationship between firm scale and innovative activity \citep{cohenRepriseSize1996a,cohen2010fifty}. For a given year $t$, let $Y_i=1$ if firm $i$ files at least one patent. The firm's value added in year $t-1$ captures its pre-existing operating scale and serves as the dominating regressor $X_i$.
The auxiliary block $X_i^{a}$ includes ownership indicators, policy variables, and industry and region fixed effects, each of which is finite-support discrete or bounded.
\end{example}

When the auxiliary block is independent of $X_i$ the analysis needs nothing new.
The event in \eqref{eq:multi} is $\{m(X_i)\ge\varepsilon_i-X_i^{a\prime}\beta\}$, so $q=S\circ m$ with $S$ the survival function of the composite error $\varepsilon_i-X_i^{a\prime}\beta$.
This is \eqref{eq:comp}, and Section \ref{sec:primitive} applies with $S$ in place of the survival function of $\varepsilon_i$.
The only question is the tail class of the composite.
A finite-support discrete or bounded auxiliary block leaves a smooth Gumbel-class error satisfying \eqref{eq:gumbelshift} in case (iii.b), since the composite is then a mixture of bounded shifts of $\varepsilon_i$, and normal and logistic errors are leading examples.
Probit and logit with the usual controls therefore satisfy the null, which is what the test requires of them.
The composite leaves the Gumbel class only when $X_i^{a\prime}\beta$ is itself heavy-tailed, and then the rejection is correct: a rapidly varying link in $X_i$ alone is misspecified in that design.

Part (a) of Proposition \ref{prop:multi} allows the auxiliary block to depend on $X_i$ under the regularly varying alternative, with \eqref{eq:condmom} ensuring that the auxiliary index is too thin to affect the tail index of $\varepsilon_i$.
Part (b) covers smooth Gumbel nulls under the error-tail condition \eqref{eq:gumbelshift} and the tail-stability condition \eqref{eq:tailstable}, and the class includes normal and logistic errors.
The scale in \eqref{eq:tailstable} is the reciprocal-hazard scale of $q$ in $x$, $A_m(x)=a_\varepsilon\{m(x)\}/m'(x)$.
Condition \eqref{eq:tailstable} also permits dependence between $X_i$ and $X_i^a$: their conditional relationship may vary throughout the body of the distribution, but the distribution of the bounded auxiliary index must stabilize in the relevant tail.
Together, the two parts justify using the single dominating covariate on both sides of the test without estimating $\beta$.

\begin{remark}[The plug-in index]\label{rem:plugin}
Appendix \ref{app:plugin} gives a plug-in approach that does not require knowing which regressor dominates the tail.
Consider the model $Y_i=I(\widetilde X_i'\delta-\varepsilon_i>0)$, where $\widetilde X_i=(X_i,X_i^a)'$ and $\delta=(1,\beta')'$.
Write the index as $W_i=\widetilde X_i'\delta$, estimate $\delta$ by a distribution-free method such as the maximum score estimator of \citet{Manski1975}, and form $\widehat W_i=\widetilde X_i'\widehat\delta$.
Proposition \ref{prop:plugin} shows that the test applied to $\widehat W_i$ has the same asymptotic size as if $\delta$ were known. 
We introduce this plug-in approach for completeness and use the dominating regressor in the rest of the paper for its simplicity.
\end{remark}

\begin{remark}[Panel data]\label{rem:panel}
The period-specific choice probability is $q_t(x)=\Prob(Y_{it}=0\mid X_{it}=x)$.
It satisfies Assumption \ref{ass:main}(iii) under primitive conditions exactly as in the cross section.
Proposition \ref{prop:panel}(a) records that Theorems \ref{thm:ident} and \ref{thm:size} then apply period by period, with no restriction on the joint law of the covariate and the individual effect.
That is the gain from stating the hypothesis on $q_t$ rather than on a latent error.
The individual effect and any lagged outcome enter $q_t$ and need not be modeled, because the test reads the tail of $q_t$ but not its level.
The incidental parameters problem therefore does not arise.
\end{remark}

\section{Inference}\label{sec:inference}

Theorem \ref{thm:ident} converts a hypothesis on the latent error into a hypothesis on the extreme value index of an observable, and this section constructs a test of it.
Section \ref{sec:test} derives the statistic and its null distribution, and Sections \ref{sec:power} to \ref{sec:uniform} study power, the choice of $k$, and uniform size.

\subsection{The Test}\label{sec:test}

We again illustrate the procedure using the right tail.
Theorem \ref{thm:ident} directly implies the implementation: under the null hypothesis the normalized spacings of the largest $k$ order statistics of $X_i\mid Y_i=0$ follow a fixed-$k$ Gumbel limit with $\gamma=0$, whereas systematic departures indicate $\gamma>0$.
Let $n_0=\sum_iI(Y_i=0)$, let $X^{(0)}_{n_0:n_0}\ge X^{(0)}_{n_0:n_0-1}\ge\cdots$ be the order statistics of $\{X_i:Y_i=0\}$, fix $k\ge3$, and collect the largest $k$ in $\mathbf X^{(0)}$.
By Theorem \ref{thm:ident}(a) or (b) and the joint limit law of the largest $k$ order statistics \citep[Theorem 2.8.2]{Galambos1978}, there are sequences $a_{n_0}>0$ and $b_{n_0}$ with $(\mathbf X^{(0)}-b_{n_0})/a_{n_0}\dto\mathbf V=(V_1,\ldots,V_k)$, whose density is $G_\gamma(v_k)\prod_{j\le k}g_\gamma(v_j)/G_\gamma(v_j)$ on $v_k\le\cdots\le v_1$ with $g_\gamma=G_\gamma'$.
The normalization constants $a_{n_0}$ and $b_{n_0}$ are unknown, so we use the maximal invariant statistic to location and scale transformations,
\begin{equation}\label{eq:Xstar}
\mathbf X^{*(0)}=\frac{\mathbf X^{(0)}-X^{(0)}_{n_0:n_0-k+1}}{X^{(0)}_{n_0:n_0}-X^{(0)}_{n_0:n_0-k+1}}\ \dto\ \mathbf V^{*}=\frac{\mathbf V-V_k}{V_1-V_k},
\end{equation}
whose density on $\mathcal V=\{v^*:1=v_1^*\ge\cdots\ge v_k^*=0\}$ is
\begin{align}\label{eq:vstar}
f_{\mathbf V^{*}\mid\gamma}(v^{*})&=(k-1)!\int_0^{\infty}t^{k-2}\exp\left\{-\left(1+\tfrac1\gamma\right)\sum_{j=1}^{k}\log\left(1+\gamma v_j^{*}t\right)\right\}dt,\\
\label{eq:vstarnull}
f_{\mathbf V^{*}\mid0}(v^{*})&=\frac{(k-1)!(k-2)!}{\left(\sum_{j}v_j^{*}\right)^{k-1}} ,
\end{align}
where the second expression follows from the first as $\gamma\downarrow0$.
The limit law depends on no parameter other than $\gamma$, so the problem
\begin{equation}\label{eq:hypo}
H_0:\gamma=0\quad\text{against}\quad H_1:\gamma>0
\end{equation}
is one-dimensional with a simple null, and by Theorem \ref{thm:ident}(a) and (b) it is the hypothesis that $q$ is rapidly varying against the hypothesis that $q$ is regularly varying with a finite index.
Following \citet{andrews1994optimal}, we maximize weighted average power with weight $w$ on $(0,\bar\gamma]$ with some upper bound $\bar\gamma>0$, giving
\begin{equation}\label{eq:LR}
R_k=\frac{\int_0^{\bar\gamma}f_{\mathbf V^{*}\mid\gamma}(\mathbf X^{*(0)})w(\gamma)d\gamma}{f_{\mathbf V^{*}\mid0}(\mathbf X^{*(0)})},
\quad
\varphi_k=I\{R_k>\mathrm{cv}(\alpha,k)\}.
\end{equation}
In practice, we set $w$ to be the uniform distribution on $(0,1]$ throughout and use the critical values tabulated in Table \ref{tab:cv} of Appendix \ref{app:sim}.\footnote{The critical values are simulated by drawing $\Gamma_j=\sum_{i\le j}E_i$ with $E_i$ i.i.d.\ standard exponential, setting $V_j=-\log\Gamma_j$, forming $\mathbf V^{*}$, evaluating \eqref{eq:vstar} by Gauss-Legendre quadrature after $t=e^{z}$ centered at $z=\log\{(k-1)/\sum_jv_j^{*}\}$, averaging over a $\gamma$ grid, and taking the $1-\alpha$ quantile.}
The fixed-$k$ null distribution of $R_k$ does not depend on $n$, so rejection is not mechanical in large samples.
The following theorem establishes the asymptotic size control.

\begin{theorem}\label{thm:size}
Suppose $\{(Y_i,X_i)\}_{i=1}^{n}$ is an i.i.d.~sample with $\Prob(Y_i=0)\in(0,1)$ and $F_{X\mid Y=0}\in\mathcal D(G_0)$.
Then for any fixed $k\ge3$ and any $\alpha\in(0,1)$, $\lim_{n\to\infty}\Prob(\varphi_k=1)=\alpha$.
\end{theorem}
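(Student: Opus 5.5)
The plan is to condition on the random subsample size $n_0$, apply the fixed-$k$ extreme value limit within the subsample, and then pass the limit through the statistic $R_k$ by the continuous mapping theorem.

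\emph{Step 1: reduce to an i.i.d. subsample of diverging size.} Given the labels $(Y_1,\ldots,Y_n)$, the values $\{X_i:Y_i=0\}$ are i.i.d.\ draws from $F_{X\mid Y=0}$, and this holds whatever the dependence between $X_i$ and $Y_i$. Since $\Prob(Y_i=0)\in(0,1)$, the strong law gives $n_0/n\to\Prob(Y_i=0)>0$ almost surely. In particular $n_0\to\infty$ and $n_0\ge k$ with probability tending to one. It therefore suffices to show that, conditional on $n_0=m$, $\Prob(\varphi_k=1\mid n_0=m)\to\alpha$ as $m\to\infty$. The unconditional statement then follows by dominated convergence, because the conditional probability is bounded by one.

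\emph{Step 2: joint convergence of the top $k$ order statistics.} Because $F_{X\mid Y=0}\in\mathcal D(G_0)$, there exist $a_m>0$ and $b_m$ such that $(\mathbf X^{(0)}-b_m)/a_m\dto\mathbf V$. Here $\mathbf V$ has the joint density of the largest $k$ Gumbel extremes, by \citet[Theorem 2.8.2]{Galambos1978}. Equivalently, $V_j=-\log\Gamma_j$ with $\Gamma_j$ the partial sums of i.i.d.\ standard exponentials. The map $v\mapsto(v-v_k)/(v_1-v_k)$ is continuous wherever $v_1>v_k$. Under the limit law, $V_1>V_k$ almost surely for $k\ge2$. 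The continuous mapping theorem then gives \eqref{eq:Xstar}: $\mathbf X^{*(0)}\dto\mathbf V^*$, whose law is $f_{\mathbf V^*\mid0}$ in \eqref{eq:vstarnull}. The unknown constants $a_m,b_m$ cancel by invariance.

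\emph{Step 3: continuity of $R_k$ and of its limit distribution; this is the main obstacle.} Two facts are needed. First, $R_k$, viewed as a function $R(v^*)$ on $\mathcal V$, must be continuous on a set of full $\mathbf V^*$-probability. Second, the limit law of $R(\mathbf V^*)$ must have no atom at $\mathrm{cv}(\alpha,k)$, where the critical value is defined as the $1-\alpha$ quantile of that limit law.

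For the first fact, the denominator of $R_k$ is continuous and positive on $\mathcal V$, because $\sum_j v_j^*\ge v_1^*=1$. For the numerator, consider $\gamma\in(0,\bar\gamma]$ and $v^*\in\mathcal V$, and set $s=\sum_j v_j^*\ge1$. The integrand in \eqref{eq:vstar} is continuous in $(v^*,t,\gamma)$. It is dominated by
\[
t^{k-2}\exp\left\{-\left(1+\tfrac1{\bar\gamma}\right)\log(1+\gamma t)\right\}.
\]
Since $k\ge3$, I need to check integrability of this bound near $\gamma\downarrow0$. On $\gamma\le\gamma_0$, I would use the uniform bound $(1+\gamma v t)^{-(1+1/\gamma)}\le \exp(-c\,vt)$ for $vt$ in compact sets, and split the $t$-integral. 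Alternatively, following the limit \eqref{eq:vstarnull}, the map $\gamma\mapsto f_{\mathbf V^*\mid\gamma}(v^*)$ extends continuously to $[0,\bar\gamma]$, jointly in $(\gamma,v^*)$ on the compact set $[0,\bar\gamma]\times\mathcal V$. Hence it is bounded there, and dominated convergence in $\gamma$ against $w$ yields continuity of $R$ on $\mathcal V$.

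For the second fact, $R(\mathbf V^*)$ is a nonconstant real-analytic function of a vector with a Lebesgue density on the $(k-2)$-dimensional interior of $\mathcal V$. Its level sets therefore have measure zero, and its distribution is continuous.

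\emph{Step 4: conclude.} With Steps 2 and 3 in hand, $R_k\dto R(\mathbf V^*)$ conditionally on $n_0=m$ as $m\to\infty$. Because the limit distribution is continuous at $\mathrm{cv}(\alpha,k)$, $\Prob(R_k>\mathrm{cv}(\alpha,k)\mid n_0=m)\to\Prob(R(\mathbf V^*)>\mathrm{cv}(\alpha,k))=\alpha$. Step 1 then delivers the unconditional limit $\lim_{n\to\infty}\Prob(\varphi_k=1)=\alpha$.
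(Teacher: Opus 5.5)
Your outline matches the paper's proof. You condition on the labels, obtain $h_\ell\to\alpha$ from the fixed-$k$ limit of $\mathbf X^{*(0)}$ together with the absence of an atom of $R_k(\mathbf V^{*})$ at $\mathrm{cv}(\alpha,k)$, and finish with $n_0\to\infty$ almost surely and bounded convergence. Steps 1, 2 and 4 are fine.

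\textbf{The gap is in Step 3.} Your continuity argument there is false as written. Neither $(\gamma,v^{*})\mapsto f_{\mathbf V^{*}\mid\gamma}(v^{*})$ on $[0,\bar\gamma]\times\mathcal V$ nor $R_k$ on $\mathcal V$ is bounded.
\begin{itemize}
\item At the corner $v^{*}=(1,0,\ldots,0)\in\mathcal V$, the integrand in \eqref{eq:vstar} is $t^{k-2}(1+\gamma t)^{-(1+1/\gamma)}\asymp t^{k-3-1/\gamma}$ as $t\to\infty$. This is not integrable once $\gamma\ge1/(k-2)$, and such $\gamma$ lie in $(0,\bar\gamma]$ for $\bar\gamma=1$ and every $k\ge3$.
\item The proof of Lemma \ref{lem:analytic} shows that even after averaging against $w$, $R_k\{v^{*}(\epsilon)\}\to\infty$ as interior points approach that corner.
\item Your dominating function $t^{k-2}(1+\gamma t)^{-(1+1/\bar\gamma)}$ fails for the same reason: it is integrable only if $1/\bar\gamma>k-2$. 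The difficulty is at $t\to\infty$ for $\gamma$ bounded away from zero, not at $\gamma\downarrow0$.
\end{itemize}
Keeping only the factor with $v_1^{*}=1$ discards the decay contributed by the other nonzero coordinates. That decay genuinely disappears as those coordinates approach zero.

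\textbf{The repair.} As you say yourself, you only need continuity on a set of full $\mathbf V^{*}$-probability, and that is the paper's route.
\begin{itemize}
\item On a compact $K$ in the interior of $\mathcal V$ one has $\min_{j\le k-1}v_j^{*}\ge c_0>0$. The integrand is then bounded by $t^{k-2}(1+\gamma c_0t)^{-(k-1)(1+1/\gamma)}$.
\item Uniformly over $\gamma\in(0,1]$, this is at most a constant multiple of $t^{k-2}(1+c_0t)^{-2(k-1)}$, which is integrable. This gives continuity, indeed analyticity, of $R_k$ on the interior.
\item Since $V_1>\cdots>V_k$ almost surely, $\Prob_0(\mathbf V^{*}\in\partial\mathcal V)=0$.
\item The portmanteau theorem then applies, because the boundary of the rejection region is contained in $\{R_k=\mathrm{cv}(\alpha,k)\}\cup\partial\mathcal V$. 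This is exactly the paper's argument.
\end{itemize}

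\textbf{A second, smaller gap.} Your no-atom step merely asserts that $R_k$ is real analytic and nonconstant; both need proof. Lemma \ref{lem:analytic} supplies analyticity by holomorphic extension and Morera's theorem. It obtains nonconstancy precisely from the divergence at the corner that your Step 3 rules out.

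With these two corrections your argument coincides with the paper's.
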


The hypothesis $F_{X\mid Y=0}\in\mathcal D(G_0)$ holds under Assumption \ref{ass:main} with part (iii.b).
Furthermore, many laws in the Gumbel class used in practice are symmetric, so both tails are informative.
The right tail statistic uses $X_i\mid Y_i=0$ and the left tail statistic uses $-X_i\mid Y_i=1$, and the two are computed from disjoint subsamples.
The left tail requires an infinite left endpoint, so for nonnegative covariates only the right tail is available.
Write $p_R$ and $p_L$ for the $p$-values of $R_k$ in the two tails, each obtained from the null law of Theorem \ref{thm:size} and each asymptotically uniform on $(0,1)$ under its own null.
Proposition \ref{prop:panel} records the two combination rules used below, one for the two tails within a period and one across periods of a panel, and the panel result that makes the latter necessary.

Consider the panel data model
\[
Y_{it}=I\left(X_{it}+X_{it}^{a\prime}\beta +\rho Y_{i,t-1}+\alpha_i-\varepsilon_{it}\ge0\right),
\]
where $\alpha_i$ is the unobserved individual effect that is constant over $t$, $\rho$ captures the persistence of outcome $Y_{it}$, and $\varepsilon_{it}$ denotes the idiosyncratic error term. 
Let $p_{t,j}$ denote the $p$-value in period $t$ and tail $j\in\{L,R\}$.

\begin{proposition}\label{prop:panel}
\begin{enumerate}[(a).]
\item Suppose $\{(\{Y_{it},X_{it}\}_{t=1}^{T},\alpha_i)\}_{i=1}^{n}$ is i.i.d.\ across $i$ with $T$ fixed, and suppose that for each $t$, $\Prob(Y_{it}=0)\in(0,1)$ and the pair $(F_{X_t},q_t)$, with $q_t(x)=\Prob(Y_{it}=0\mid X_{it}=x)$, satisfies Assumption \ref{ass:main}(ii) and one of the three parts of Assumption \ref{ass:main}(iii).
Then Theorem \ref{thm:ident} applies period by period under the corresponding part of Assumption \ref{ass:main}(iii).
Under part (iii.b), Theorem \ref{thm:size} also applies to the corresponding period-$t$ subsample.
\item Under the sampling conditions of part (a), suppose Assumption \ref{ass:main} and its mirror hold with part (iii.b), so that $\gamma=\gamma^{-}=0$.
Then $p_L$ and $p_R$ are asymptotically independent and $\Prob(\min\{p_L,p_R\}<1-(1-\alpha)^{1/2})\to\alpha$.
\item Suppose the conditions of part (b) hold in every period $t$.
Let
\(
p_t^{S}=1-\left(1-\min\{p_{t,L},p_{t,R}\}\right)^{2}
\)
denote the period-$t$ \v{S}id\'ak-combined $p$-value. Then
\(
\limsup_{n\to\infty}\Prob\left\{\min_{t\le T}p_t^{S}<\frac{\alpha}{T}\right\}\le\alpha.
\)
\end{enumerate}
\end{proposition}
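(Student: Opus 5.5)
Part (a) is a reduction to the cross-sectional results, applied one period at a time. Fix $t$ and consider the marginal sample $\{(Y_{it},X_{it})\}_{i=1}^n$. Because the full vectors $(\{Y_{is},X_{is}\}_{s\le T},\alpha_i)$ are i.i.d.\ across $i$, the period-$t$ pairs are themselves i.i.d.\ with some joint law. Neither Theorem \ref{thm:ident} nor Theorem \ref{thm:size} uses anything beyond the joint law of $(Y,X)$:
\begin{itemize}
\item Theorem \ref{thm:ident} starts from Bayes' rule, $f_{X_t\mid Y_t=0}=q_tf_{X_t}/\Prob(Y_{it}=0)$, and then uses only Assumption \ref{ass:main}(i)--(iii) for $(F_{X_t},q_t)$.
\item Theorem \ref{thm:size} uses only i.i.d.\ sampling, $\Prob(Y=0)\in(0,1)$, and $F_{X\mid Y=0}\in\mathcal D(G_0)$. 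The last condition is supplied by Theorem \ref{thm:ident}(b) under (iii.b).
\end{itemize}
The individual effect $\alpha_i$, the lag $Y_{i,t-1}$ and $X^a_{it}$ are integrated out inside $q_t$, so nothing more needs checking. I would also note that, conditional on $n_0^{(t)}$, the subsample $\{X_{it}:Y_{it}=0\}$ is i.i.d.\ from $F_{X_t\mid Y_t=0}$ and $n_0^{(t)}\to\infty$ almost surely. This is exactly the step in the proof of Theorem \ref{thm:size} that carries over.

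For part (b), the right-tail statistic is a continuous function of the top $k$ order statistics of $\{X_{it}:Y_{it}=0\}$. The left-tail statistic is the same function of the top $k$ order statistics of $\{-X_{it}:Y_{it}=1\}$. The two subsamples are disjoint. Conditional on the labels $(Y_{1t},\dots,Y_{nt})$, the $X_{it}$ are independent across $i$ with laws $F_{X\mid Y=0}$ or $F_{X\mid Y=1}$, so the two statistics are exactly conditionally independent given the labels. Conditional on the labels, each normalized vector converges to its fixed-$k$ Gumbel limit along any label sequence with $n_0,n_1\to\infty$, which holds almost surely. Hence the conditional joint law of $(R_k^R,R_k^L)$ converges to the product of the two null laws of Theorem \ref{thm:size}. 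Dominated convergence then gives the unconditional joint limit.

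Because the limit law of $R_k$ is continuous (it has a density on $\mathcal V$), the $p$-values converge jointly to independent $U(0,1)$ variables. Therefore $\Prob(\min\{p_L,p_R\}<c)\to1-(1-c)^2$, and $c=1-(1-\alpha)^{1/2}$ gives $\alpha$. The main obstacle is making this conditional argument clean: the order-statistic limits must hold uniformly enough in the random subsample sizes. I would handle it through almost-sure convergence of $n_0/n$ and $n_1/n$ and the deterministic-$n$ limit along subsequences.

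For part (c), within each period part (b) shows that $p_t^S$ satisfies $\Prob(p_t^S<u)\to u$ for every $u\in(0,1)$. Indeed, $p_t^S<u$ is equivalent to $\min\{p_{t,L},p_{t,R}\}<1-(1-u)^{1/2}$. Across periods the statistics may be arbitrarily dependent, through $\alpha_i$ and $Y_{i,t-1}$, so I would use only the Bonferroni bound
\[
\Prob\Bigl\{\min_{t\le T}p_t^S<\tfrac{\alpha}{T}\Bigr\}\le\sum_{t\le T}\Prob\Bigl(p_t^S<\tfrac{\alpha}{T}\Bigr).
\]
Taking $\limsup$ with $T$ fixed makes each term tend to $\alpha/T$, so the limsup is at most $\alpha$. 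This also explains why $T$ must be fixed and why the result is an inequality rather than an equality.
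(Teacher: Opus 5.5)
Your proposal is correct and follows the paper's proof essentially step for step: in (a) you reduce to the cross-sectional results one period at a time; in (b) you condition on the outcome labels to get exact conditional independence of the two disjoint-subsample statistics and then pass to the unconditional limit by bounded convergence; in (c) you apply Boole's inequality across periods. One small correction is that the null law of $R_k(\mathbf V^{*})$ has no atoms, which you need for the $p$-values to be asymptotically uniform, but this does not follow merely from $\mathbf V^{*}$ having a density on $\mathcal V$; it requires Lemma \ref{lem:analytic}, which shows that $R_k$ is real analytic and nonconstant, so its level sets are Lebesgue-null.
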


Part (a) extends Remark \ref{rem:panel} to period-specific inference without estimating common coefficients or individual effects, thereby permitting correlated effects and nonstationarity without an incidental parameters problem.
Part (b) combines the two tails within a period, and the two are asymptotically independent because they are computed from disjoint subsamples, the right tail from $\{X_i:Y_i=0\}$ and the left from $\{-X_i:Y_i=1\}$.
The rule is therefore exact rather than conservative, which is why we use \citet{Sidak1967} to combine the two tails within each period.
Part (c) combines the resulting period-level $p$-values by Bonferroni, allowing arbitrary serial dependence.
That matters in practice, since a persistent covariate places the same individuals in the tail in every period, so the period-specific statistics are close to perfectly dependent and a test relying on the independence across time would overreject.

\subsection{Power}\label{sec:power}

Because $k$ is held fixed, the attainable power is bounded, and that bound is the benchmark against which the simulations of Section \ref{sec:mc} and the application of Section \ref{sec:emp} should be read.
This section computes it.
Write $\pi_k(\gamma)$ for the asymptotic power of \eqref{eq:LR} in the limit experiment and $\pi_k^{\mathrm{PO}}(\gamma_1)$ for the power at $\gamma_1$ of the level-$\alpha$ test based on $f_{\mathbf V^{*}\mid\gamma_1}/f_{\mathbf V^{*}\mid0}$.

\begin{proposition}\label{prop:power}
In the limit experiment, let $\varphi$ range over the level-$\alpha$ tests that are invariant to increasing affine transformations of $X$.
\begin{enumerate}[(a).]
\item $\varphi_k$ maximizes $\int_0^{\bar\gamma}\pi(\gamma)w(\gamma)d\gamma$, and $\pi(\gamma_1)\le\pi_k^{\mathrm{PO}}(\gamma_1)$ for any $\gamma_1>0$.
\item For any fixed $k\ge3$ and any $\gamma>0$, $\pi_k(\gamma)<1$.
\end{enumerate}
\end{proposition}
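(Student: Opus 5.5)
The plan is to reduce the limit experiment to a one-parameter family of densities on $\mathcal V$ and then apply Neyman--Pearson. Any test that is invariant to increasing affine maps $x\mapsto a+bx$ with $b>0$ depends on the largest $k$ order statistics only through the maximal invariant $\mathbf V^{*}$ in \eqref{eq:Xstar}. This holds because $\mathbf V^{*}$ is invariant, and any invariant function of $\mathbf V$ equals its value at $(\mathbf V-V_k)/(V_1-V_k)$, taking $a=-V_k/(V_1-V_k)$ and $b=1/(V_1-V_k)$. So the class of invariant level-$\alpha$ tests coincides with the class of level-$\alpha$ tests $\varphi(\mathbf V^{*})$ in the experiment $\{f_{\mathbf V^{*}\mid\gamma}:\gamma\ge0\}$, with densities given by \eqref{eq:vstar} and \eqref{eq:vstarnull}. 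The null in this experiment is simple.

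For part (a), write $\int_0^{\bar\gamma}\pi(\gamma)w(\gamma)d\gamma=\int\varphi(v^{*})\bar f(v^{*})dv^{*}$, where $\bar f=\int_0^{\bar\gamma}f_{\mathbf V^{*}\mid\gamma}w(\gamma)d\gamma$ is a probability density on $\mathcal V$. The interchange of integrals is justified by Fubini, since all terms are nonnegative. Weighted average power is therefore the power against the simple alternative $\bar f$. The Neyman--Pearson lemma then says that the test rejecting when $\bar f/f_{\mathbf V^{*}\mid0}=R_k$ exceeds its null $1-\alpha$ quantile is most powerful, which is $\varphi_k$. Exact size $\alpha$ needs $R_k$ to have a continuous null law, or else a randomization on ties; I would note that $R_k$ is a nonconstant real-analytic function of $v^{*}$ under a density, so ties have measure zero. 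The second claim applies Neyman--Pearson again to the simple pair $(f_{\mathbf V^{*}\mid0},f_{\mathbf V^{*}\mid\gamma_1})$: for any invariant level-$\alpha$ $\varphi$, $\pi(\gamma_1)=\E_{\gamma_1}\varphi\le\E_{\gamma_1}\varphi^{\mathrm{PO}}=\pi_k^{\mathrm{PO}}(\gamma_1)$.

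For part (b), it suffices to show that $f_{\mathbf V^{*}\mid\gamma}$ and $f_{\mathbf V^{*}\mid0}$ are mutually absolutely continuous on $\mathcal V$, i.e., both are strictly positive on the same set of positive Lebesgue measure in the $(k-2)$ free coordinates. Granting this, the rejection region $\{R_k>\mathrm{cv}\}$ has null probability $\alpha<1$, so its complement $A$ has positive $f_{\mathbf V^{*}\mid0}$-measure. Hence $A$ has positive Lebesgue measure, hence positive $f_{\mathbf V^{*}\mid\gamma}$-measure, which gives $1-\pi_k(\gamma)>0$. Positivity follows from the formulas. For $\gamma>0$ and $v^{*}\in\mathcal V$, all $v_j^{*}\in[0,1]$, so $1+\gamma v_j^{*}t\ge1$. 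The integrand in \eqref{eq:vstar} is therefore positive and continuous. It is also integrable: near $t=0$ it behaves like $t^{k-2}$, and at infinity it decays like $t^{k-2}\prod_j(1+\gamma v_j^{*}t)^{-1-1/\gamma}$, where at least $v_1^{*}=1$ contributes. I need to check the tail exponent carefully. With $v_1^{*}=1$ and $v_k^{*}=0$, the decay is $t^{k-2-(1+1/\gamma)m}$, where $m$ is the number of $v_j^{*}>0$. On the interior, $m=k-1$, and $(k-1)(1+1/\gamma)>k-1$ gives integrability. The null density \eqref{eq:vstarnull} is positive since $\sum_j v_j^{*}\ge1$.

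The main obstacle is the support and integrability check in (b), together with the justification that invariance reduces exactly to functions of $\mathbf V^{*}$. The latter fails on the null set where $V_1=V_k$, which is why I would handle it by stating that the events differ only on sets of probability zero under every $\gamma$.
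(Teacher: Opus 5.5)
Your proposal is correct and follows essentially the same route as the paper. You reduce to the maximal invariant $\mathbf V^{*}$, apply Neyman--Pearson to the simple pairs $(f_{\mathbf V^{*}\mid0},\bar f_w)$ and $(f_{\mathbf V^{*}\mid0},f_{\mathbf V^{*}\mid\gamma_1})$ with exact size from the atomless null law of $R_k$, and get (b) from strict positivity of the densities on the interior of $\mathcal V$ and mutual absolute continuity; the real-analyticity and nonconstancy of $R_k$ that you assert, and your integrability check of the integrand in \eqref{eq:vstar}, are exactly what the paper establishes separately in Lemma \ref{lem:analytic}.
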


Part (a) shows that the proposed test maximizes the weighted average power, which is less than the power envelope by construction.
Part (b) states that the test is not consistent, and this is a consequence of the framework rather than a shortcoming to be corrected by letting $k$ grow.
Fixing $k$ is what makes \eqref{eq:vstar} an exact limit law free of nuisance parameters, so that Theorem \ref{thm:size} requires no second-order condition, no intermediate sequence, and no bandwidth.
Allowing $k=k_n\to\infty$ would restore consistency but would reintroduce all three, and the resulting size would depend on a rate condition of the form $\sqrt{k_n}A(n/k_n)\to0$ that is not verifiable from data, where $A(n/k_n)$ characterizes the second-order property of the underlying distribution (see Appendix \ref{app:uniform}).
We regard the loss of consistency as the cost of exactness and state it explicitly.
Section \ref{sec:k} accordingly treats $k$ as a design choice before the data are used.

What the test detects is $\gamma$ while the object of interest is $\xi$, and since $\gamma=\lambda\xi/(\lambda+\xi)$,
\[
\gamma-\xi=-\frac{\xi^{2}}{\lambda+\xi},
\quad
\gamma-\xi+\frac{\xi^{2}}{\lambda}=\frac{\xi^{3}}{\lambda(\lambda+\xi)},
\]
so $|\gamma-\xi|\le\xi^{2}/\lambda$ and $|\gamma-\xi+\xi^{2}/\lambda|\le\xi^{3}/\lambda^{2}$ for every $\lambda>0$ and $\xi>0$, and the expansion $\gamma=\xi-\xi^{2}/\lambda+O(\xi^{3})$ holds with an explicit remainder and no restriction on $\xi$ relative to $\lambda$.
Under the baseline model, $\xi$ is also
the extreme-value index of the error. Since $\gamma\mapsto\pi_k(\gamma)$ is continuously differentiable, by differentiation under the integral sign in \eqref{eq:vstar}, the feasible test and the infeasible test that observes the top $k$ order statistics of $\varepsilon_i$ directly have power functions differing by $O(\xi^{2})$ near the null.
The latency of the error therefore costs nothing to first order, and the heaviness of the covariate does not enter the first order approximation.

Table \ref{tab:envelope} reports the power functions.
We find that our test achieves nearly the optimal power at every point, which is a substantial advantage.
For comparison, we compute the \citet{Pickands1975} estimator, which is also invariant to location and scale and therefore a function of $\mathbf V^{*}$.
Figure \ref{fig:power} in Appendix \ref{app:pickands} sets the Pickands critical value by simulation so that the comparison isolates power.
At $k=50$ the Pickands test has power $0.254$ at $\gamma=0.45$ and $0.630$ at $\gamma=1$, compared with $0.810$ and $0.995$ for $\varphi_k$.
If the critical value of the Pickands test is instead taken from the asymptotic standard error $\sqrt3/\{2(\log2)^{2}\sqrt{j_k}\}$ with $j_k=\lfloor k/4\rfloor$, the size is $0.044$ at $k=25$ and $0.040$ at $k=50$, so the deficiency in the limit experiment is in power rather than in size.

\begin{table}[tp]
\centering
\begin{tabular}{l c c c c c c c c}
\hline\hline
 & \multicolumn{8}{c}{$\gamma$} \\
\cline{2-9}
 & 0.05 & 0.10 & 0.20 & 0.25 & 1/3 & 0.50 & 0.75 & 1.00 \\
\hline
$k=10$ & \multicolumn{8}{c}{} \\
\quad test \eqref{eq:LR} & 0.065 & 0.086 & 0.134 & 0.162 & 0.211 & 0.320 & 0.472 & 0.596 \\
\quad power envelope & 0.068 & 0.087 & 0.137 & 0.165 & 0.213 & 0.322 & 0.473 & 0.599 \\
\quad abs.\ difference & 0.003 & 0.001 & 0.003 & 0.003 & 0.002 & 0.002 & 0.001 & 0.003 \\
$k=25$ & \multicolumn{8}{c}{} \\
\quad test \eqref{eq:LR} & 0.085 & 0.129 & 0.247 & 0.315 & 0.425 & 0.622 & 0.822 & 0.921 \\
\quad power envelope & 0.086 & 0.131 & 0.249 & 0.315 & 0.424 & 0.620 & 0.822 & 0.923 \\
\quad abs.\ difference & 0.001 & 0.002 & 0.002 & 0.000 & 0.001 & 0.002 & 0.000 & 0.002 \\
$k=50$ & \multicolumn{8}{c}{} \\
\quad test \eqref{eq:LR} & 0.103 & 0.183 & 0.388 & 0.492 & 0.651 & 0.860 & 0.971 & 0.995 \\
\quad power envelope & 0.106 & 0.187 & 0.390 & 0.492 & 0.651 & 0.860 & 0.972 & 0.996 \\
\quad abs.\ difference & 0.003 & 0.004 & 0.002 & 0.000 & 0.000 & 0.000 & 0.001 & 0.001 \\
\hline
\end{tabular}
\caption{Power of the test \eqref{eq:LR} and the point-optimal power envelope at the $5\%$ level.
The envelope is the rejection probability of the Neyman-Pearson test of $\gamma=0$ against the indicated $\gamma$, which is infeasible because $\gamma$ is unknown.
Critical values are those of Table \ref{tab:cv} in Appendix \ref{app:sim}.
Based on 200,000 simulation draws for each entry.}\label{tab:envelope}
\end{table}

\subsection{Choosing \texorpdfstring{$k$}{k}}\label{sec:k}

The Fisher information of the limit experiment at the null, $\mathcal I_k=\var_0[\partial_\gamma\log f_{\mathbf V^{*}\mid\gamma}(\mathbf V^{*})|_{\gamma=0}]$, equals $4.85$, $18.76$, $43.65$ and $95.35$ at $k=10$, $25$, $50$ and $100$, so $k-\mathcal I_k$ lies between $4.6$ and $6.4$ across that range and the reduction to the maximal invariant costs a bounded amount of information rather than a fixed proportion of it.
Let $\Phi$ denote the standard normal distribution function and $z_p$ its $1-p$ quantile.
The implied approximation $\pi_k(\gamma)\approx\Phi(\sqrt{\mathcal I_k}\gamma-z_\alpha)$ reproduces Table \ref{tab:envelope} closely for moderate $\gamma$, so attaining power $1-\beta$ against a target $\gamma^{\ast}$ requires
\begin{equation}\label{eq:krule}
k\gtrsim\left(\frac{z_\alpha+z_\beta}{\gamma^{\ast}}\right)^{2},
\end{equation}
which gives $k\approx100$ for $\gamma^{\ast}=0.25$ and $k\approx56$ for $\gamma^{\ast}=1/3$ at $\alpha=0.05$ and power $0.8$.
The detectable magnitude is therefore of order $k^{-1/2}$ and does not improve with $n$, which is the content of Proposition \ref{prop:power}(b).

Rule \eqref{eq:krule} is a lower bound driven by power, while the upper bound is driven by the extreme value approximation, which deteriorates as $k$ grows relative to $n_0$.
Table \ref{tab:cross_sectional} shows the resulting size distortion at $k=70$ with $n=1000$.
We therefore recommend reporting the test at several values of $k$ rather than at one, as we do in Table \ref{tab:innovation}, and reading the results as a robustness check.
Agreement across $k$ is evidence that the extreme value approximation is adequate over the range used, and disagreement is a warning that it is not.

\subsection{Uniform Size}\label{sec:uniform}

Theorem \ref{thm:size} is pointwise in $F$.
No uniform statement is available over the unrestricted null, and uniformity is restored on a second-order restricted subclass.
Theorem \ref{thm:uniform} in Appendix \ref{app:uniform} states both, and the construction behind the first is a law that is exactly Pareto below a threshold $M$ and has a constant reciprocal hazard beyond it.
Such a law lies in the null for every $M$, yet its largest $k$ order statistics are exactly Pareto with probability approaching one as $M$ grows at fixed $n$, so the test rejects it at the power it would have against a genuine alternative.
The restriction that repairs this is the second-order condition of extreme value theory imposed with constants common to the class, which the construction violates because its remainder is negligible only beyond $M$.

\section{Monte Carlo Simulations} \label{sec:mc}

We assess finite sample size and power of the proposed test \eqref{eq:LR} in both cross-sectional and panel designs in Sections \ref{sec:simu:cross-sectional} and \ref{sec:simu:panel}, respectively.

\subsection{Cross-Sectional Model}\label{sec:simu:cross-sectional}
We start with the cross-sectional case and generate data from the following model
\[
    Y_i = I\left(X_{i} + X_{1,i}^a + X_{2,i}^a - \varepsilon_i \ge 0\right),
\]
where the error term \(\varepsilon_i\) is drawn from four distributions: standard normal, logistic, and two Student-\(t\) distributions with different degrees of freedom. The normal and logistic distributions represent the null hypothesis, while the Student-\(t\) distributions serve as heavy-tailed alternatives. The dominating heavy-tailed covariate \(X_i\) is drawn from a Student-$t(2)$ distribution. The other two components, \(X_{1,i}^a\) and \(X_{2,i}^a\), are thin-tailed and drawn from a standard normal distribution and a binary distribution on $\pm1/2$ with equal probability, respectively. 
We consider sample sizes of \(n \in \{1000,2000,5000 \} \).

We apply the test using only the dominating covariate $X_i$, so no model estimation is required, which in turn reduces potential misspecification issues. We report rejection rates for the left and right tails of $\varepsilon_i$ separately in columns labeled Left and Right in Table \ref{tab:cross_sectional}, and for the joint null that both tails are thin using \v{S}id\'ak critical values in columns labeled Both. Results are shown for various numbers of tail observations $k\in\{10,25,50,70\}$. 

We summarize the findings in Table \ref{tab:cross_sectional} as follows. First, size is well controlled under the null for both normal and logistic errors, lying in the Gumbel domain. The test yields slightly more conservative rejection rates under normal than under logistic errors, since the normal tail decays as $\exp(-cx^2)$ and the logistic as $\exp(-cx)$. Moreover, mild size distortions arise when $k$ is large relative to $n$, such as $k=70$ with relatively small $n$, as too many mid-sample order statistics then enter the tail approximation and the extreme value approximation becomes rough.

Second, the test exhibits strong power against heavy-tailed alternatives, even for small $k$ such as $k=10$. Power increases with $k$ and with tail heaviness, with heavier tails such as $t(1)$ being detected more easily than $t(2)$. Joint testing of both tails yields higher power, as expected. Note that power increases mainly with $k$ rather than $n$, since larger $n$ improves the extreme-value approximation but does not directly raise power. 

\begin{table}[tp]
\centering
\resizebox{\textwidth}{!}{
\begin{tabular}{lcrrrrrrrrrrrr}
\hline\hline
Dist.\ of $\varepsilon$ & $n$ & \multicolumn{3}{c}{$k=10$} & \multicolumn{3}{c}{$k=25$} & \multicolumn{3}{c}{$k=50$} & \multicolumn{3}{c}{$k=70$} \\
 &  & Left & Right & Both & Left & Right & Both & Left & Right & Both & Left & Right & Both \\
\hline
Normal  & 1000 & 0.03 & 0.03 & 0.02 & 0.02 & 0.02 & 0.02 & 0.01 & 0.01 & 0.01 & 0.01 & 0.00 & 0.00 \\
 & 2000 & 0.03 & 0.03 & 0.03 & 0.02 & 0.03 & 0.02 & 0.01 & 0.02 & 0.01 & 0.01 & 0.01 & 0.01 \\
 & 5000 & 0.02 & 0.02 & 0.03 & 0.03 & 0.02 & 0.02 & 0.02 & 0.02 & 0.01 & 0.01 & 0.01 & 0.01 \\
Logistic & 1000 & 0.04 & 0.04 & 0.05 & 0.04 & 0.04 & 0.05 & 0.04 & 0.04 & 0.03 & 0.02 & 0.03 & 0.03 \\
 & 2000 & 0.05 & 0.06 & 0.05 & 0.05 & 0.05 & 0.05 & 0.05 & 0.05 & 0.06 & 0.06 & 0.06 & 0.06 \\
 & 5000 & 0.05 & 0.05 & 0.05 & 0.05 & 0.06 & 0.06 & 0.06 & 0.05 & 0.06 & 0.06 & 0.06 & 0.06 \\ \hline
$t(2)$ & 1000 & 0.09 & 0.10 & 0.12 & 0.14 & 0.13 & 0.18 & 0.14 & 0.15 & 0.17 & 0.21 & 0.21 & 0.21 \\
 & 2000 & 0.12 & 0.11 & 0.14 & 0.17 & 0.17 & 0.24 & 0.21 & 0.21 & 0.30 & 0.30 & 0.29 & 0.31 \\
 & 5000 & 0.13 & 0.15 & 0.19 & 0.21 & 0.22 & 0.30 & 0.32 & 0.31 & 0.47 & 0.47 & 0.47 & 0.47 \\
$t(1)$  & 1000 & 0.17 & 0.17 & 0.23 & 0.31 & 0.29 & 0.41 & 0.37 & 0.38 & 0.51 & 0.51 & 0.51 & 0.51 \\
 & 2000 & 0.18 & 0.18 & 0.23 & 0.33 & 0.34 & 0.46 & 0.46 & 0.48 & 0.71 & 0.71 & 0.71 & 0.71 \\
 & 5000 & 0.18 & 0.18 & 0.24 & 0.37 & 0.35 & 0.50 & 0.56 & 0.56 & 0.73 & 0.65 & 0.66 & 0.83 \\
\hline
\end{tabular}}
\caption{Simulation results for cross-sectional data. Entries are the rejection rates of our proposed test \eqref{eq:LR}. The normal and logistic distributions correspond to the null hypothesis, and the Student-$t$ distributions correspond to the alternative. Significance level is $5\%$. Based on 2,000 simulation draws.}
\label{tab:cross_sectional}
\end{table}

We finally compare the test with a specification test of the parametric model itself.
The natural comparison is the information matrix test of \citet{White1982}, computed in its outer product form for a logit fitted to $(X_i,X_{1,i}^a,X_{2,i}^a)$.
Its null is that the logit is correctly specified, so the logistic design is its null and every other design is an alternative for it.
Table \ref{tab:im} reports its rejection rates alongside those of $\varphi_k$ at $k=50$, with the critical value simulated under the logistic design so that the comparison is not affected by the accuracy of the $\chi^2$ approximation.

The two tests differ in what they treat as the null, and the normal row shows what that costs.
Normal and logistic errors are both thin-tailed, both lie in the Gumbel class, and both are null for \eqref{eq:hypo}, so $\varphi_k$ should reject neither.
The information matrix test rejects the normal design at $0.20$, $0.28$ and $0.56$, and the rate rises with $n$.
Size correction does not remove this, because it is not a size distortion: the normal design is a genuine alternative for a fitted logit, and the test achieves power.
However, a rejection indicates that the error is not logistic, which says nothing about whether it is thin-tailed.
If we fit a probit instead of the logit, the results would swap and the logistic design would be the one rejected.

Regarding power, our test dominates at $n=1000$, attaining $0.17$ and $0.51$ against $0.06$ and $0.16$, and the two are close at $n=2000$.
When the sample size is large at $n=5000$, the information matrix test has larger power.
This pattern is consistent with the argument in Remark \ref{rem:moments}.
The information matrix test draws its power from the mid-sample and so needs $n$ to grow, whereas $\varphi_k$ reads a fixed number of extremes and has power already at $n=1000$ but improves little thereafter with a fixed $k$.

\begin{table}[tp]
\centering
\begin{tabular}{lcccc}
\hline\hline
Dist.\ of $\varepsilon$ & $n$ & \multicolumn{2}{c}{information matrix test} & test \eqref{eq:LR} \\
\cline{3-4}
 &  & asymptotic cv & simulated cv & $k=50$, both tails \\
\hline
Normal   & 1000 & 0.28 & 0.20 & 0.01 \\
         & 2000 & 0.30 & 0.28 & 0.01 \\
         & 5000 & 0.49 & 0.56 & 0.01 \\
Logistic & 1000 & 0.10 & 0.05 & 0.03 \\
         & 2000 & 0.06 & 0.06 & 0.06 \\
         & 5000 & 0.03 & 0.04 & 0.06 \\
\hline
$t(2)$   & 1000 & 0.12 & 0.06 & 0.17 \\
         & 2000 & 0.23 & 0.19 & 0.30 \\
         & 5000 & 0.79 & 0.84 & 0.47 \\
$t(1)$   & 1000 & 0.32 & 0.16 & 0.51 \\
         & 2000 & 0.71 & 0.67 & 0.71 \\
         & 5000 & 1.00 & 1.00 & 0.73 \\
\hline
\end{tabular}
\caption{Rejection rates at the $5\%$ level for the information matrix test of \citet{White1982} applied to a fitted logit and for test \eqref{eq:LR}. The information matrix critical values are asymptotic $\chi^2_6$ or simulated under logistic errors (based on 1,500 draws). Only logistic errors satisfy the information matrix null, whereas normal and logistic errors satisfy \eqref{eq:hypo}. The test \eqref{eq:LR} entries are from Table \ref{tab:cross_sectional}.}
\label{tab:im}
\end{table}

\subsection{Panel Data Models}\label{sec:simu:panel}
We examine both static and dynamic panel models with unobserved individual heterogeneity. In the static case, we generate data from
\begin{equation}\label{eq:non-dynamic}
    Y_{it} = I(X_{it}+X_{1,it}^a+X_{2,it}^a + \alpha_i- \varepsilon_{it}\ge 0),
\end{equation}
where $\{X_{it},X_{1,it}^a,X_{2,it}^a,\varepsilon_{it}\}$ are i.i.d.\ across $i$ and $t$ with the same marginal distributions as in the cross-sectional design. The individual effect $\alpha_i$ is drawn from the standard uniform distribution $U(0,1)$. We consider sample sizes $n \in \{1000,2000,5000\}$ and $T=2$. In the dynamic case, the data generating process is
\[
    Y_{it} = I(X_{it}+Y_{i,t-1}+X_{1,it}^a+X_{2,it}^a + \alpha_i- \varepsilon_{it}\ge 0),
\]
with the initial period $Y_{i0}$ generated from \eqref{eq:non-dynamic}. All other specifications match the static panel design.

We implement the test period by period, combining the two tails within each period by the \v{S}id\'ak rule and the two periods by the Bonferroni rule of Proposition \ref{prop:panel}(c). The results, reported in Tables \ref{tab:panel_nondynamic} and \ref{tab:panel_dynamic} of Appendix \ref{app:sim}, show patterns similar to the cross-sectional case. The test maintains good size control, and power increases with $k$ and tail heaviness. As expected, the Bonferroni correction is slightly conservative, but the test remains powerful. Overall, size and power behave similarly across cross-sectional, static panel, and dynamic panel settings, indicating that our test is well-suited to a wide range of data structures.
 
\section{Empirical Example}\label{sec:emp}
We apply the proposed diagnostic test to firms' binary decision to innovate.
In the Schumpeterian growth literature, innovation and firm size are closely related, because innovation generates rents and market leadership, and firm scale affects the returns to R\&D and the ability to appropriate those rents \citep{cohen2010fifty,aghion2014we}.
Firm size is well known to display a heavy-tailed distribution: see for example \citet{Axtell2001}, \citet{gabaix2009power}, and \citet{diGiovanniLevchenko2012}.
We use a single dominating heavy-tailed covariate $X_i$, and the test infers the error tail from the conditional extremes of $X_i$ without estimating the binary choice model, so it is simple to implement and robust to potential model misspecification.

Following the cross-sectional setup with multiple covariates in \eqref{eq:multi}, the binary outcome $Y_i$ is an indicator for whether firm $i$ files at least one patent in a given year, so that $Y_i=1$ if the firm patents and $Y_i=0$ otherwise.
Firm size can be measured in several ways, including employment, assets, output, and value added, and here we use value added as the measure of scale, which is the resource base for R\&D and is closest to the notion of scale in the Schumpeterian literature.
The dominating regressor $X_i$ is therefore the firm's lagged value added, which captures pre-existing operating scale.
Ownership indicators, policy variables, and industry or region fixed effects enter as auxiliary regressors $X_i^a$.

We merge the National Tax Survey Database (NTSD) from China with patent filing data from China's State Intellectual Property Office (SIPO).
The NTSD is jointly administered by the Chinese State Administration of Taxation and the Ministry of Finance, and is a stratified random sample of taxpayers drawn each year and broadly regarded as the most reliable source of firm-level tax and performance information in China.
It contains detailed tax data and firm characteristics, including the value-added measure used in our analysis.
The dataset has been widely used in prior work such as \citet{chenTaxPolicyLumpy2023} and \citet{liuHowTaxIncentives2019}.
Since all patent applications in China must be submitted to SIPO, the SIPO records provide comprehensive information on patent filings and grants, including the filing date, applicant name and address, firm name, and patent type.
Several previous studies have used the SIPO data, such as \citet{liuIntermediateInputImports2016} and \citet{liuTradePolicyUncertainty2020}.

\begin{figure}[htbp]
    \centering
    \setlength{\tabcolsep}{0pt}
    \begin{tabular}{cc}
    \includegraphics[width=0.44\textwidth]{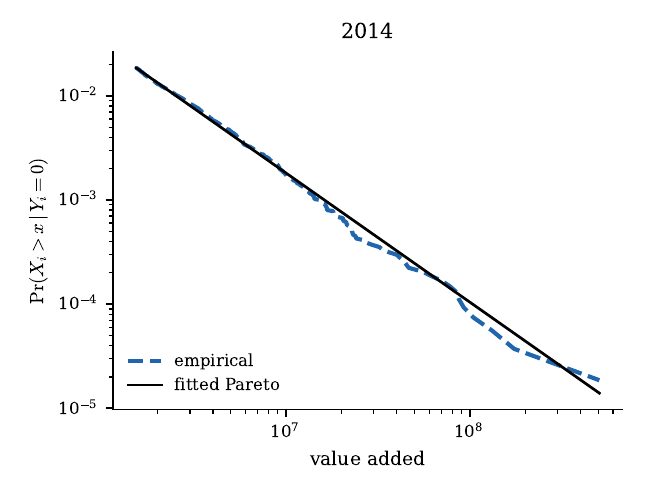} &
    \includegraphics[width=0.44\textwidth]{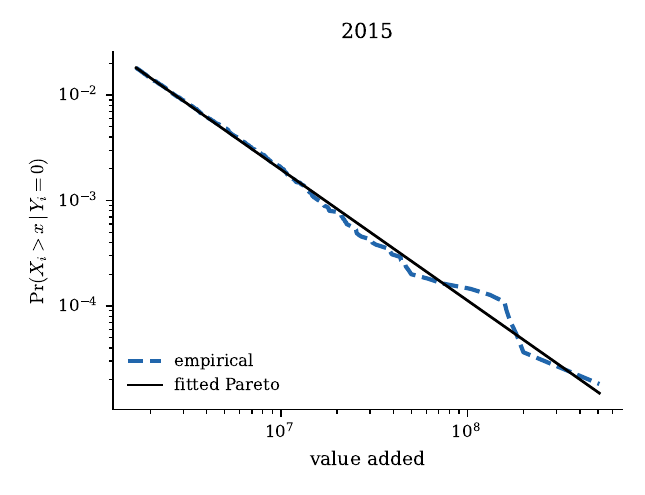}
    \end{tabular}
    \caption{Pareto fit of lagged firm value added among non-patenting firms ($X_i\mid Y_i=0$): empirical (blue dashed) and fitted Pareto (black solid), largest $1{,}000$ observations.}
    \label{fig:valueadd}
\end{figure}

We construct our sample from the merged data using observations for which the patenting indicator and a positive lagged value added are both available.
Since the auxiliary regressors are finite-support discrete or bounded, the conditional moment condition \eqref{eq:condmom} holds.
For the structural interpretation of the null, we maintain the tail-stability condition \eqref{eq:tailstable}.
We assess its empirical plausibility by comparing the distributions of the auxiliary covariates among firms in the upper $5\%$, $2\%$, and $1\%$ of the lagged-value-added distribution.
The distributions of the ownership, policy, industry, and region controls change only modestly across adjacent thresholds and exhibit no systematic deterioration farther into the tail.
This evidence supports the tail-stability condition in the present application.
Figure \ref{fig:valueadd} plots the conditional survival function of lagged value added among non-patenting firms, $X_i\mid Y_i=0$, over the largest $1{,}000$ observations, against a fitted Pareto law.
The two curves track each other closely across two orders of magnitude, so the domain of attraction assumption is plausibly satisfied and the conditional distribution exhibits a heavy upper tail.

Following Section \ref{sec:k}, we compute the likelihood-ratio statistic \eqref{eq:LR} at several values of $k$ and read the results as a robustness check.
Rule \eqref{eq:krule} gives $k\approx100$ for a target of $\gamma^{\ast}=0.25$, and $k=100$ remains a small fraction of $n_0$ here, so we take $k\in\{25,50,70,100\}$.
Table \ref{tab:innovation} reports the resulting $p$-values for two annual cross sections.
The conclusion is robust to the choice of $k$: the $p$-value is below $0.01$ for every $k\ge50$ in both years, and the only value above $0.05$ is $0.17$ at the smallest $k$ in 2014, where \eqref{eq:krule} predicts the least power.
We therefore reject the thin-tail null for the latent error.
Economically, this suggests that the decision to innovate is influenced by rare but sizable unobserved shocks, such as technological opportunities, credit conditions, or policy changes, so that conventional probit or logit specifications may understate the probability of innovation in the relevant tail.

\begin{table}[htbp]
\centering
\begin{tabular}{lrrrr}
\hline\hline
\multicolumn{5}{c}{$H_0:$ Right tail of $\varepsilon$ is thin} \\
Sample & $k=25$ & 50 & 70 & 100 \\  \hline
2014  & 0.17 & $<$0.01 & $<$0.01 & $<$0.01 \\
2015  & $<$0.01 & $<$0.01 & $<$0.01 & $<$0.01 \\
\hline
\end{tabular}
\caption{The $p$-values of our test \eqref{eq:LR} in the firm innovation dataset. Here $Y_i$ indicates whether the firm files at least one patent, and $X_i$ is the firm's lagged value added.}
\label{tab:innovation}
\end{table}

Two features of this design deserve comment.
First, the dominance requirement of Section \ref{sec:multi} is met, since lagged value added is the only regressor with unbounded support and the remaining covariates are finite-support discrete or bounded.
Second, conditioning on $Y_i=0$ tilts the tail, which is what Theorem \ref{thm:ident}(a) implies.
The Hill estimates for 2015 are $0.854$ for lagged value added of all firms and $0.786$ among non-patenting firms, a difference of $-0.068$ with bootstrap standard error $0.020$ and $95\%$ interval $[-0.109,-0.035]$, computed at the top $1{,}000$ order statistics.
The corresponding difference in 2014 is $-0.058$ with interval $[-0.108,-0.011]$, and the sign is the same at the top $500$ and the top $2{,}000$.
The conditional tail is therefore thinner than the unconditional tail, which is the ordering $\gamma<\lambda$ of Theorem \ref{thm:ident}(a), and the rejection reflects the tail of the latent error rather than the configuration of Theorem \ref{thm:ident}(c) in which the covariate fails to shift the choice probability at a polynomial rate.

\section{Conclusion}\label{sec:conclusion}

This paper develops a simple yet powerful diagnostic test for tail behavior of binary choice models. Using normalized spacings of extreme covariate order statistics conditional on the binary outcome, the test distinguishes the thin-tailed behavior implied by conventional logit and probit specifications from heavy-tailed alternatives without requiring model estimation. The likelihood-ratio statistic is invariant to location and scale and extends naturally to models with multiple covariates and to panel models with individual effects and dynamics.

Monte Carlo evidence shows good size control and high power, even in modest samples. An application to firm innovation illustrates how the diagnostic can assess the adequacy of conventional link functions and the potential importance of rare but large shocks.

{\setstretch{0.85}\setlength{\bibsep}{0pt}\footnotesize\raggedright
\bibliographystyle{apalike}
\bibliography{reference}}

\clearpage
\appendix
\titleformat{\section}[block]{\large\bfseries}{Appendix \thesection:}{0.6em}{}
\renewcommand{\thesection}{\Alph{section}}
\setcounter{page}{1}
\renewcommand{\thepage}{A\arabic{page}}
\markboth{\normalfont Ji, Liu, Wang, and Xing}{\normalfont Supplement to ``A Diagnostic Test for Binary Choice Models''}

\begin{center}
{\large\bfseries Supplementary Material for\\[2pt]
``A Simple and Powerful Diagnostic Test for Binary Choice Models''}

\medskip

{\setstretch{1.0}\normalsize Ting Ji \quad Laura Liu \quad Yulong Wang \quad Jiahe Xing}
\end{center}

\bigskip

{\setstretch{1.0}\noindent This supplement collects material referenced in the paper.
Appendix \ref{app:uniform} states and discusses the uniform size result summarized in Section \ref{sec:uniform}.
Appendix \ref{app:proofs} proves every result in the paper and in this supplement.
Appendix \ref{app:pickands} compares the asymptotic power of our test with that of a test based on the Pickands estimator.
Appendix \ref{app:sim} collects the critical values of the test and the panel simulation results.
Appendix \ref{app:plugin} develops the plug-in version of the test for the case in which no single covariate is known to dominate the tail.
Equation, table, figure, and result numbers continue those of the paper.}

\bigskip
\section{Uniform Size}\label{app:uniform}

Theorem \ref{thm:size} establishes the asymptotic size of $\varphi_k$ at each fixed null law, and Section \ref{sec:uniform} summarizes what happens when the statement is required to hold uniformly.
This appendix gives the two results in full, together with the class over which uniformity is recovered and a numerical illustration of the failure.
Theorem \ref{thm:size} is pointwise in $F$.
Theorem \ref{thm:uniform} shows that no uniform statement is available over the unrestricted null, and that uniformity is restored on a second-order restricted subclass.

Let $\mathcal F_{0}$ denote the set of joint laws of $(Y_i,X_i)$ satisfying Assumption \ref{ass:main} with part (iii.b), so that $\gamma=0$ throughout $\mathcal F_{0}$.
For $F\in\mathcal F_{0}$ write $U_F=(1/\{1-F_{X\mid Y=0}\})^{\leftarrow}$ and, for $\rho<0$, $A_{0}>0$, $t_{0}\ge1$ and $p_{0}\in(0,1)$, define the second-order class
\begin{multline}\label{eq:uniclass}
\mathcal F_{0}(\rho,A_{0},t_{0},p_{0})=\left\{F\in\mathcal F_{0}:\ \Prob(Y_i=0)\ge p_{0}\ \text{ and }\ \exists a_F>0\ \text{ with}\right.\\
\left.\sup_{x>1/t}\frac{\left|U_F(tx)-U_F(t)-a_F(t)\log x\right|}{a_F(t)(1+|\log x|)}\le A_{0}t^{\rho}\ \text{ for all }t\ge t_{0}\right\}.
\end{multline}
The bound in \eqref{eq:uniclass} is the second-order condition of extreme value theory imposed uniformly over $\mathcal F_{0}$. See \citet{Drees1998} for the corresponding uniform inequalities for the tail quantile process.

Both statements use only that the null law of $R_k$ has no atoms, which Lemma \ref{lem:analytic} in the appendix establishes.

\begin{theorem}\label{thm:uniform}
\begin{enumerate}[(a).]
\item For any fixed $k\ge3$,
\[
\liminf_{n\to\infty}\ \sup_{F\in\mathcal F_{0}}\Prob_F(\varphi_k=1)\ \ge\ \sup_{\gamma>0}\pi_k(\gamma)\ >\ \alpha .
\]
\item For any fixed $k\ge3$ and any $\rho<0$, $A_{0}>0$, $t_{0}\ge1$ and $p_{0}\in(0,1)$,
\[
\lim_{n\to\infty}\ \sup_{F\in\mathcal F_{0}(\rho,A_{0},t_{0},p_{0})}\left|\Prob_F(\varphi_k=1)-\alpha\right|=0 .
\]
\end{enumerate}
\end{theorem}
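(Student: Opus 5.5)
For part (a) I would use the construction sketched in Section \ref{sec:uniform}. Fix $\gamma_1>0$ and $k\ge3$. For each threshold $M$, I would build a joint law $F_M\in\mathcal F_0$ as follows. Let $X_i$ be Pareto with index $\lambda>\gamma_1$. Choose $q_M$ so that $q_M(x)=c\,x^{-1/\xi_1}$ on $(1,M]$, where $\xi_1=\gamma_1\lambda/(\lambda-\gamma_1)$. Beyond $M$, let $q_M$ continue as $q_M(M)\exp\{-(x-M)\}$, smoothed at $M$ so that $q_M$ is $C^2$ with $q_M'<0$ and reciprocal hazard eventually constant. Then $a_q'\to0$, so Assumption \ref{ass:main}(iii.b) holds, $F_M\in\mathcal F_0$, and $\Prob(Y=0)$ is bounded away from $0$ and $1$ uniformly in $M$.

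By the Bayes formula \eqref{eq:tail}, the conditional law $F_{X\mid Y=0}$ coincides below $M$ with a law $H$ whose survival function lies in $\RV_{-1/\gamma_1}$. Moreover, $H$ is exactly Pareto on $(1,\infty)$ if I use the same construction without the cutoff. For fixed $n$, the event that the top $k$ order statistics of the $Y=0$ subsample all lie below $M$ has probability tending to one as $M\to\infty$. On this event the sample under $F_M$ can be coupled exactly with a sample from the pure-Pareto design. Hence for each $n$,
\[
\sup_{F\in\mathcal F_0}\Prob_F(\varphi_k=1)\ge\lim_{M\to\infty}\Prob_{F_M}(\varphi_k=1)=\Prob_{\mathrm{Par}}(\varphi_k=1).
\]

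Next I would let $n\to\infty$. The self-normalized top $k$ statistics of the Pareto design converge to $\mathbf V^*$ under $\gamma_1$, as in \eqref{eq:Xstar}. Since Lemma \ref{lem:analytic} says the law of $R_k$ has no atoms, together with continuity of $R_k$ in $\mathbf X^{*(0)}$, the continuous mapping theorem gives $\Prob_{\mathrm{Par}}(\varphi_k=1)\to\pi_k(\gamma_1)$. Taking the supremum over $\gamma_1$ yields the first inequality. The strict inequality $\sup_\gamma\pi_k(\gamma)>\alpha$ follows because the WAP test is nontrivial under the alternative: the densities $f_{\mathbf V^*\mid\gamma}$ and $f_{\mathbf V^*\mid0}$ differ, so the Neyman--Pearson-type test $\varphi_k$ has power strictly above its size somewhere, as Table \ref{tab:envelope} also reflects.

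For part (b) I would argue by contradiction along sequences. Suppose there exist $F_n$ in the class with $|\Prob_{F_n}(\varphi_k=1)-\alpha|\ge\epsilon$. I would show that $\mathbf X^{*(0)}$ under $F_n$ still converges to $\mathbf V^*$ with $\gamma=0$. Conditional on $n_0$, the top $k$ order statistics equal $U_{F_n}(n_0/\Gamma_j)$ in distribution. Here $\Gamma_j$ are partial sums of i.i.d.\ exponentials via the Rényi representation, and $n_0\ge p_0n/2$ with probability tending to one, uniformly over the class.

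Setting $t=n_0/\Gamma_k$ and $x=\Gamma_k/\Gamma_j$, the uniform bound \eqref{eq:uniclass} gives
\[
\frac{U(n_0/\Gamma_j)-U(n_0/\Gamma_k)}{a(t)}=\log(\Gamma_k/\Gamma_j)+O_p(A_0t^{\rho}(1+|\log x|)),
\]
with constants free of $F_n$. Because $t\to\infty$ uniformly in probability, the remainder vanishes uniformly. Self-normalization removes $a(t)$, so $\mathbf X^{*(0)}-\mathbf V^*_n\to0$ in probability uniformly. Here $\mathbf V^*_n$ is the exact $\gamma=0$ functional of $(\Gamma_j)$, whose law is that of $\mathbf V^*$.

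The remaining step converts this into rejection probabilities. Continuity of $R_k$ on the interior of $\mathcal V$ and the absence of atoms at $\mathrm{cv}(\alpha,k)$ imply $\Prob_{F_n}(\varphi_k=1)\to\alpha$, which contradicts the choice of $F_n$.

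I expect the main obstacle to be the smoothing at $M$ in part (a). The glued $q_M$ must genuinely satisfy the von Mises condition (iii.b) while remaining exactly a power function below $M$. A careful $C^2$ interpolation over $[M,2M]$ should suffice, because the constant reciprocal hazard takes over beyond $2M$.
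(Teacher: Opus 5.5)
Your proposal is correct. Part (a) is the paper's switching construction with different bookkeeping. Part (b) uses the same uniform second-order bound but a different final step.

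In (a), the paper prescribes the conditional law directly: Pareto of index $1/\gamma_0$ on $[1,M]$ and exponential of scale $\gamma_0M$ beyond. It then backs out $q_M=ph_M/f_X$, so $\Prob(Y_i=0)=p$ and $F_{X\mid Y=0}=H_M$ hold exactly for every $M$, and the total variation bound $\ell M^{-1/\gamma_0}$ is applied to the $Y=0$ subsample alone. You prescribe $q_M$ instead, so your $F_{X\mid Y=0}$ agrees with the Pareto law below $M$ only up to the $M$-dependent normalizer $\Prob_M(Y_i=0)$. What makes your version work is a coupling of the joint samples, with the same $X_i$ and the same uniform generating $Y_i$. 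The two samples then agree on $\{\max_{i\le n}X_i\le M\}$, which has probability at least $1-nM^{-1/\lambda}$. Use that event, not the one you state: an $X_i>M$ can enter the $Y=0$ subsample under one design and not the other.

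In (b), the paper anchors at $t=\Gamma_{\ell+1}$. It converts the perturbation bound into an explicit uniform inequality via Lipschitz continuity of $R_k$ on the compact sets $\mathcal V_\iota$, which it needs because $R_k$ is unbounded near $\partial\mathcal V$. You anchor at the $k$th order statistic, which removes the location term. You then finish with arbitrary sequences $F_n$ and the portmanteau step from the proof of Theorem \ref{thm:size}. Your route is shorter; the paper's gives an explicit bound without a subsequence argument.

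Three small repairs are needed.

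First, the R\'enyi representation conditional on $n_0$ is $U_F(\Gamma_{n_0+1}/\Gamma_j)$ jointly in $j$, not $U_F(n_0/\Gamma_j)$. With $t=\Gamma_{n_0+1}/\Gamma_k$ and $x=\Gamma_k/\Gamma_j$, which is admissible because $\Gamma_j<\Gamma_{n_0+1}$, your computation goes through unchanged. The event $t\ge t_0$ then has $F$-free probability tending to one.

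Second, the smoothing you call the main obstacle is unnecessary. Assumption \ref{ass:main}(iii.b) is a condition on $(z_0,\infty)$ with $z_0$ free. Taking $z_0>M$, your exponential continuation is $\Gamma$ varying with auxiliary function $1$. Continuity at $M$, which your formula already has, is all that part (iii) asks, and the paper's $q_M$ also has a kink at $M$.

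Third, the pure design's conditional law is exactly Pareto, so the common factor $\Gamma_{n_0+1}^{\gamma_1}$ cancels. Hence $\mathbf X^{*(0)}$ has exactly the $\gamma_1$ law of $\mathbf V^{*}$ for every $n_0\ge k$, and no limit in $n$ is required. Your limit argument would instead need absence of atoms of $R_k$ under $\gamma_1$. Lemma \ref{lem:analytic} states this only under $\gamma=0$, although its level-set argument does extend to any law with a Lebesgue density on $\mathcal V$.
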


Table \ref{tab:uniform} illustrates part (a).
With $\gamma_{0}=1/2$, $n_0=2000$ and $k=25$, the test rejects a law that satisfies Assumption \ref{ass:main}(iii.b), and therefore has $\gamma=0$, between $61$ and $66$ percent of the time at nominal $5$ percent, matching the reference value $\pi_k(1/2)=0.629$.
The distortion is present already at $M=10^{2}$ because the largest $k$ observations of a sample of this size lie far below $M$.

\begin{table}[tp]
\centering
\small
\begin{tabular}{lcccccc}
\toprule
$M$ & $10^{2}$ & $10^{3}$ & $10^{4}$ & $10^{6}$ & $10^{10}$ & $\pi_k(1/2)$\\
\midrule
rejection rate & 0.634 & 0.612 & 0.619 & 0.616 & 0.620 & 0.618\\
\bottomrule
\end{tabular}
\caption{Rejection rate of $\varphi_k$ at nominal $\alpha=0.05$ under the laws $F_M$ constructed in the proof of Theorem \ref{thm:uniform}(a), all of which satisfy Assumption \ref{ass:main} with part (iii.b) and hence have $\gamma=0$. Here $\gamma_{0}=1/2$, $n_0=2000$, $k=25$, based on 2,000 replications.}
\label{tab:uniform}
\end{table} 

Part (a) is proved by a switching construction.
A law that is exactly Pareto with index $1/\gamma_{0}$ below a threshold $M$, and is exponential above $M$, lies in $\mathcal F_{0}$ for every $M$.
Its top $k$ order statistics are nevertheless exactly Pareto with probability approaching one as $M\to\infty$ at fixed $n$.
The exponential scale is chosen so that the conditional density is continuous at $M$.
The rejection probability is then $\pi_k(\gamma_{0})$, which exceeds $\alpha$.
The restriction in \eqref{eq:uniclass} rules this out by forcing the second-order remainder to vanish at a rate common to the class, and this is the sense in which a uniformity restriction is unavoidable rather than technical.


\section{Proofs}\label{app:proofs}

Two classical results are used repeatedly.
Karamata's theorem states that if $h\in\RV_{-\varrho}$ is positive and locally bounded on $[z_0,\infty)$ and $\varrho>1$, then $\int_x^{\infty}h(u)du<\infty$ for large $x$ and $\int_x^{\infty}h(u)du\sim xh(x)/(\varrho-1)$, while if $\varrho<1$ then $\int_{z_0}^{x}h(u)du\sim xh(x)/(1-\varrho)$ \citep[Proposition 1.5.8 and Theorem 1.5.11]{Bingham1987}.
Von Mises' condition states that if $F$ has right endpoint $+\infty$ and a positive differentiable derivative $f$ on $(z_0,\infty)$, and if $\psi=(1-F)/f$ satisfies $\psi'(x)\to0$, then $F\in\mathcal D(G_0)$ with auxiliary function $\psi$ \citep[Theorem 1.1.8 and Remark 1.1.10]{deHaan07}.

We also use three standard properties of $\Gamma$ variation.
If $h$ is $\Gamma$ varying with auxiliary function $a$, then $a(t)=o(t)$ and is determined up to asymptotic equivalence.
Moreover, $h$ is integrable at infinity, $\int_t^{\infty}h(u)du\sim a(t)h(t)$, and $\int_t^{\infty}h$ is itself $\Gamma$ varying with the same auxiliary function.
Finally, a distribution $F$ with infinite right endpoint lies in $\mathcal D(G_0)$ if and only if $1-F$ is $\Gamma$ varying \citep[Section 1.2]{deHaan07}.

We also use the following consequence of the Karamata representation theorem \citep[Theorem 1.3.1]{Bingham1987}.
If $h>0$ has $h'/h$ continuous on $(z_0,\infty)$ and $xh'(x)/h(x)\to-\varrho$, then integrating $h'/h$ gives
\[
h(x)=h(z_0)x^{-\varrho}z_0^{\varrho}\exp\left\{-\int_{z_0}^{x}\eta(u)u^{-1}du\right\}
\]
with $\eta(u)\to0$.
The exponential factor is slowly varying by that theorem, so $h\in\RV_{-\varrho}$.

Throughout, $g=qf_X$, which is positive and continuous on $(z_0,\infty)$ under Assumption \ref{ass:main}.
Bayes' rule gives
\begin{equation}\label{eq:tailrep}
f_{X\mid Y=0}=\frac{g}{\Prob(Y_i=0)},\quad 1-F_{X\mid Y=0}(x)=\frac{1}{\Prob(Y_i=0)}\int_x^{\infty}g(u)du .
\end{equation}
Since $q>0$ and $F_X$ has right endpoint $+\infty$, so does $F_{X\mid Y=0}$.
The two lemmas below are proved first, and the remaining proofs are given in the order in which the results are stated.

\begin{lemma}\label{lem:rapid}
Let $\phi>0$ be differentiable on $(z_1,\infty)$ with $\phi'(x)\to0$.
Then $\phi(x)/x\to0$.
If in addition $\phi=-q/q'$ for some positive differentiable $q$ with $q'<0$ on $(z_1,\infty)$, then $q$ is rapidly varying, and for every $M>0$ there are $z_M$ and $C_M$ with $q(x)\le C_Mx^{-M}$ for $x\ge z_M$.
\end{lemma}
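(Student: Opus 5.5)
The first claim follows from the mean value theorem. Fix $\epsilon>0$ and choose $z_\epsilon>z_1$ with $|\phi'(u)|\le\epsilon$ for $u\ge z_\epsilon$. Then for $x>z_\epsilon$,
\[
\phi(x)\le\phi(z_\epsilon)+\epsilon(x-z_\epsilon),
\]
and dividing by $x$ gives $\limsup_{x\to\infty}\phi(x)/x\le\epsilon$. Since $\phi>0$ and $\epsilon$ is arbitrary, $\phi(x)/x\to0$.

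For the second claim we use the identity $(\log q)'=q'/q=-1/\phi$ on $(z_1,\infty)$. This identity is valid because $q>0$, $q'<0$, and $\phi>0$ there. Fix $t>1$ and $x$ large. Integrating over $[x,tx]$ gives
\[
\log\frac{q(tx)}{q(x)}=-\int_x^{tx}\frac{du}{\phi(u)} .
\]
By the first claim, for any $M'>0$ there is $z$ such that $\phi(u)\le u/M'$ for $u\ge z$. For $x\ge z$ the integral is therefore at least $M'\int_x^{tx}u^{-1}du=M'\log t$, so $q(tx)/q(x)\le t^{-M'}$. Since $M'$ is arbitrary, $q(tx)/q(x)\to0$ for every $t>1$, which is rapid variation.

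For the polynomial bound, take $M'=M$. Then for $x\ge z_M:=z$ the same integration over $[z_M,x]$ gives
\[
\log q(x)-\log q(z_M)\le -M\log(x/z_M).
\]
Hence $q(x)\le q(z_M)z_M^{M}x^{-M}$, so the constant is $C_M=q(z_M)z_M^{M}$.

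No step is a real obstacle. The only points of care are the positivity of $\phi$ in the first part, which makes the limit inferior trivially nonnegative, and making sure the integration in the second part starts at a point beyond both $z_1$ and the threshold from the first part.
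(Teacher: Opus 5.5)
Your proof is correct and follows the paper's argument essentially step for step. You use the mean value theorem for $\phi(x)/x\to0$, then integrate $(\log q)'=-1/\phi$ against the bound $\phi(u)\le u/M$ to get both $q(tx)/q(x)\le t^{-M}$ and $q(x)\le q(z_M)z_M^{M}x^{-M}$; the one detail the paper states that you leave implicit is that $1/\phi$ is continuous (because $\phi$ is differentiable and positive), which is what justifies the fundamental theorem of calculus in the integration step.
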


\begin{proof}
Fix $\epsilon>0$ and choose $z_\epsilon>z_1$ with $|\phi'(u)|\le\epsilon$ for $u\ge z_\epsilon$.
For $x>z_\epsilon$ the mean value theorem gives $\phi(x)-\phi(z_\epsilon)=\phi'(\zeta)(x-z_\epsilon)$ for some $\zeta\in(z_\epsilon,x)$, so
\[
0<\frac{\phi(x)}{x}\le\frac{\phi(z_\epsilon)}{x}+\epsilon .
\]
Hence $\limsup_{x\to\infty}\phi(x)/x\le\epsilon$.
Since $\epsilon$ was arbitrary, $\phi(x)/x\to0$.

For the second claim, $\phi$ is continuous and positive, so $u\mapsto\phi(u)^{-1}$ is continuous and $\log q$ is differentiable with $(\log q)'=q'/q=-1/\phi$.
The fundamental theorem of calculus gives $\log\{q(y)/q(x)\}=-\int_x^{y}\phi(u)^{-1}du$ for $z_1<x<y$.
Fix $M>0$ and use $\phi(u)/u\to0$ to choose $z_M>z_1$ with $\phi(u)\le u/M$ for $u\ge z_M$.
Then $\int_x^{y}\phi(u)^{-1}du\ge M\log(y/x)$ for $y>x\ge z_M$, so taking $x=z_M$ yields $q(y)\le q(z_M)z_M^{M}y^{-M}$.
This is the stated bound with $C_M=q(z_M)z_M^{M}$.
Taking $y=tx$ with $t>1$ gives $q(tx)/q(x)\le t^{-M}$ for $x\ge z_M$, hence $\limsup_{x\to\infty}q(tx)/q(x)\le t^{-M}$ for every $M>0$.
Letting $M\to\infty$ gives rapid variation.
\end{proof}

Parts (iii.a) and (iii.b) are mutually exclusive because a $\Gamma$ varying function is rapidly varying and so lies in no $\RV_{\theta}$ with $\theta$ finite.
Case (iii.c) is $q\in\RV_{0}$, which is exclusive of both, because (iii.a) gives $q\in\RV_{-1/\xi}$ and (iii.b) gives rapid variation of $q$.

\begin{proofof}{Proposition \ref{prop:consequences}}
Write $T_i(c)=\{Y_i-I(X_i>0)\}f_X(X_i)^{-1}I(|X_i|\le c)$, so that $\widehat\theta_n=n^{-1}\sum_iT_i(c_n)$ is an average of i.i.d.\ terms for each fixed $c_n$.
Conditioning on $X_i$ and using $\E[Y_i\mid X_i=u]=p(u)$,
\[
\E[T_i(c)]=\int_{-c}^{c}\frac{p(u)-I(u>0)}{f_X(u)}f_X(u)du=\int_{-c}^{c}\{p(u)-I(u>0)\}du ,
\]
so that $\lim_{c\to\infty}\E[T_i(c)]=\int_{-\infty}^{\infty}\{p(u)-I(u>0)\}du$, which is the identified functional of \citet{Lewbel1997}.
Under \eqref{eq:lewbel}, $p(u)=F_\varepsilon(\theta_0+u)$, so the substitution $t=\theta_0+u$ gives $\int_{-\infty}^{\infty}\{F_\varepsilon(t)-I(t>\theta_0)\}dt=\theta_0-\E[\varepsilon_i]$, which equals $\theta_0$ by the normalization in \eqref{eq:lewbel}.
The functional therefore equals the intercept and $\lim_{c\to\infty}\E[T_i(c)]=\theta_0$.

Subtracting and using $I(u>0)=0$ on $\{u<-c\}$ and $p(u)-1=-q(u)$ on $\{u>c\}$,
\begin{align*}
\E[T_i(c)]-\theta_0=& -\int_{-\infty}^{-c}\{p(u)-I(u>0)\}du-\int_{c}^{\infty}\{p(u)-I(u>0)\}du\\
=&\int_{c}^{\infty}q(u)du-\int_{-\infty}^{-c}p(u)du=b(c) .
\end{align*}
Since $\{Y_i-I(X_i>0)\}^{2}=|Y_i-I(X_i>0)|$ takes values in $\{0,1\}$ and has conditional mean $p(u)I(u\le0)+q(u)I(u>0)$ given $X_i=u$,
\begin{equation}\label{eq:ktvar}
\E[T_i(c)^{2}]=\int_{-c}^{c}\frac{p(u)I(u\le0)+q(u)I(u>0)}{f_X(u)}du=v(c).
\end{equation}
This quantity is finite for every finite $c$ because the numerator is bounded by one and $1/f_X$ is locally integrable.
Therefore, $\var\{T_i(c)\}=v(c)-\E[T_i(c)]^{2}$.
Because $\widehat\theta_n$ is an average of i.i.d.\ terms,
\begin{equation}\label{eq:ktexp}
\E\left\{\left(\widehat\theta_n-\theta_0\right)^{2}\right\}
=\frac{\var\{T_i(c_n)\}}{n}+b(c_n)^{2},
\end{equation}
which is an identity and so delivers both bounds.
By hypothesis the left tail of \eqref{eq:ktvar} contributes a term of the same order as the right, and hence we treat the right tail.

For $u>0$ the integrand of \eqref{eq:ktvar} is $q/f_X\in\RV_{\theta}$ with $\theta=1/\lambda+1-1/\xi$.
Karamata's theorem is applied to it with index $\theta=\nu-1$ in the notation above, where $\nu:=\theta+1=2+1/\lambda-1/\xi$.
If $\theta<-1$, equivalently $\nu<0$, then Karamata's theorem gives $\int^{\infty}q/f_X<\infty$ and $v(c)=O(1)$.
If $\theta>-1$, that is $\nu>0$, Karamata's theorem gives $\int^{c}q/f_X\sim c\{q(c)/f_X(c)\}/\nu\in\RV_{\nu}$, and hence $v(c)=c^{\nu+o(1)}$ because a regularly varying function of index $\nu$ equals $c^{\nu}$ times one of index zero, which is $c^{o(1)}$.
If $\nu=0$ the integral is slowly varying and $v(c)=c^{o(1)}$.

The bias is a difference of two tail integrals of the same order, and the two may cancel.
If $\varepsilon_i$ is symmetric and $\theta_0=0$ then $b\equiv0$.
Only a bound is therefore available.
Since $\xi<1$, both $\int_c^{\infty}q$ and $\int_{-\infty}^{-c}p$ are finite, and Karamata's theorem gives $\int_c^{\infty}q\sim cq(c)/(1/\xi-1)=c^{1-1/\xi+o(1)}$, with the same order for the left tail by hypothesis.
The triangle inequality then gives
\begin{equation}\label{eq:ktbias}
|b(c)|\le c^{1-1/\xi+o(1)} .
\end{equation}
The lower bound in \eqref{eq:rate} comes from the variance instead, which is why no matching lower bound on $|b|$ is needed.

Let $c_n=n^{\gamma+o(1)}$ and consider the two regimes in turn.

If $\nu>0$ then $v(c_n)=n^{\gamma\nu+o(1)}\to\infty$, while $\E[T_i(c_n)]\to\theta_0$ is bounded, so $\var\{T_i(c_n)\}\sim v(c_n)$ and the first term of \eqref{eq:ktexp} is $n^{\gamma\nu-1+o(1)}$.
Substituting $\gamma/\xi=1-\gamma/\lambda$, which is the definition of $\gamma$ rearranged, gives $\gamma\nu=2\gamma+2\gamma/\lambda-1$, so
\[
\gamma\nu-1=-2\left\{1-\gamma\left(1+\lambda^{-1}\right)\right\}=2\gamma\left(1-\frac{1}{\xi}\right),
\]
the second equality by the same substitution.
By \eqref{eq:ktbias} the second term of \eqref{eq:ktexp} is at most $n^{2\gamma(1-1/\xi)+o(1)}$, which is the same order.
Hence the sum is $n^{-2\{1-\gamma(1+\lambda^{-1})\}+o(1)}$ and the root mean squared error is $n^{-\{1-\gamma(1+\lambda^{-1})\}+o(1)}$.

If $\nu\le0$ then $v(c_n)=n^{o(1)}$, so $\var\{T_i(c_n)\}\le n^{o(1)}$.
For the lower bound, fix an interval $[a_1,a_2]\subset(\max\{z_0,0\},\infty)$ on which $0<q<1$, which exists because $q\in\RV_{-1/\xi}$ is eventually positive and vanishes at infinity, and note that $a_2<c_n$ for all large $n$.
Given $X_i=u\in(0,c_n)$ the variable $T_i(c_n)$ equals $-1/f_X(u)$ with probability $q(u)$ and zero with probability $p(u)$, so its conditional variance is $q(u)p(u)/f_X(u)^{2}$, and the law of total variance gives
\[
\var\{T_i(c_n)\}\ \ge\ \E\left[\var\{T_i(c_n)\mid X_i\}\right]\ \ge\ \int_{a_1}^{a_2}\frac{q(u)p(u)}{f_X(u)}du\ =:\ C .
\]
The integrand is positive on $[a_1,a_2]$ and the interval does not depend on $n$, so $C>0$ is a constant free of $n$.
Hence $\var\{T_i(c_n)\}/n=n^{-1+o(1)}$.
For the bias, the identity
\[
\gamma\left(1-\frac{1}{\xi}\right)+\frac12=\frac{2\lambda\xi-\lambda+\xi}{2(\lambda+\xi)},\quad
\nu=\frac{2\lambda\xi-\lambda+\xi}{\lambda\xi},
\]
shows that the two share a numerator and so have the same sign, so $\nu\le0$ gives $2\gamma(1-1/\xi)\le-1$ and, by \eqref{eq:ktbias}, $b(c_n)^{2}\le n^{-1+o(1)}$.
The variance term in \eqref{eq:ktexp} is therefore $n^{-1+o(1)}$, while the squared bias is no larger than $n^{-1+o(1)}$. Hence the root mean squared error is $n^{-1/2+o(1)}$.

The two regimes are collected in \eqref{eq:rate}, since $1-\gamma(1+\lambda^{-1})\ge1/2$ if and only if $\gamma\le\lambda/\{2(1+\lambda)\}$, which by the definition of $\gamma$ is $\nu\le0$.
The exponent is positive in the second regime because $1-\gamma(1+\lambda^{-1})>0$ if and only if $\gamma<\lambda/(1+\lambda)$, which by the same definition is $\xi<1$.
\end{proofof}

\begin{proofof}{Theorem \ref{thm:ident}}
Assumption \ref{ass:main}(ii) states $xf_X'(x)/f_X(x)=-(1/\lambda+1+\eta(x))$ with $\eta(x)\to0$, and $f_X'/f_X$ is continuous, so the Karamata representation gives $f_X\in\RV_{-1/\lambda-1}$.

\ppart{a}
Under (iii.a), $q\in\RV_{-1/\xi}$ by hypothesis, so $g=qf_X\in\RV_{-\varrho}$ with $\varrho=1/\xi+1/\lambda+1>1$, and $g$ is positive and continuous, therefore locally bounded.
Karamata's theorem and \eqref{eq:tailrep} give
\[
1-F_{X\mid Y=0}(x)\sim\frac{xg(x)}{(\varrho-1)\Prob(Y_i=0)}\in\RV_{-(1/\xi+1/\lambda)} ,
\]
and since $F_{X\mid Y=0}$ has right endpoint $+\infty$, Gnedenko's theorem \citep[Theorem 1.2.1]{deHaan07} gives $F_{X\mid Y=0}\in\mathcal D(G_\gamma)$ with $1/\gamma=1/\xi+1/\lambda$, that is $\gamma=\lambda\xi/(\lambda+\xi)$.
This is strictly increasing in $\xi$ on $(0,\infty)$ with $\gamma<\lambda$ for every finite $\xi$.
Solving $\gamma(\lambda+\xi)=\lambda\xi$ for $\xi$ gives $\xi=\gamma\lambda/(\lambda-\gamma)$, which is \eqref{eq:gamma}.

\ppart{b}
Under (iii.b), $q$ is $\Gamma$ varying with some auxiliary function $b_q$, and $b_q(t)=o(t)$.
For each real $x$ and $t\to\infty$ we have $t+xb_q(t)=t\{1+o(1)\}$, so $f_X\in\RV_{-1/\lambda-1}$ gives $f_X\{t+xb_q(t)\}/f_X(t)\to1$ by the uniform convergence theorem \citep[Theorem 1.5.2]{Bingham1987}, and therefore
\[
\frac{g\{t+xb_q(t)\}}{g(t)}=\frac{q\{t+xb_q(t)\}}{q(t)}\cdot\frac{f_X\{t+xb_q(t)\}}{f_X(t)}\longrightarrow e^{-x},
\]
that is, $g$ is $\Gamma$ varying with the same auxiliary function $b_q$.
The integration property of $\Gamma$ variation gives that $\int_x^{\infty}g$ is $\Gamma$ varying with the same auxiliary function and $\int_x^{\infty}g\sim b_q(x)g(x)$.
By \eqref{eq:tailrep} the factor $\Prob(Y_i=0)$ cancels in the ratio, so $1-F_{X\mid Y=0}$ is $\Gamma$ varying with auxiliary function $b_q$; the $\Gamma$-variation characterization of the Gumbel domain therefore gives $F_{X\mid Y=0}\in\mathcal D(G_0)$, and $\{1-F_{X\mid Y=0}\}/f_{X\mid Y=0}\sim b_q$.
If in addition $q$ is twice continuously differentiable with $q'<0$ and $a_q'\to0$, then $b_q$ may be taken to be $a_q$, since the von Mises condition makes $q$ be $\Gamma$ varying with auxiliary function $a_q$.

\ppart{c}
Assumption \ref{ass:main}(iii.c) gives $q\in\RV_{0}$ by hypothesis.
Since $f_X\in\RV_{-1/\lambda-1}$, we have $g=qf_X\in\RV_{-1/\lambda-1}$.
Karamata's theorem and \eqref{eq:tailrep} therefore give $1-F_{X\mid Y=0}\in\RV_{-1/\lambda}$, and Gnedenko's theorem gives $F_{X\mid Y=0}\in\mathcal D(G_\lambda)$.

\ppart{d}
Lemma \ref{lem:rapid} applied to $a_X$ gives $a_X(x)/x\to0$, so $-xf_X'(x)/f_X(x)=x/a_X(x)\to\infty$ and Assumption \ref{ass:main}(ii) holds for no $\lambda>0$.
Since $a_X'\to0$, the von Mises condition makes $f_X$ be $\Gamma$ varying with auxiliary function $a_X$.

Under (iii.a) or (iii.c), $q$ is regularly varying, so for each real $x$ the argument $t+xa_X(t)=t\{1+o(1)\}$ gives $q\{t+xa_X(t)\}/q(t)\to1$ by the uniform convergence theorem, and therefore $g=qf_X$ is $\Gamma$ varying with auxiliary function $a_X$, exactly as in part (b).

Under (iii.b), $q$ and $f_X$ are $\Gamma$ varying with auxiliary functions $b_q$ and $a_X$, respectively, both of smaller order than $t$.
Suppose $b_q/a_X\to c\in[0,\infty]$ and set $a=b_qa_X/(b_q+a_X)$, so that $a/b_q\to1/(1+c)$ and $a/a_X\to c/(1+c)$, with the conventions $a\sim b_q$ when $c=0$ and $a\sim a_X$ when $c=\infty$.
Then, by the local uniform convergence theorem for $\Gamma$ variation \citep[Section 1.2]{deHaan07},
\[
\frac{g\{t+xa(t)\}}{g(t)}\longrightarrow e^{-x/(1+c)}e^{-xc/(1+c)}=e^{-x},
\]
so $g$ is $\Gamma$ varying with auxiliary function $a$.
The conclusion also follows by a second route that does not require $b_q/a_X$ to converge.
If $q$ is twice continuously differentiable with $q'<0$ and $a_q'\to0$, then $-g'/g=1/a_q+1/a_X=:1/a_g$ with $a_g=a_qa_X/(a_q+a_X)$ and
\[
a_g'=a_q'\left\{\frac{a_X}{a_q+a_X}\right\}^{2}+a_X'\left\{\frac{a_q}{a_q+a_X}\right\}^{2},
\]
whose two weights lie in $[0,1]$, so $|a_g'|\le|a_q'|+|a_X'|\to0$ and the von Mises condition applies to $g$ directly.
The second route does not imply the first: $a_q(t)=t/\log t$ and $a_X(t)=(t/\log t)\{2+\sin(\log\log t)\}$ satisfy $a_q'\to0$ and $a_X'\to0$, while $a_q/a_X$ oscillates. Conversely, $f_X(x)\propto e^{-x}$ and $q(x)\propto\exp\{-x+\sin(x^2)/x^2\}$ satisfy $a_q/a_X\to1$ but $a_q'\not\to0$, so the first route does not imply the second.

In every case $g$ is $\Gamma$ varying, so the integration property and the characterization of the Gumbel domain, together with \eqref{eq:tailrep}, give that $1-F_{X\mid Y=0}$ is $\Gamma$ varying and $F_{X\mid Y=0}\in\mathcal D(G_0)$.
\end{proofof}

\begin{proofof}{Proposition \ref{prop:multi}}
\ppart{a}
Write $M=m(x)$ and let $C_p=\sup_{x>z_0}\E[|X_i^{a\prime}\beta|^{p}\mid X_i=x]$, which is finite by \eqref{eq:condmom}.
Since $\varepsilon_i\perp(X_i,X_i^{a})$, the event $\{m(X_i)+X_i^{a\prime}\beta\ge\varepsilon_i\}$ gives \eqref{eq:mixture}, that is $q(x)=\int S_\varepsilon(M+u)dG_x(u)$.
Fix $\delta\in(0,1/2)$ and $B>0$ and split the integral over $A_1=\{|u|\le B\}$, $A_2=\{B<|u|\le\delta M\}$, and $A_3=\{|u|>\delta M\}$.
Markov's inequality gives $G_x(|u|>B)\le C_pB^{-p}$ uniformly in $x$, and hence $G_x(A_1)\ge1-C_pB^{-p}$.

On $A_1$, $(M+u)/M\to1$ uniformly over $|u|\le B$, so the uniform convergence theorem for regularly varying functions \citep[Theorem 1.5.2]{Bingham1987} gives $\sup_{|u|\le B}|S_\varepsilon(M+u)/S_\varepsilon(M)-1|\to0$, whence $\int_{A_1}S_\varepsilon(M+u)dG_x(u)=S_\varepsilon(M)\{G_x(A_1)+o(1)\}$.
On $A_2$, we have $S_\varepsilon(M+u)\le S_\varepsilon\{(1-\delta)M\}$ and $S_\varepsilon\{(1-\delta)M\}/S_\varepsilon(M)\to(1-\delta)^{-1/\xi_S}$, so that
$\int_{A_2}S_\varepsilon(M+u)dG_x(u)\le2(1-\delta)^{-1/\xi_S}C_pB^{-p}S_\varepsilon(M)$ for large $x$.
On $A_3$, $S_\varepsilon\le1$ and $G_x(A_3)\le C_p(\delta M)^{-p}$, and since $S_\varepsilon\in\RV_{-1/\xi_S}$ with $p>1/\xi_S$, Potter's bounds \citep[Theorem 1.5.6]{Bingham1987} give $M^{-p}/S_\varepsilon(M)\to0$ as $M\to\infty$, so $\int_{A_3}S_\varepsilon(M+u)dG_x(u)=o\{S_\varepsilon(M)\}$.

Letting $x\to\infty$ and then $B\to\infty$ gives $q(x)/S_\varepsilon(M)\to1$.
Since $S_\varepsilon\in\RV_{-1/\xi_S}$, asymptotic equivalence gives $q\in\RV_{-1/\xi_S}$ when $m(x)=x$, and $q\in\RV_{-\kappa/\xi_S}$ when $m\in\RV_\kappa$ with $\kappa>0$ \citep[Proposition 1.5.7]{Bingham1987}.
These regular-variation conclusions are exactly Assumption \ref{ass:main}(iii.a), with $\xi=\xi_S$ in the first case and $\xi=\xi_S/\kappa$ in the second.

\ppart{b}
Write $M=m(x)$ and $A_m(x)=a_\varepsilon(M)/m'(x)$.
By \eqref{eq:mixture} and \eqref{eq:tailstable}, $q(x)=\int_{\mathcal U}S_\varepsilon\{M+u\}r_x(u)dG(u)$.
Boundedness of $\mathcal U$ and the uniform conditions in \eqref{eq:gumbelshift}-\eqref{eq:tailstable} permit differentiation under the integral.
The definition of $a_\varepsilon$ gives $S_\varepsilon'=-S_\varepsilon/a_\varepsilon$ and $S_\varepsilon''=S_\varepsilon\{1+a_\varepsilon'\}/a_\varepsilon^2$.
Because $a_\varepsilon'(t)\to0$ as $t\to\infty$, Lemma \ref{lem:rapid} gives $a_\varepsilon(t)/t\to0$.
With \eqref{eq:msmooth} this yields $A_m(x)/x=\{a_\varepsilon(M)/M\}\{m(x)/(xm'(x))\}\to0$ and $A_m(x)m''(x)/m'(x)\to0$.
Let $B_x=\int_{\mathcal U}S_\varepsilon\{M+u\}dG(u)$.
Since $\mathcal U$ is bounded and $M\to\infty$, $a_\varepsilon'(t)\to0$ implies $\sup_{u\in\mathcal U}|a_\varepsilon'\{M+u\}|\to0$, and the second condition in \eqref{eq:gumbelshift} gives $a_\varepsilon\{M+u\}=a_\varepsilon(M)\{1+o(1)\}$ uniformly in $u\in\mathcal U$.
Substituting into the derivative formulas and using \eqref{eq:tailstable} and \eqref{eq:msmooth} therefore yields
\[
q(x)=B_x\{1+o(1)\},\quad
q'(x)=-\frac{B_x}{A_m(x)}\{1+o(1)\},\quad
q''(x)=\frac{B_x}{A_m(x)^2}\{1+o(1)\}.
\]
Thus $q'<0$ eventually and $a_q'(x)=q(x)q''(x)/\{q'(x)\}^2-1\to0$, which is the von Mises condition, so $q$ is $\Gamma$ varying with auxiliary function $a_q$ and Assumption \ref{ass:main}(iii.b) holds.
Theorem \ref{thm:ident}(b) then gives $F_{X\mid Y=0}\in\mathcal D(G_0)$ and $\gamma=0$.
\end{proofof}

The following lemma establishes some properties of the likelihood-ratio statistic \eqref{eq:LR}.

\begin{lemma}\label{lem:analytic}
$R_k$ is real analytic and nonconstant on the interior of $\mathcal V$, and the distribution function of $R_k(\mathbf V^{*})$ under $\gamma=0$ is continuous on $\Real$.
\end{lemma}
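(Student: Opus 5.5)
The plan is to establish analyticity first, then nonconstancy, and finally deduce that the null law of $R_k$ has no atoms from the fact that level sets of a nonconstant real analytic function have Lebesgue measure zero.

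\emph{Coordinates.} Parametrize the interior of $\mathcal V$ by the free coordinates $(v_2^*,\ldots,v_{k-1}^*)$ in the open set
\[
\{1>v_2^*>\cdots>v_{k-1}^*>0\}\subset\Real^{k-2},
\]
with $v_1^*=1$ and $v_k^*=0$ fixed. The set is nonempty because $k\ge3$.

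\emph{Denominator.} From \eqref{eq:vstarnull}, the denominator $f_{\mathbf V^*\mid0}$ is a negative power of the linear form $s=\sum_jv_j^*\ge1$. It is therefore real analytic and positive on a complex neighbourhood of the region.

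\emph{Numerator.} For the numerator I will show that
\[
(v^*,\gamma)\mapsto f_{\mathbf V^*\mid\gamma}(v^*)
\]
extends holomorphically in $v^*$ to a small complex neighbourhood $N$ of any interior point, locally uniformly in $\gamma\in(0,\bar\gamma]$. The integrand in \eqref{eq:vstar} is
\[
t^{k-2}\prod_j(1+\gamma v_j^*t)^{-(1+1/\gamma)}.
\]
It is holomorphic in $v^*$ wherever $\Re(1+\gamma v_j^*t)>0$, which holds on $N$ when the imaginary parts are small relative to the real parts; the $j=k$ factor equals one because $v_k^*=0$. Since $v_1^*=1$, the factor $(1+\gamma t)^{-(1+1/\gamma)}$ alone already gives decay of order $t^{-1-1/\gamma}$ times the other factors. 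Combined with the other factors, this yields an integrable dominating function in $t$, uniformly over $N$. The dominating function must be uniform in $\gamma$ near $0$ as well. There I will use the bound $|1+\gamma zt|^{-(1+1/\gamma)}\le\exp\{-c\,\Re(z)t/(1+\gamma t)\}$-type estimates, or more simply note that the integrand converges to $t^{k-2}e^{-st}$ as $\gamma\downarrow0$ and dominate it by $t^{k-2}(1+\gamma t)^{-1/\gamma}\le t^{k-2}(1+\bar\gamma t)^{-1/\bar\gamma}$, which is integrable for $\bar\gamma<1/(k-1)$. For general $\bar\gamma$ I will split $t$ into $t\le T$ and $t>T$.

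Morera's theorem together with Fubini then makes $\int_0^{\bar\gamma}f_{\mathbf V^*\mid\gamma}w(\gamma)\,d\gamma$ holomorphic on $N$, and hence real analytic on the interior. The ratio $R_k$ of two real analytic functions with positive denominator is real analytic. \textbf{The main obstacle is this uniform domination in $(t,\gamma)$, especially near $\gamma=0$ and for large $t$ with complex $v^*$.}

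\emph{Nonconstancy.} I will compare two points.
\begin{itemize}
\item At the vertex direction where the interior coordinates approach $0$, we have $s\to1$.
\item At the direction where they approach $1$, we have $s\to k-1$.
\end{itemize}
In the second case the Pareto-type likelihood $f_{\mathbf V^*\mid\gamma}/f_{\mathbf V^*\mid0}$ behaves differently. A cleaner route is to expand in $\gamma$: the score at $\gamma=0$ is a nonconstant function of $v^*$ (a quadratic form in $v^*$ and $s$, namely $\sum_jv_j^{*2}$ weighted against $s$). Then $R_k=1+\int\gamma\,\text{score}\,w\,d\gamma+\cdots$. Alternatively, I will simply evaluate $R_k$ at two explicit interior points numerically, or exhibit via the limit $v^*\to$ boundary that $R_k$ takes distinct values. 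Continuity on the closure then transfers the distinction to interior points.

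\emph{No atoms.} A nonconstant real analytic function on a connected open set has level sets of Lebesgue measure zero. The interior is convex, hence connected. Under $\gamma=0$, $\mathbf V^*$ has the Lebesgue density \eqref{eq:vstarnull} on the interior and puts no mass on the boundary. Therefore $\Prob_0(R_k=c)=0$ for every $c$, and the distribution function is continuous.
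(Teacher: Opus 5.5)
Your architecture matches the paper's: a holomorphic extension of the integrand in \eqref{eq:vstar} with Morera and Fubini for the numerator, analyticity and positivity of \eqref{eq:vstarnull}, and Lebesgue-null level sets of a nonconstant real analytic function on a connected open set. The genuine gap is nonconstancy, which is the step the no-atoms conclusion rests on. None of the routes you list establishes it.

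The expansion $R_k=1+\int\gamma\cdot\text{score}\cdot w\,d\gamma+\cdots$ is a small-$\gamma$ expansion. But $w$ is uniform on $(0,1]$, so $\gamma$ is not small and the remainder is uncontrolled; a nonconstant score at $\gamma=0$ does not make the weighted ratio nonconstant. Evaluating $R_k$ numerically at two points is not a proof for every $k\ge3$. The boundary route rests on a false premise: $R_k$ is not continuous on the closure of $\mathcal V$. At precisely the first corner you name ($v_2^*,\ldots,v_{k-1}^*\downarrow0$, $s\to1$) it is unbounded.

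That unboundedness is the missing computation. Fix $1>c_2>\cdots>c_{k-1}>0$ and put $v_j^*=\epsilon c_j$. As $\epsilon\downarrow0$, every factor $(1+\gamma v_j^*t)^{-(1+1/\gamma)}$ increases. Monotone convergence (in $t$, then in $\gamma$) therefore gives $f_{\mathbf V^{*}\mid\gamma}(v^*)\uparrow\Lambda_k(\gamma)=(k-1)!\int_0^\infty t^{k-2}(1+\gamma t)^{-(1+1/\gamma)}dt$. The substitution $s=\gamma t$ shows two things. First, $\Lambda_k(\gamma)=+\infty$ for $\gamma\ge1/(k-2)$. Second, $\Lambda_k(\gamma)=(k-1)!\gamma^{-(k-1)}B\{k-1,1/\gamma-(k-2)\}\asymp\{1/(k-2)-\gamma\}^{-1}$ as $\gamma\uparrow1/(k-2)$.

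Because $1/(k-2)\le1=\bar\gamma$, we get $\int_0^1\Lambda_k(\gamma)w(\gamma)d\gamma=+\infty$; for $k=3$ this comes through the nonintegrable pole at $\gamma=1$. Meanwhile $f_{\mathbf V^{*}\mid0}\to(k-1)!(k-2)!$. Hence $R_k\to\infty$ along interior points, although it is finite at each of them, so it is nonconstant. This argument uses $\bar\gamma\ge1/(k-2)$, which the paper's weight satisfies.

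Separately, the domination you flag as the main obstacle is not resolved by your bound $t^{k-2}(1+\bar\gamma t)^{-1/\bar\gamma}$. That bound keeps only the $v_1^*$ factor and is not integrable at $\bar\gamma=1$. Keep all $k-1$ factors with the full exponent instead. On a polydisc where $\mathrm{Re}\,v_j^*\ge c_0/2$ for $j\le k-1$, you have $|1+\gamma v_j^*t|\ge1+\gamma c_0t/2$. This gives the bound $t^{k-2}(1+\gamma c_0t/2)^{-(k-1)(1+1/\gamma)}$. Its $t$-integral is $(\gamma c_0/2)^{-(k-1)}B\{k-1,(k-1)/\gamma\}$, which stays bounded on $(0,1]$, including as $\gamma\downarrow0$. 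The bound is therefore jointly integrable in $(t,\gamma)$, and Fubini and Morera go through.
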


\begin{proof}
Fix $\gamma\in(0,\bar\gamma]$ and a compact subset $K$ of the interior of $\mathcal V$, on which $\min_{j\le k-1}v_j^{*}\ge c_0$ for some $c_0>0$.
For each $t>0$ the integrand in \eqref{eq:vstar} extends to each coordinate of $v^{*}$ in the complex polydisc of radius $c_0/2$ about $K$, on which $\mathrm{Re}(1+\gamma v_j^{*}t)\ge1+\gamma c_0t/2>0$, so the principal branch of $(1+\gamma v_j^{*}t)^{-(1+1/\gamma)}$ is holomorphic there and has modulus at most $(1+\gamma c_0t/2)^{-(1+1/\gamma)}$.
The integrand is therefore bounded in modulus by $t^{k-2}(1+\gamma c_0t/2)^{-(k-1)(1+1/\gamma)}$, which is integrable in $t$ because $(k-1)(1+1/\gamma)-(k-2)>1$.
Morera's theorem applied along any closed contour in that polydisc, with the contour integral and the integral in $t$ interchanged by Fubini's theorem, gives that $f_{\mathbf V^{*}\mid\gamma}$ is holomorphic there and hence real analytic on the interior of $\mathcal V$.

The same argument applied to the integral over $\gamma$ against the bounded weight $w$ gives that $\bar f_w=\int_0^{\bar\gamma}f_{\mathbf V^{*}\mid\gamma}w(\gamma)d\gamma$ is real analytic.
The denominator \eqref{eq:vstarnull} is real analytic and strictly positive on the interior of $\mathcal V$, so $R_k=\bar f_w/f_{\mathbf V^{*}\mid0}$ is real analytic.

It remains to show that $R_k$ is nonconstant.
For $\epsilon\in(0,1)$ let $v^{*}(\epsilon)$ have $v_1^{*}=1$, $v_j^{*}=\epsilon$ for $2\le j\le k-1$ and $v_k^{*}=0$, a point of the interior of $\mathcal V$.
The integrand in \eqref{eq:vstar} increases as $\epsilon$ decreases, so monotone convergence, first in $t$ and then in $\gamma$, gives
\begin{equation}\label{eq:corner}
\lim_{\epsilon\downarrow0}\bar f_w\{v^{*}(\epsilon)\}=\int_0^{\bar\gamma}\Lambda_k(\gamma)w(\gamma)d\gamma,\quad
\Lambda_k(\gamma)=(k-1)!\int_0^{\infty}t^{k-2}(1+\gamma t)^{-(1+1/\gamma)}dt .
\end{equation}
The substitution $s=\gamma t$ gives $\Lambda_k(\gamma)=(k-1)!\gamma^{-(k-1)}B\{k-1,1/\gamma-(k-2)\}$ for $\gamma<1/(k-2)$ and $\Lambda_k(\gamma)=+\infty$ for $\gamma\ge1/(k-2)$.
Moreover $B(k-1,z)=\Gamma(k-1)\Gamma(z)/\Gamma(k-1+z)\sim1/z$ as $z\downarrow0$, so $\Lambda_k(\gamma)\asymp\{1/(k-2)-\gamma\}^{-1}$ as $\gamma\uparrow1/(k-2)$.
Since $1/(k-2)\le1=\bar\gamma$ for every $k\ge3$ and $w$ is uniform on $(0,1]$, the integral in \eqref{eq:corner} is $+\infty$.
By \eqref{eq:vstarnull}, $f_{\mathbf V^{*}\mid0}\{v^{*}(\epsilon)\}\to(k-1)!(k-2)!\in(0,\infty)$, so $R_k\{v^{*}(\epsilon)\}\to\infty$ and $R_k$ is not constant.
A nonconstant real analytic function on a connected open set has level sets of Lebesgue measure zero, so $\Prob_{0}\{R_k(\mathbf V^{*})=c\}=0$ for every $c\in\Real$, which is the stated continuity.
\end{proof}

\begin{proofof}{Theorem \ref{thm:size}}
The hypothesis $F_{X\mid Y=0}\in\mathcal D(G_0)$ is assumed directly, and it holds under Assumption \ref{ass:main} with (iii.b) by Theorem \ref{thm:ident}(b) and under the hypotheses of Theorem \ref{thm:ident}(d) by that part.
Because $\{(Y_i,X_i)\}_{i=1}^n$ are i.i.d., conditional on $(Y_1,\ldots,Y_n)$ the variables $\{X_i:Y_i=0\}$ are independent with common distribution $F_{X\mid Y=0}$, and their number is $n_0$.
Let $\mathcal A=\{v^{*}\in\mathcal V:R_k(v^{*})>\mathrm{cv}(\alpha,k)\}$, set $h_\ell=\Prob(\mathbf X^{*(0)}\in\mathcal A\mid n_0=\ell)$ for $\ell\ge k$ and $h_\ell=0$ for $\ell<k$, and note that the conditional probability of rejection given $(Y_1,\ldots,Y_n)$ equals $h_{n_0}$, a function of $n_0$ alone.

The extreme value theorem for the top $k$ order statistics of $\ell$ i.i.d.\ draws from a law in $\mathcal D(G_0)$, together with the continuous mapping theorem applied to \eqref{eq:Xstar}, gives $\mathbf X^{*(0)}\dto\mathbf V^{*}$ under $\gamma=0$ as $\ell\to\infty$.
The boundary of $\mathcal A$ within $\mathcal V$ is contained in $\{v^{*}:R_k(v^{*})=\mathrm{cv}(\alpha,k)\}\cup\partial\mathcal V$, and $\mathbf V^{*}$ has a density on the interior of $\mathcal V$ by \eqref{eq:vstar}, so $\Prob_0(\mathbf V^{*}\in\partial\mathcal V)=0$, while $\Prob_0\{R_k(\mathbf V^{*})=\mathrm{cv}(\alpha,k)\}=0$ by Lemma \ref{lem:analytic}.
The portmanteau theorem therefore gives $h_\ell\to\Prob_0(\mathbf V^{*}\in\mathcal A)=\alpha$ as $\ell\to\infty$, the last equality being the definition of $\mathrm{cv}(\alpha,k)$ together with the same lemma.

Since $\Prob(Y_i=0)\in(0,1)$, the strong law gives $n_0/n\to\Prob(Y_i=0)$ almost surely, hence $n_0\to\infty$ almost surely and $h_{n_0}\to\alpha$ almost surely.
As $0\le h_{n_0}\le1$, bounded convergence gives $\Prob(\varphi_k=1)=\E[h_{n_0}]\to\alpha$.
\end{proofof}

\begin{proofof}{Proposition \ref{prop:panel}}
\ppart{a}
Fix $t$.
The pairs $\{(Y_{it},X_{it})\}_{i=1}^{n}$ are i.i.d.\ across $i$ because the individual histories are, and $q_t$ is a functional of their joint distribution.
Assumption \ref{ass:main}(i) holds for this pair by hypothesis, so Theorem \ref{thm:ident} applies to the population objects $(F_{X_t},q_t)$ under the corresponding part of Assumption \ref{ass:main}(iii).
Under part (iii.b), Theorem \ref{thm:ident}(b) gives $F_{X_t\mid Y_t=0}\in\mathcal D(G_0)$, so Theorem \ref{thm:size} applies to the period-$t$ subsample.
The conditional distribution of $\alpha_i$ or of $Y_{i,t-1}$ given $X_{it}=x$ enters the argument only through $q_t(x)$, over which it is integrated, so no property of that conditional distribution is used.

\ppart{b}
Condition on $(Y_1,\ldots,Y_n)$, the outcomes of the period in question.
The variables $\{X_i:Y_i=0\}$ and $\{X_i:Y_i=1\}$ are then independent collections indexed by disjoint sets, so the right tail statistic and the left tail statistic are conditionally independent.
Each converges in distribution to a limit that does not depend on the conditioning variables, by the argument in the proof of Theorem \ref{thm:size} applied to the two subsamples, so the joint limit is the product measure and $p_L$ and $p_R$ are asymptotically independent and uniform on $(0,1)$ under the intersection null.
Hence $\Prob[\min\{p_L,p_R\}\ge1-(1-\alpha)^{1/2}]\to\{(1-\alpha)^{1/2}\}^{2}=1-\alpha$.

\ppart{c}
By part (b), each $p_t^{S}$ is asymptotically uniform on $(0,1)$ under the intersection null. Therefore,
\[
\limsup_{n\to\infty}\Prob\left\{\min_{t\le T}p_t^{S}<\frac{\alpha}{T}\right\}
\le\sum_{t=1}^{T}\limsup_{n\to\infty}\Prob\left\{p_t^{S}<\frac{\alpha}{T}\right\}
=T\frac{\alpha}{T}=\alpha
\]
by Boole's inequality. No restriction on dependence across periods is required.
\end{proofof}

\begin{proofof}{Proposition \ref{prop:power}}
\ppart{a}
The group of maps $x\mapsto ax+b$ with $a>0$ acts on $\mathbf V$ by $\mathbf V\mapsto a\mathbf V+b\mathbf 1$, the statistic $\mathbf V^{*}$ in \eqref{eq:Xstar} is invariant under this action, and two points of $\{v:v_k<v_1\}$ with the same value of $\mathbf V^{*}$ lie in the same orbit, so $\mathbf V^{*}$ is a maximal invariant and every invariant test is a measurable function of it \citep[Chapter 6]{lehmann2005testing}.
Restricted to such tests the constraint is $\E_0[\varphi]\le\alpha$ and the objective is
$\int_0^{\bar\gamma}\E_\gamma[\varphi]w(\gamma)d\gamma=\E_{\bar f_w}[\varphi]$, where $\bar f_w=\int_0^{\bar\gamma}f_{\mathbf V^{*}\mid\gamma}w(\gamma)d\gamma$ is a probability density on $\mathcal V$ by Tonelli's theorem.
The Neyman-Pearson lemma applied to the simple null $f_{\mathbf V^{*}\mid0}$ against the simple alternative $\bar f_w$ shows that the test rejecting when $\bar f_w/f_{\mathbf V^{*}\mid0}$, which is $R_k$, exceeds its $1-\alpha$ null quantile is most powerful at level $\alpha$, the quantile being attained exactly by Lemma \ref{lem:analytic}.
The same lemma applied against $f_{\mathbf V^{*}\mid\gamma_1}$ for each fixed $\gamma_1>0$ gives the envelope.

\ppart{b}
For every $\gamma\ge0$ the integrand in \eqref{eq:vstar} is strictly positive for every $t>0$ and every $v^{*}$ in the interior of $\mathcal V$, so $f_{\mathbf V^{*}\mid\gamma}>0$ there and $\Prob_\gamma$ and $\Prob_0$ are mutually absolutely continuous on $\mathcal V$.
The acceptance region $\mathcal V\setminus\mathcal A$ has $\Prob_0$ measure $1-\alpha>0$, hence positive Lebesgue measure, hence positive $\Prob_\gamma$ measure, so $\pi_k(\gamma)=1-\Prob_\gamma(\mathcal V\setminus\mathcal A)<1$.
\end{proofof}

\begin{proofof}{Theorem \ref{thm:uniform}}
Throughout, $\Gamma_j=\sum_{i\le j}E_i$ with $E_i$ i.i.d.\ standard exponential.

\ppart{a}
Fix $\gamma_{0}\in(0,\lambda)$ and $M>1$, and let $1-F_X(x)=x^{-1/\lambda}$ on $[1,\infty)$, so that $f_X(x)=\lambda^{-1}x^{-1/\lambda-1}$.
Define $H_M$ by joining a Pareto law of index $1/\gamma_{0}$ to an exponential tail of scale $\gamma_{0}M$,
\[
1-H_M(x)=\begin{cases}
x^{-1/\gamma_{0}}, & 1\le x\le M,\\
M^{-1/\gamma_{0}}\exp\{-(x-M)/(\gamma_{0}M)\}, & x>M.
\end{cases}
\]
Differentiating gives a density $h_M$ that is continuous at $M$.
Set $q_M=p h_M/f_X$, so that $F_{X\mid Y=0}=H_M$ and $\Prob(Y_i=0)=p$.
On $[1,M]$, $h_M/f_X=(\lambda/\gamma_{0})x^{1/\lambda-1/\gamma_{0}}\le\lambda/\gamma_{0}$.
For $x=My>M$, the same ratio is at most $(\lambda/\gamma_{0})C_{\lambda,\gamma_{0}}$, where $C_{\lambda,\gamma_{0}}=\sup_{y\ge1}y^{1+1/\lambda}e^{-(y-1)/\gamma_{0}}<\infty$.
Taking $p=\gamma_{0}/(2\lambda C_{\lambda,\gamma_{0}})$ therefore yields $q_M\le1$ for every $M\ge1$.
For $x>M$, $q_M$ is $\Gamma$ varying with auxiliary function $\gamma_{0}M$, so Assumption \ref{ass:main}(iii.b) holds and $F_M\in\mathcal F_{0}$ for every $M$.

Let $P$ denote the Pareto law with $1-P(x)=x^{-1/\gamma_{0}}$ on $[1,\infty)$, and note that $H_M$ and $P$ agree on $[1,M]$.
Let $\zeta_1,\ldots,\zeta_\ell$ be i.i.d.\ $P$, whose tail quantile function is $U_P(t)=(1/\{1-P\})^{\leftarrow}(t)=t^{\gamma_{0}}$.
By the quantile transform, $\zeta_{\ell:\ell-j+1}\stackrel{d}{=}U_P(\Gamma_{\ell+1}/\Gamma_j)=\Gamma_{\ell+1}^{\gamma_{0}}\Gamma_j^{-\gamma_{0}}$ jointly in $j\le k$, so the common factor $\Gamma_{\ell+1}^{\gamma_{0}}$ cancels in \eqref{eq:Xstar} and the self-normalized vector equals $\mathbf V^{*}$ at $\gamma=\gamma_{0}$ exactly, for every $\ell$.
Hence $\Prob_{P}(\varphi_k=1\mid n_0=\ell)=\pi_k(\gamma_{0})$ for every $\ell\ge k$.
Since $H_M$ and $P$ agree on $[1,M]$, their total variation distance is at most $1-P(M)=M^{-1/\gamma_{0}}$, and by subadditivity over $\ell$ independent draws the two conditional samples are within $\ell M^{-1/\gamma_{0}}$ in total variation, so
\begin{equation}\label{eq:tvbound}
\left|\Prob_{F_M}(\varphi_k=1\mid n_0=\ell)-\pi_k(\gamma_{0})\right|\le\ell M^{-1/\gamma_{0}} .
\end{equation}
The order of the two limits matters, and the supremum is taken at each $n$ before the liminf.
Fix $n$.
Under every member of the family $\Prob(Y_i=0)=p$, so the law of $n_0$ is binomial with parameters $n$ and $p$ and does not depend on $M$, and $n_0\le n$.
The right side of \eqref{eq:tvbound} is therefore at most $nM^{-1/\gamma_{0}}$ uniformly in $\ell$, and letting $M\to\infty$ at fixed $n$ gives $\Prob_{F_M}(\varphi_k=1)\to\pi_k(\gamma_{0})\Prob(n_0\ge k)$.
Since $F_M\in\mathcal F_{0}$ for every $M$, it follows that $\sup_{F\in\mathcal F_{0}}\Prob_F(\varphi_k=1)\ge\pi_k(\gamma_{0})\Prob(n_0\ge k)$ at every $n$, and $\Prob(n_0\ge k)\to1$ then gives $\liminf_{n}\sup_{F\in\mathcal F_{0}}\Prob_F(\varphi_k=1)\ge\pi_k(\gamma_{0})$.
Equivalently, the offending sequence may be exhibited directly by taking $M_n=n^{2\gamma_{0}}$, for which $nM_n^{-1/\gamma_{0}}=n^{-1}$ and $\Prob_{F_{M_n}}(\varphi_k=1)=\pi_k(\gamma_{0})+o(1)$.
The reverse order is not available: at fixed $M$ each $F_M$ lies in the null, so $\Prob_{F_M}(\varphi_k=1)\to\alpha$ by Theorem \ref{thm:size}, which is why $M$ must grow with $n$.
Since $\gamma_{0}>0$ was arbitrary and $\lambda>\gamma_{0}$ may be chosen freely in the construction, the first inequality follows.
The second holds because $\int_0^{\bar\gamma}\pi_k(\gamma)w(\gamma)d\gamma>\alpha$ by Proposition \ref{prop:power}(a) and the Neyman-Pearson lemma applied to the nonconstant ratio $\bar f_w/f_{\mathbf V^{*}\mid0}$, so $\pi_k(\gamma)>\alpha$ on a set of positive $w$-measure.

\ppart{b}
By the quantile transform, conditional on $n_0=\ell\ge k$,
$X^{(0)}_{\ell:\ell-j+1}\stackrel{d}{=}U_F(\Gamma_{\ell+1}/\Gamma_j)$ jointly in $j\le k$.
Put $t=\Gamma_{\ell+1}$ and, on the event $\{\Gamma_{\ell+1}\ge t_{0}\}$, apply the bound in \eqref{eq:uniclass} at $x=1/\Gamma_j$. This choice is admissible because $\Gamma_j<\Gamma_{\ell+1}=t$, and hence $1/\Gamma_j>1/t$, so
\begin{equation}\label{eq:unibound}
\max_{j\le k}\left|\widetilde V_j-V_j\right|\ \le\ A_{0}\Gamma_{\ell+1}^{\rho}\max_{j\le k}\left(1+|\log\Gamma_j|\right)\ =:\ \Delta_{\ell},
\end{equation}
where $\widetilde V_j=\{X^{(0)}_{\ell:\ell-j+1}-U_F(t)\}/a_F(t)$ and $V_j=-\log\Gamma_j$.
The right side of \eqref{eq:unibound} depends on $F$ only through the constants $(\rho,A_{0})$, and $\Gamma_{\ell+1}^{\rho}\max_{j\le k}(1+|\log\Gamma_j|)$ has a law free of $F$ with $\Gamma_{\ell+1}^{\rho}\to0$ almost surely, while $\Prob(\Gamma_{\ell+1}<t_{0})\to0$ as $\ell\to\infty$ and is also free of $F$, so
$\sup_{F}\Prob(\Delta_{\ell}>\delta\mid n_0=\ell)\le\eta_{\ell}(\delta)$ with $\eta_{\ell}(\delta)\to0$ for each $\delta>0$.
Since $\Prob(Y_i=0)\ge p_{0}$, $n_0\ge p_{0}n/2$ with probability tending to one uniformly in $F$, so $\sup_F\Prob(\Delta_{n_0}>\delta)\to0$.

The map \eqref{eq:Xstar} is invariant to the location shift $U_F(t)$ and the scale $a_F(t)>0$, so $\mathbf X^{*(0)}=(\widetilde{\mathbf V}-\widetilde V_k)/(\widetilde V_1-\widetilde V_k)$ with $\widetilde{\mathbf V}=(\widetilde V_1,\ldots,\widetilde V_k)$, while $\mathbf V^{*}=(\mathbf V-V_k)/(V_1-V_k)$ with the same $\Gamma_j$.
Fix $\varsigma>0$ and work on $\mathcal E_{\varsigma,\delta}=\{V_1-V_k\ge\varsigma\}\cap\{\Delta_{n_0}\le\delta\}$ with $\delta\le\varsigma/4$.
On $\mathcal E_{\varsigma,\delta}$, $\widetilde V_1-\widetilde V_k\ge V_1-V_k-2\Delta_{n_0}\ge\varsigma/2$, and for each $j\le k$,
\begin{align*}
\left|\frac{\widetilde V_j-\widetilde V_k}{\widetilde V_1-\widetilde V_k}-\frac{V_j-V_k}{V_1-V_k}\right|
\le &\frac{\left|(\widetilde V_j-\widetilde V_k)-(V_j-V_k)\right|}{\widetilde V_1-\widetilde V_k}\\
+ &\frac{V_j-V_k}{V_1-V_k}\cdot\frac{\left|(V_1-V_k)-(\widetilde V_1-\widetilde V_k)\right|}{\widetilde V_1-\widetilde V_k}
\ \le\ \frac{8\Delta_{n_0}}{\varsigma},
\end{align*}
using $|\widetilde V_j-\widetilde V_k-(V_j-V_k)|\le2\Delta_{n_0}$, $0\le(V_j-V_k)/(V_1-V_k)\le1$ and $\widetilde V_1-\widetilde V_k\ge\varsigma/2$.
Hence $\|\mathbf X^{*(0)}-\mathbf V^{*}\|_\infty\le8\Delta_{n_0}/\varsigma\le8\delta/\varsigma$ on $\mathcal E_{\varsigma,\delta}$.
The event is defined through $\mathbf V^{*}$ and $\Delta_{n_0}$ only, so no property of $\mathbf X^{*(0)}$ is presumed.

The function $R_k$ is not bounded on $\mathcal V$, since the proof of Lemma \ref{lem:analytic} exhibits points of the interior at which it diverges, so the comparison is made on a compact subset.
For $\iota>0$ put $\mathcal V_\iota=\{v^{*}\in\mathcal V:\min_{j\le k-1}(v_j^{*}-v_{j+1}^{*})\ge\iota\}$, which is a compact subset of the interior of $\mathcal V$ on which $R_k$ is real analytic by Lemma \ref{lem:analytic} and therefore Lipschitz with a finite constant $L_\iota$.
Since $V_1>\cdots>V_k$ almost surely, $\Prob_{0}(\mathbf V^{*}\notin\mathcal V_{2\iota})\to0$ as $\iota\downarrow0$.
Restrict to $\delta\le\varsigma\iota/16$, so that $8\delta/\varsigma\le\iota/2$.
On $\mathcal E_{\varsigma,\delta}\cap\{\mathbf V^{*}\in\mathcal V_{2\iota}\}$ the consecutive gaps of $\mathbf X^{*(0)}$ exceed $2\iota-2(\iota/2)=\iota$, so both vectors lie in $\mathcal V_\iota$ and
$|R_k(\mathbf X^{*(0)})-R_k(\mathbf V^{*})|\le 8L_\iota\delta/\varsigma$.
Comparing the rejection events therefore gives
\begin{align*}
\left|\Prob_F(\varphi_k=1)-\alpha\right|
\le \Prob_{0}\left\{\left|R_k(\mathbf V^{*})-\mathrm{cv}(\alpha,k)\right|\le 8L_\iota\delta/\varsigma\right\}\\
+\Prob\left(\Delta_{n_0}>\delta\right)+\Prob(V_1-V_k<\varsigma)+\Prob_{0}(\mathbf V^{*}\notin\mathcal V_{2\iota}).
\end{align*}
The bound is uniform in $F\in\mathcal F_{0}(\rho,A_{0},t_{0},p_{0})$.
Taking $\limsup_n$ at fixed $(\delta,\varsigma,\iota)$ removes the second term.
Letting $\delta\to0$ at fixed $(\varsigma,\iota)$ removes the first, because $\Prob_{0}\{|R_k(\mathbf V^{*})-\mathrm{cv}(\alpha,k)|\le\epsilon\}\downarrow\Prob_{0}\{R_k(\mathbf V^{*})=\mathrm{cv}(\alpha,k)\}=0$ as $\epsilon\downarrow0$ by Lemma \ref{lem:analytic}.
Letting $\varsigma\to0$ removes the third, since $V_1>V_k$ almost surely, and letting $\iota\to0$ removes the fourth.
\end{proofof}

\section{Comparison with the Pickands Test}\label{app:pickands}

Section \ref{sec:power} notes that among standard tail index estimators only that of \citet{Pickands1975} is invariant to location and scale, and therefore a function of the maximal invariant $\mathbf V^{*}$, so that a test built on it competes with $\varphi_k$ on the same terms.
Figure \ref{fig:power} reports the asymptotic power of the two tests in the limit experiment.
The Pickands critical value is set by simulation under $\gamma=0$ rather than taken from its asymptotic normal approximation, so that the comparison is one of power and not of size.
The deficiency is large.
At $k=50$ the Pickands test attains $0.254$ at $\gamma=0.45$ and $0.630$ at $\gamma=1$, against $0.810$ and $0.995$ for $\varphi_k$, and the gap widens as $k$ falls.
The naive version of the Pickands test, which uses the asymptotic standard error $\sqrt3/\{2(\log2)^{2}\sqrt{j_k}\}$ with $j_k=\lfloor k/4\rfloor$, has size $0.044$ at $k=25$ and $0.040$ at $k=50$, so in the limit experiment its deficiency is one of power rather than of size.
The reason is that the Pickands statistic uses three order statistics and discards the rest, whereas $R_k$ uses the whole vector $\mathbf V^{*}$.

\begin{figure}[tp]
\centering
\includegraphics[width=\textwidth]{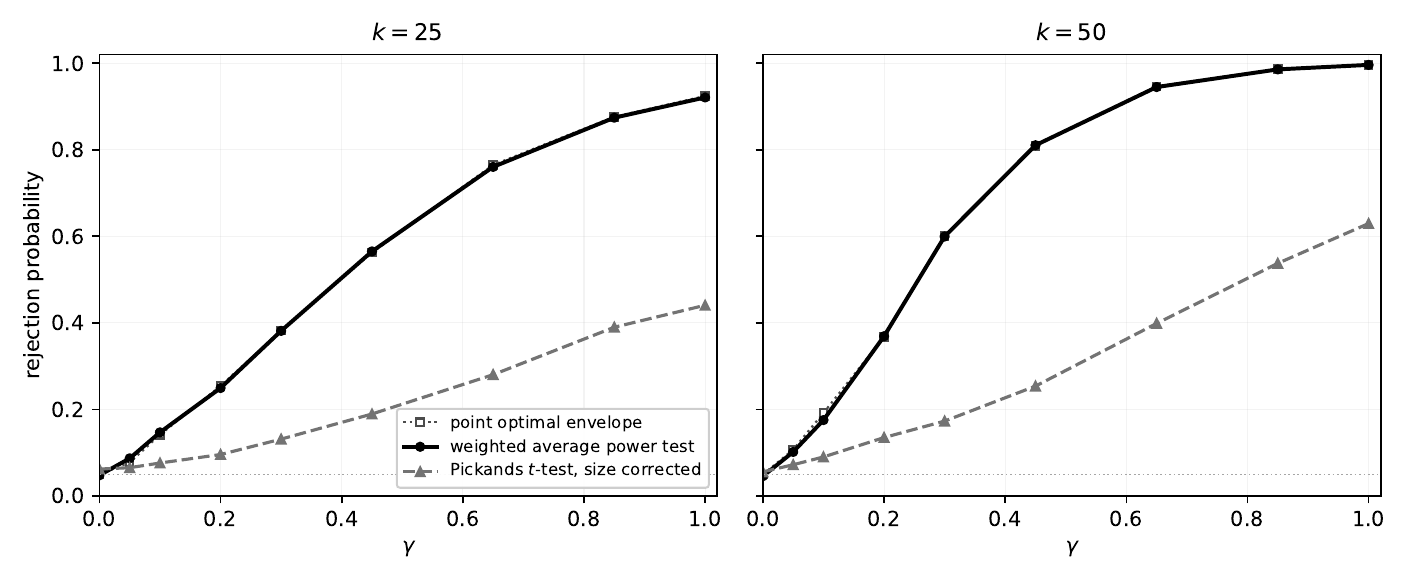}
\caption{Asymptotic power at $\alpha=0.05$. The Pickands statistic is $\{\log2\}^{-1}\log[(V_{j_k}^{*}-V_{2j_k}^{*})/(V_{2j_k}^{*}-V_{4j_k}^{*})]$ with $j_k=\lfloor k/4\rfloor$, and its critical value is simulated under $\gamma=0$ so that the comparison is one of power alone. Based on 2,500 draws.}
\label{fig:power}
\end{figure}

\section{Additional Tables}\label{app:sim}

This appendix collects two sets of tables.
Table \ref{tab:cv} reports the critical values of the test \eqref{eq:LR} for the values of $k$ used in the paper and for ten significance levels, obtained by simulating the null law of $R_k$ under $\gamma=0$ with 200,000 draws for each $k$.
A practitioner needs only this table and the largest $k$ order statistics of the outcome-specific subsample to carry out the test.
Tables \ref{tab:panel_nondynamic} and \ref{tab:panel_dynamic} report the panel simulation results described in Section \ref{sec:simu:panel}, for the static and the dynamic design respectively, with the two tails combined within each period by the \v{S}id\'ak rule and the two periods combined by the Bonferroni rule of Proposition \ref{prop:panel}(c).

\begin{table}[!ht]
\centering
\begin{tabular}{c c c c c c c c c c c}
\hline\hline
$k \backslash \alpha$ & 0.10 & 0.09 & 0.08 & 0.07 & 0.06 & 0.05 & 0.04 & 0.03 & 0.02 & 0.01 \\
\hline
10 & 1.50 & 1.58 & 1.69 & 1.82 & 2.00 & 2.21 & 2.51 & 2.94 & 3.73 & 5.76 \\
25 & 1.18 & 1.27 & 1.39 & 1.52 & 1.70 & 1.95 & 2.29 & 2.83 & 3.81 & 6.59 \\
50 & 0.80 & 0.88 & 0.96 & 1.07 & 1.21 & 1.40 & 1.69 & 2.13 & 2.95 & 5.29 \\
70 & 0.67 & 0.72 & 0.80 & 0.89 & 1.00 & 1.16 & 1.39 & 1.76 & 2.49 & 4.42 \\
100 & 0.54 & 0.59 & 0.65 & 0.72 & 0.82 & 0.96 & 1.15 & 1.48 & 2.10 & 3.73 \\
\hline
\end{tabular}
\caption{Critical values of the test \eqref{eq:LR} with different $k$ and significance level $\alpha$.
Based on 200,000 simulation draws for each $k$.}\label{tab:cv}
\end{table}

\begin{table}[tp]
\centering
\resizebox{\textwidth}{!}{
\begin{tabular}{lcrrrrrrrrrrrr}
\hline\hline
Dist.\ of $\varepsilon$ & $n$ & \multicolumn{3}{c}{$k=10$} & \multicolumn{3}{c}{$k=25$} & \multicolumn{3}{c}{$k=50$} & \multicolumn{3}{c}{$k=70$} \\
 &  & Left & Right & Both & Left & Right & Both & Left & Right & Both & Left & Right & Both \\
\hline
Normal   & 1000 & 0.04 & 0.03 & 0.04 & 0.02 & 0.02 & 0.01 & 0.01 & 0.01 & 0.01 & 0.00 & 0.00 & 0.00 \\
         & 2000 & 0.03 & 0.03 & 0.03 & 0.01 & 0.02 & 0.02 & 0.01 & 0.01 & 0.01 & 0.01 & 0.01 & 0.01 \\
         & 5000 & 0.03 & 0.03 & 0.03 & 0.02 & 0.03 & 0.02 & 0.01 & 0.01 & 0.01 & 0.01 & 0.02 & 0.01 \\
Logistic & 1000 & 0.04 & 0.05 & 0.05 & 0.06 & 0.05 & 0.06 & 0.05 & 0.05 & 0.05 & 0.03 & 0.03 & 0.03 \\
         & 2000 & 0.03 & 0.05 & 0.05 & 0.05 & 0.05 & 0.05 & 0.06 & 0.05 & 0.07 & 0.06 & 0.05 & 0.06 \\
         & 5000 & 0.04 & 0.04 & 0.04 & 0.06 & 0.06 & 0.05 & 0.08 & 0.06 & 0.07 & 0.06 & 0.07 & 0.08 \\ \hline
$t(2)$   & 1000 & 0.13 & 0.13 & 0.16 & 0.16 & 0.17 & 0.22 & 0.19 & 0.20 & 0.27 & 0.16 & 0.16 & 0.22 \\
         & 2000 & 0.14 & 0.15 & 0.18 & 0.23 & 0.23 & 0.29 & 0.30 & 0.29 & 0.39 & 0.30 & 0.30 & 0.41 \\
         & 5000 & 0.15 & 0.18 & 0.21 & 0.30 & 0.31 & 0.40 & 0.40 & 0.41 & 0.54 & 0.48 & 0.45 & 0.61 \\
$t(1)$   & 1000 & 0.22 & 0.22 & 0.29 & 0.40 & 0.40 & 0.55 & 0.50 & 0.49 & 0.68 & 0.52 & 0.52 & 0.69 \\
         & 2000 & 0.24 & 0.24 & 0.32 & 0.46 & 0.46 & 0.60 & 0.64 & 0.64 & 0.81 & 0.71 & 0.70 & 0.86 \\
         & 5000 & 0.25 & 0.25 & 0.34 & 0.47 & 0.47 & 0.64 & 0.73 & 0.73 & 0.88 & 0.83 & 0.83 & 0.95 \\
\hline
\end{tabular}}
\caption{Simulation results for static panel data with $T=2$. Entries are the rejection rates of our proposed test \eqref{eq:LR}. The normal and logistic distributions correspond to the null hypothesis, and the Student-$t$ distributions correspond to the alternative. Significance level is $5\%$. Based on 2,000 simulation draws. The Left and Right columns combine the two period-specific $p$-values within the indicated tail using Bonferroni. The Both columns first combine the left- and right-tail $p$-values within each period using \v{S}id\'ak and then combine the two period-level $p$-values using Bonferroni.}
\label{tab:panel_nondynamic}
\end{table}

\begin{table}[tp]
\centering
\resizebox{\textwidth}{!}{%
\begin{tabular}{lcrrrrrrrrrrrr}
\hline\hline
Dist.\ of $\varepsilon$ & $n$ & \multicolumn{3}{c}{$k=10$} & \multicolumn{3}{c}{$k=25$} & \multicolumn{3}{c}{$k=50$} & \multicolumn{3}{c}{$k=70$} \\
 &  & Left & Right & Both & Left & Right & Both & Left & Right & Both & Left & Right & Both \\
\hline
Normal   & 1000 & 0.03 & 0.03 & 0.03 & 0.02 & 0.02 & 0.01 & 0.00 & 0.01 & 0.01 & 0.00 & 0.01 & 0.01 \\
         & 2000 & 0.03 & 0.03 & 0.03 & 0.02 & 0.02 & 0.02 & 0.01 & 0.01 & 0.01 & 0.00 & 0.01 & 0.01 \\
         & 5000 & 0.03 & 0.03 & 0.02 & 0.03 & 0.02 & 0.02 & 0.02 & 0.01 & 0.01 & 0.01 & 0.01 & 0.01 \\
Logistic & 1000 & 0.05 & 0.04 & 0.05 & 0.05 & 0.04 & 0.04 & 0.02 & 0.05 & 0.04 & 0.01 & 0.05 & 0.03 \\
         & 2000 & 0.05 & 0.04 & 0.05 & 0.06 & 0.05 & 0.06 & 0.05 & 0.04 & 0.05 & 0.04 & 0.06 & 0.05 \\
         & 5000 & 0.05 & 0.04 & 0.05 & 0.06 & 0.04 & 0.06 & 0.07 & 0.05 & 0.07 & 0.08 & 0.05 & 0.07 \\ \hline
$t(2)$   & 1000 & 0.11 & 0.10 & 0.13 & 0.17 & 0.15 & 0.21 & 0.13 & 0.14 & 0.20 & 0.10 & 0.14 & 0.16 \\
         & 2000 & 0.16 & 0.13 & 0.17 & 0.25 & 0.19 & 0.29 & 0.27 & 0.22 & 0.33 & 0.26 & 0.22 & 0.34 \\
         & 5000 & 0.18 & 0.16 & 0.23 & 0.29 & 0.26 & 0.36 & 0.41 & 0.32 & 0.50 & 0.45 & 0.35 & 0.56 \\
$t(1)$   & 1000 & 0.25 & 0.22 & 0.31 & 0.38 & 0.35 & 0.47 & 0.44 & 0.43 & 0.60 & 0.45 & 0.45 & 0.63 \\
         & 2000 & 0.24 & 0.26 & 0.34 & 0.43 & 0.45 & 0.58 & 0.61 & 0.60 & 0.78 & 0.71 & 0.62 & 0.84 \\
         & 5000 & 0.26 & 0.25 & 0.33 & 0.47 & 0.51 & 0.64 & 0.73 & 0.73 & 0.88 & 0.83 & 0.82 & 0.95 \\
\hline
\end{tabular}}
\caption{Simulation results for dynamic panel data with $T=2$. Entries are the rejection rates of our proposed test \eqref{eq:LR}. The normal and logistic distributions correspond to the null hypothesis, and the Student-$t$ distributions correspond to the alternative. Significance level is $5\%$. Based on 2,000 simulation draws. The Left and Right columns combine the two period-specific $p$-values within the indicated tail using Bonferroni. The Both columns first combine the left- and right-tail $p$-values within each period using \v{S}id\'ak and then combine the two period-level $p$-values using Bonferroni.}
\label{tab:panel_dynamic}
\end{table}

\section{The Plug-in Index}\label{app:plugin}

The test in the paper is applied to a covariate that is known, or argued, to dominate the tail of the index.
This appendix removes that requirement.
It shows that the test may instead be applied to a fitted index formed from a preliminary distribution-free estimate of the coefficients, and that the estimation error is asymptotically negligible for the extreme order statistics under a condition relating the tail of the regressor vector to the tail of the index.

Proposition \ref{prop:multi} shows that a single covariate suffices when it dominates the tail of the index.
When no coordinate dominates, the procedure may be applied to a fitted index. Specifically, consider the model $Y_i=I(\widetilde X_i'\delta-\varepsilon_i>0)$, where $\widetilde X_i=(X_i,X_i^a)'$ and $\delta=(1,\beta')'$.
Write the index as $W_i=\widetilde X_i'\delta$, estimate $\delta$ by a distribution-free method such as the maximum score estimator of \citet{Manski1975}, and form $\widehat W_i=\widetilde X_i'\widehat\delta$.

\begin{proposition}\label{prop:plugin}
Suppose $(Y_i,\widetilde X_i)$ is an i.i.d.\ sample with $\Prob(Y_i=0)>0$, let $W_i=\widetilde X_i'\delta$, and suppose the pair $(F_W,q_W)$ with $q_W(w)=\Prob(Y_i=0\mid W_i=w)$ satisfies Assumption \ref{ass:main}, so that $F_{W\mid Y=0}$ is in the domain of attraction of an extreme value distribution with tail index $\gamma$ and normalizing sequences $a_{n_0}>0$ and $b_{n_0}$.
Suppose
\begin{equation}\label{eq:plugincond}
\limsup_{x\to\infty}\frac{1-F_{\|\widetilde X\|\mid Y=0}(x)}{1-F_{W\mid Y=0}(x)}<\infty
\quad\text{and}\quad
\|\widehat\delta-\delta\|\frac{a_{n_0}+|b_{n_0}|}{a_{n_0}}=o_p(1).
\end{equation}
Then for any fixed $k\ge3$ the self-normalized largest $k$ order statistics of $\widehat W_i$ in the subsample $Y_i=0$ converge in distribution to \eqref{eq:vstar}, and the conclusion of Theorem \ref{thm:size} holds with $\widehat W_i$ in place of $X_i$.
The second condition in \eqref{eq:plugincond} holds under $\|\widehat\delta-\delta\|=o_p(1)$ when $\gamma>0$, since $b_{n_0}$ may then be taken to be zero, and under $\|\widehat\delta-\delta\|=o_p(1/\log n)$ when $F_{W\mid Y=0}$ has a normal or logistic tail, for which $|b_{n_0}|/a_{n_0}=O(\log n)$.
\end{proposition}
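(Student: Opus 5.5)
The plan is a perturbation argument. We show that replacing $W_i$ by $\widehat W_i$ moves the top $k$ order statistics in the $Y_i=0$ subsample by $o_p(a_{n_0})$ after centering, and then invoke the oracle limit together with the continuous mapping and portmanteau steps already used in the proof of Theorem \ref{thm:size}. Under Assumption \ref{ass:main} for $(F_W,q_W)$, Theorem \ref{thm:ident}(a) or (b) places $F_{W\mid Y=0}$ in $\mathcal D(G_\gamma)$. Conditioning on $(Y_1,\ldots,Y_n)$ as in that proof, the oracle vector $(\mathbf W^{(0)}-b_{n_0})/a_{n_0}$ of the largest $k$ values of $W_i$ with $Y_i=0$ converges to $\mathbf V$, whose coordinates are almost surely strictly ordered with $V_1>V_k$.

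The key step is a uniform bound on the perturbation over the relevant observations. We have $|\widehat W_i-W_i|\le\|\widehat\delta-\delta\|\,\|\widetilde X_i\|$. The first condition in \eqref{eq:plugincond} says that the tail of $\|\widetilde X\|\mid Y=0$ is at most a constant multiple of that of $W\mid Y=0$. We would choose a level $u_n$ slightly above the oracle threshold, for example the $(1-\epsilon/n_0)$-quantile of $F_{W\mid Y=0}$, so that $u_n=b_{n_0}+O(a_{n_0})$ in the extreme value scaling. The ratio bound then gives $\Prob(\max_{i:Y_i=0}\|\widetilde X_i\|>C u_n)\le C'\epsilon$ for a suitable constant multiple $C$. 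Combining the two facts,
\[
\max_{i:Y_i=0}|\widehat W_i-W_i|\le C\|\widehat\delta-\delta\|\,(|b_{n_0}|+O_p(a_{n_0}))=o_p(a_{n_0})
\]
by the second condition in \eqref{eq:plugincond}.

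The sup-norm distance between sorted vectors is at most the sup-norm distance between the unsorted values, so the top $k$ order statistics of $\widehat W$ differ from those of $W$ by $o_p(a_{n_0})$. Hence $(\widehat{\mathbf W}^{(0)}-b_{n_0})/a_{n_0}\dto\mathbf V$ by Slutsky. The self-normalization map in \eqref{eq:Xstar} is continuous wherever $v_1>v_k$, which holds with probability one under the limit, so the continuous mapping theorem gives convergence to $\mathbf V^*$ with density \eqref{eq:vstar}. The size conclusion then repeats the proof of Theorem \ref{thm:size} verbatim, using Lemma \ref{lem:analytic} for the null-probability-zero boundary.

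The main obstacle is the second step. It must control the maximum of $\|\widetilde X_i\|$ over the whole subsample, not just the top $k$, and it must handle the Gumbel case $\gamma=0$, where $a_{n_0}$ is small relative to $b_{n_0}$. The plan is to use the tail-ratio bound via a union bound at level $u_n$, which needs only a $\limsup$ and not regular variation. Dependence of $\widehat\delta$ on the same sample is harmless because the bound is a product of two separately controlled factors.

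For the closing remarks of the proposition, the two regimes are handled separately. When $\gamma>0$, Theorem \ref{thm:ident}(a) gives a regularly varying tail, so $b_{n_0}=0$ and $a_{n_0}=U(n_0)$ are admissible. When the tail is normal or logistic, the standard constants give $b_{n_0}/a_{n_0}\asymp\log n_0$ for the logistic tail and $\asymp\log n_0$ for the normal tail, since there $a_n\sim(2\log n)^{-1/2}$ and $b_n\sim(2\log n)^{1/2}$.
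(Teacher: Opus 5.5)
Your proposal is correct and follows essentially the paper's proof. The common steps are: the bound $|\widehat W_i-W_i|\le\|\widehat\delta-\delta\|\,\|\widetilde X_i\|$; a union bound with the tail-ratio condition at an extreme-value-scaled threshold, giving $\max_{i:Y_i=0}\|\widetilde X_i\|=O_p(a_{n_0}+|b_{n_0}|)$ (you parametrize the threshold as the $(1-\epsilon/n_0)$-quantile, the paper as $a_{n_0}s+b_{n_0}$, and the constant multiple in $Cu_n$ is unnecessary but harmless); the $1$-Lipschitz property of order statistics; and Slutsky, continuous mapping, and the argument of Theorem \ref{thm:size}.
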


\begin{proof}
Write $\Delta_{n_0}=\max_{i_0\le n_0}|\widehat W^{(0)}_{i_0}-W^{(0)}_{i_0}|$.
The Cauchy-Schwarz inequality gives $\Delta_{n_0}\le\|\widehat\delta-\delta\|\max_{i_0\le n_0}\|\widetilde X^{(0)}_{i_0}\|$.
Let $C$ exceed the limit superior in \eqref{eq:plugincond} and let $s>0$.
A union bound gives
\[
\Prob\left(\max_{i_0\le n_0}\|\widetilde X^{(0)}_{i_0}\|>a_{n_0}s+b_{n_0}\ \middle|\ n_0\right)\le n_0C\left\{1-F_{W\mid Y=0}(a_{n_0}s+b_{n_0})\right\},
\]
which converges to $C(1+\gamma s)^{-1/\gamma}$, read as $Ce^{-s}$ when $\gamma=0$, by the definition of $a_{n_0}$ and $b_{n_0}$, and tends to zero as $s\to\infty$.
Hence $\max_{i_0\le n_0}\|\widetilde X^{(0)}_{i_0}\|=O_p(a_{n_0}+|b_{n_0}|)$ and, by the second condition in \eqref{eq:plugincond}, $\Delta_{n_0}/a_{n_0}=o_p(1)$.

Order statistics are $1$-Lipschitz in the supremum norm, so the top $k$ order statistics of $\widehat W_i$ and of $W_i$ differ by at most $\Delta_{n_0}$, hence by $\Delta_{n_0}/a_{n_0}=o_p(1)$ after division by $a_{n_0}$.
Slutsky's theorem with the continuous mapping theorem applied to \eqref{eq:Xstar} transfers the limit.
The argument of the proof of Theorem \ref{thm:size} then applies unchanged.
\end{proof}
\end{document}